\newtheorem{lemma}{Lemma}
\newtheorem{theorem}{Theorem}
\newtheorem{corollary}{Corollary}
\newtheorem{definition}{Definition}
\newtheorem{observation}{Observation}
\newtheorem{claim}{Claim}
\newcommand{\tuple}[1]{\ensuremath{\left(#1\right)}}
\newcommand{\ceil}[1]{\ensuremath{\left\lceil{#1}\right\rceil}\xspace}
\newcommand{\floor}[1]{\ensuremath{\left\lfloor{#1}\right\rfloor}\xspace}
    \newcommand{\cT}{\ensuremath{{\mathcal T}}\xspace}
\newcommand{\permutationGraph}{\ensuremath{\mathsf{PermDAG}}\xspace}
\newcommand{\GreedyAssign}{\ensuremath{\mathtt{GreedyAssign}}\xspace}
\newcommand{\sequence}{\ensuremath{\sigma}\xspace}
\newcommand{\optsequence}{\ensuremath{\sigma^*}\xspace}
\newcommand{\tp}{\tuple}
\def\*#1{\mathbf{#1}}
\def\+#1{\mathcal{#1}}
\DeclareMathOperator*{\maxx}{max}
\renewcommand{\vec}[1]{\boldsymbol{\mathrm{#1}}}
\newcommand{\umaxBT}{\textsc{Undir\-Max\-Binary\-Tree}\xspace}
\newcommand{\rumaxBT}{\textsc{rooted-\-Undir\-Max\-Binary\-Tree}\xspace}
\newcommand{\mbt}{\textsf{MBT}}
\tikzstyle{vertex}=[circle, draw, inner sep=1pt, minimum size=18pt]
\tikzset{>={Latex[width=2mm,length=2mm]}}
\def\final{0}  
\def\iflong{\iffalse}
\newcommand{\knote}[1]{{\color{red}[{\tiny Karthik: \bf #1}]\marginpar{\color{red}*}}}
\newcommand{\snote}[1]{{\color{blue}[{\tiny Shubhang: \bf #1}]\marginpar{\color{blue}*}}}
\newcommand{\mnote}[1]{{\color{purple}[{Minshen: \bf #1}]\marginpar{\color{red}*}}}
\newcommand{\todonote}[1]{{\color{red}[{\tiny TODO: \bf #1}]\marginpar{\color{red}*}}}
\newcommand{\enote}[1]{{\color{green}[{ Elena: \bf #1}]\marginpar{\color{green}*}}}
\newcommand{\knote}[1]{}
\newcommand{\snote}[1]{}
\newcommand{\mnote}[1]{}
\newcommand{\enote}[1]{}
\newcommand{\todonote}[1]{}
\title{Fixed-Parameter Algorithms for Longest Heapable Subsequence and Maximum Binary Tree\thanks{Karthik was supported in part by NSF CCF-1814613 and NSF CCF-1907937. Elena, Young-San, and Minshen were supported in part by NSF CCF-1910659 and NSF CCF-1910411. }
}
\author{Karthekeyan Chandrasekaran\thanks{University of Illinois, Urbana-Champaign, Email: \{karthe, smkulka2\}@illinois.edu}
\and Elena Grigorescu\thanks{Purdue University, Email: \{elena-g, lin532, zhu628\}@purdue.edu}
\and Gabriel Istrate\thanks{West University of Timi\c{s}oara, Romania, and the e-Austria Research Institute. Email: gabrielistrate@acm.org}\\
\and Shubhang Kulkarni\footnotemark[1]
\and Young-San Lin\footnotemark[2]
\and Minshen Zhu\footnotemark[2]
}
\date{\today}
\begin{document}
\maketitle
\begin{abstract}
  A heapable sequence is a sequence of numbers that can be arranged in  a \emph{min-heap data structure}.
  Finding a longest heapable subsequence of a given sequence was proposed by Byers, Heeringa, Mitzenmacher, and Zervas (ANALCO 2011) as a generalization of the well-studied longest increasing subsequence problem and its complexity still remains open. An equivalent formulation of the longest heapable subsequence problem is that of finding a maximum-sized binary tree in a given permutation directed acyclic graph (permutation DAG). In this work, we study parameterized algorithms for both longest heapable subsequence as well as maximum-sized binary tree. We show the following results:
  \begin{enumerate}
      \item The longest heapable subsequence problem can be solved in $k^{O(\log{k})}n$ time, where $k$ is the number of distinct values in the input sequence. 
      \item We introduce the \emph{alphabet size} as a new parameter in the study of computational problems in permutation DAGs. Our result on longest heapable subsequence implies that the maximum-sized binary tree problem in a given permutation DAG is fixed-parameter tractable when parameterized by the alphabet size. 
      \item We show that the alphabet size with respect to a fixed topological ordering can be computed in polynomial time, admits a min-max relation, and has a polyhedral description.
      \item We design a fixed-parameter algorithm with run-time $w^{O(w)}n$ for the maximum-sized binary tree problem in undirected graphs when parameterized by treewidth $w$. 
  \end{enumerate}
  Our results make progress towards understanding the complexity of the longest heapable subsequence and maximum-sized binary tree in permutation DAGs from the perspective of parameterized algorithms. We believe that the parameter alphabet size that we introduce is likely to be useful in the context of optimization problems defined over permutation DAGs. 
  
\end{abstract}

\section{Introduction}

The longest increasing subsequence is a fundamental computational problem that has led to numerous discoveries in algorithms as well as combinatorics. The motivation behind this work is a generalization of the longest increasing subsequence problem, known as the \emph{longest heapable subsequence} problem, introduced by Byers, Heeringa, Mitzenmacher, and Zervas \cite{byers2010heapable}. We begin by defining this problem. A rooted tree whose nodes are labeled with values has the \emph{heap property} if the value of every node is at least that of its parent; a sequence of natural numbers is \emph{heapable} if the elements can be sequentially placed one at a time to form a binary tree with the heap property. For example, the sequence $1, 5, 3, 2, 4$ is not heapable while the sequence $1, 3, 3, 2, 4$ is heapable. Throughout this work, we will be interested in sequences whose elements are natural numbers. 
In the longest heapable subsequence problem, the goal is to find a longest heapable subsequence of a given sequence. Although the longest \emph{increasing} subsequence problem is solvable in polynomial-time, the complexity  of the longest \emph{heapable} subsequence problem is still open. 

The problem of verifying if a given sequence is heapable, although non-trivial, is solvable efficiently using a greedy approach \cite{byers2010heapable}. 

In order to address 
the longest heapable subsequence problem, Porfilio \cite{porfilio2015combinatorial} observed a connection to a graph problem on directed acyclic graphs (DAGs). The permutation DAG associated with a sequence $\sigma=(\sigma(1), \sigma(2), \ldots, \sigma(n))$, denoted $\permutationGraph(\sigma)$, is obtained by introducing a vertex $t_i$ for every sequence element $i\in [n]$, and arcs 
$(t_j, t_i)$ for every $i, j\in [n]$ such that $i<j$ and $\sigma(i) \le \sigma(j)$. 
We recall that a directed graph $G$ is a \emph{permutation DAG} if there exists a sequence $\tau$ such that $G$ is isomorphic to $\permutationGraph(\tau)$. 
We need the notion of a binary tree in a given directed graph $G$: a subgraph $T$ of $G$ is an $r$-rooted binary tree if $r$ is the unique vertex in $T$ with no outgoing edges, every vertex in $T$ has a unique directed path to $r$ in $T$, and every vertex in $T$ has in-degree at most $2$ in $T$; the size of $T$ is the number of vertices in $T$. 
Porfilio showed that a longest heapable subsequence of a given sequence $\sigma$ is equivalent to a maximum-sized binary tree in $\permutationGraph(\sigma)$. 
This result raises the question of whether one can efficiently find a maximum-sized binary tree in a given permutation DAG. The complexity of this problem also remains open. 

In an earlier work \cite{chandrasekaran2019maximum}, we showed that maximum-sized binary tree in arbitrary input directed graphs  is fixed-parameter tractable when parameterized by the solution size: we gave a $2^{k}n^{O(1)}$ time algorithm, where $k$ is the size of the largest binary tree and $n$ is the number of vertices in the input graph. This also implies that the longest heapable subsequence problem is fixed-parameter tractable when parameterized by the solution size. In this work, we consider two alternative parameterizations for the maximum-sized binary tree/longest heapable subsequence problem. 

Firstly, we show that the longest heapable subsequence problem is fixed-parameter tractable when parameterized by the number of distinct values in the input sequence. Next, we introduce \emph{alphabet size} as a new parameter in the study of computational problems in permutation DAGs. Our algorithmic result for longest heapable subsequence problem implies that the maximum-sized binary tree problem in a given permutation DAG is fixed-parameter tractable when parameterized by the alphabet size. We currently do not know how to compute the alphabet size of a given permutation DAG. As a stepping stone towards computing alphabet size, we show that alphabet size \emph{with respect to a fixed topological ordering}  can be computed efficiently and it also admits a min-max relation and a polyhedral description. Our results suggest that alphabet size is an interesting parameterization for computational problems defined on permutation DAGs and merits a thorough study. Finally, we design a fixed-parameter algorithm for the maximum-sized binary tree problem in undirected graphs when parameterized by treewidth. We elaborate on our contributions now. 

\subsection{Results}

Our first result shows that the longest heapable subsequence problem is fixed-parameter tractable when parameterized by the number of distinct values in the sequence. 
\begin{restatable}{theorem}{mbtAlphabetK} \label{thm:MBT-for-small-alphabet-size}
There exists an algorithm that takes as input an $n$-length sequence $\tau$ with $k$ distinct values 
and returns a longest heapable subsequence of $\tau$ in time $(k+1)!\cdot k\cdot O(n)$. 
Equivalently, our algorithm returns a maximum-sized binary tree in $\permutationGraph (\tau)$. 

\end{restatable}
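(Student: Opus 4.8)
The plan is to recast the problem as building a rooted binary tree one vertex at a time while scanning $\tau$ from left to right, and then to solve the resulting optimization by a dynamic program whose states are succinct descriptions of the partially built tree.

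\emph{Incremental model.} In $\permutationGraph(\tau)$ every vertex precedes its parent in $\tau$ and has value at least that of its parent, so a binary tree $T$ in $\permutationGraph(\tau)$ arises exactly by scanning $\tau$ left to right and, for each element, either discarding it or attaching it as a new child of some already-chosen element whose value is at most the current element's value and which still has fewer than two children; moreover position order is a valid heap-placement order, so a heapable subsequence of $\tau$ is precisely the largest set of elements obtainable in this way. The only information about the partial tree that affects later choices is the multiset $S$ of values of its currently unfilled child-slots (plus a distinguished ``not started'' configuration): attaching an element of value $v$ deletes one slot of value at most $v$ from $S$ and inserts two slots of value $v$.

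\emph{State reduction (the crux).} The number of open slots strictly grows with every attachment, so $S$ itself is unbounded; the heart of the proof is to show that only a bounded-in-$k$ number of \emph{canonical} configurations matter. Two ingredients are used. First, a domination/monotonicity lemma: replacing $S$ by a configuration with more slots, or with the same slots pushed down to smaller values, can only help on the remaining suffix, so it suffices to track Pareto-optimal configurations. Second, a bootstrapping/exchange lemma: since there are only $k$ value levels, one open slot at a level can regenerate further slots at that level whenever more elements of that level appear, and a ``wasteful'' use of a low slot on a higher-valued element is needed at most once per higher level; iterating this, every Pareto-optimal reachable configuration admits a canonical description as an ordered list over the at most $k$ value levels (a fallback/priority order), of which there are at most $(k+1)!$. (Equivalently, one shows the greedy attachment rule with a suitably chosen tie-breaking order among the $(k+1)!$ candidate orders is optimal.)

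\emph{The algorithm.} Run a dynamic program storing, for each of the at most $(k+1)!$ canonical states, the maximum number of elements chosen so far to reach it; process the $n$ elements of $\tau$ in order and, for each element, update every state in $O(k)$ time by considering discarding it or attaching it at one of the at most $k$ relevant slot-levels and re-canonicalizing. This runs in $(k+1)!\cdot k\cdot O(n)$ time; a standard traceback recovers an optimal subsequence, and since the process literally constructs the tree it also outputs a maximum-sized binary tree in $\permutationGraph(\tau)$, matching Porfilio's equivalence recalled above. The main obstacle is the second step above: proving that, despite the unbounded slot count, the Pareto frontier of reachable configurations has the claimed ordered-list description and hence size at most $(k+1)!$ — this is where the exchange argument does all the work.
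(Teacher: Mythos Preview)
Your incremental model and the identification of the open-slot multiset as the relevant state are exactly right, and you correctly locate the crux in the state-reduction step. You even have the right germ of the key observation: the ``bootstrapping'' remark (one slot at level $j$ regenerates itself whenever another value-$j$ element arrives) together with the ``wasteful use at most once per higher level'' remark amounts to saying that once the number of level-$j$ slots reaches $k-j+1$, one can reserve a slot for each value $v\in\{j,\ldots,k\}$ and thereafter heap every future element of value $\ge j$. But the conclusion you draw from this --- that the canonical states are ``ordered lists over the at most $k$ value levels'' or ``fallback/priority orders'' --- is not a usable formalization: it is unclear what these objects are, why they number $(k+1)!$ rather than $k!$, or why a DP indexed by them is sound. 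The parenthetical claim that a greedy attachment rule with a suitable tie-breaking order is optimal is likewise unsupported and is not what is needed.

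The paper avoids the Pareto framing and gives a concrete canonicalization. It keeps the full slot-count tuple $(x_0,\ldots,x_k)$ but \emph{caps} each coordinate: whenever some $x_j\ge k-j+1$, the reservation/chaining argument above shows that $x_j,x_{j+1},\ldots,x_k$ may all be replaced by $\infty$ without changing the optimum on any suffix. The resulting ``refined shapes'' have $x_j\in\{0,1,\ldots,k-j\}\cup\{\infty\}$ (with $x_0=0$ except for the initial empty heap), and a direct product count gives at most $1+\prod_{j=1}^{k}(k-j+2)=(k+1)!+1$ of them. The DP over refined shapes is then justified by a short recurrence lemma, and per element one enumerates the at most $k$ possible parent levels, yielding the stated $(k+1)!\cdot k\cdot O(n)$ running time. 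Your plan becomes a proof if you replace the ``ordered list/priority order'' description by this explicit capping and carry out the count; as written, the step you yourself flag as the main obstacle is asserted rather than proved.
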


We emphasize that our algorithm also works in the streaming model of computation---i.e., when the input sequence arrives one by one and the algorithm has to find the longest heapable subsequence of the input that has arrived so far (in particular, the algorithm is not allowed to store the entire input sequence). The space complexity of our algorithm is $(k+1)! \cdot k \cdot O(\log n)$ and is logarithmic for constant $k$. 

Theorem \ref{thm:MBT-for-small-alphabet-size} can also be viewed as a fixed-parameter algorithm to find a maximum-sized binary tree in a given permutation DAG when parameterized by \emph{alphabet size}. We define this parameter now. 
We note that for a fixed permutation DAG $G$, there could be several sequences $\tau$ such that $G(\tau)$ is isomorphic to $G$. This motivates our parameterization for permutation DAGs: The \emph{alphabet size} of a permutation DAG $G$, denoted $\alpha(G)$, is defined as follows (see Figure \ref{fig:alphabet-size} for an example): 
\[
\alpha(G):= \min \{k:\exists \ \tau \in [k]^n \text{ with } \permutationGraph(\tau) \text{ being isomorphic to } G\}.
\]
We recall that directed graphs $G=(V,A)$ and $G'=(V,',A')$ are isomorphic if there exists a bijection $\phi:V'\rightarrow V$ such that $(u',v')\in A'$ if and only if $(\phi(u'), \phi(v'))\in A$. 
Theorem \ref{thm:MBT-for-small-alphabet-size} also implies that there exists an algorithm that takes as input, an $n$-vertex permutation DAG $G$ and a sequence $\tau$ with $\alpha(G)=k$ distinct values such that $G(\tau)$ is isomorphic to $G$ and returns a maximum-sized binary tree in $G$ in time $(k+1)!\cdot k\cdot O(n)$.

\begin{figure}[ht]
\begin{center}
\begin{tikzpicture}[scale =0.8]
\coordinate (O) at (0,0);

\node [vertex] at (2, 0)  (1) {$a$};
\node [vertex] at (4, 0)  (2) {$b$};
\node [vertex] at (6, 0)  (3) {$c$};
\node [vertex] at (8, 0)  (4) {$d$};
\node [vertex] at (14, 0) (5) {$c$};
\node [vertex] at (16, 0) (6) {$a$};
\node [vertex] at (18, 0) (7) {$d$};
\node [vertex] at (20, 0) (8) {$b$};

\node [vertex] at (9, 3)  (a) {$a$};
\node [vertex] at (11, 4)  (b) {$b$};
\node [vertex] at (11, 2)  (c) {$c$};
\node [vertex] at (13, 3)  (d) {$d$};

\node at (5,-1) {$\permutationGraph(\tau_1=(2,3,1,2))$};
\node at (17,-1) {$\permutationGraph(\tau_2=(2,1,2,1))$};
\node at (11,1) {$G$};

\draw[->, bend right] (4) to (3);
\draw[->, bend right] (2) to (1);
\draw[->, bend right] (4) to (1);

\draw[->, bend right] (7) to (5);
\draw[->, bend right] (7) to (6);
\draw[->, bend right] (8) to (6);

\draw[->] (b) to (a);
\draw[->] (d) to (a);
\draw[->] (d) to (c);

\end{tikzpicture}
\end{center}

\caption{Let $G$ be the input permutation DAG. The graph $G$ is isomorphic to $\permutationGraph(\tau_1)$ and $\permutationGraph(\tau_2)$. The sequence $\tau_1=(2,1,2,1)$ uses only two distinct values, which turns out to be the minimum, so $\alpha(G)=2$. 
}
\label{fig:alphabet-size}
\end{figure}

Next, we explore algorithmic aspects of our newly defined parameter, namely the alphabet size. A natural question is whether the alphabet size of a given permutation DAG can be computed in polynomial-time. Currently, we do not know the answer to this question. However, there is a natural related problem that seems like a stepping stone towards resolving the complexity of computing the alphabet size of permutation DAGs. We define this related problem now. 

\medskip
We recall that every DAG $G=(V,A)$ admits a topological ordering---a bijection $\gamma:V\rightarrow [n]$ corresponding to a permutation of its $n$ vertices such that every arc $(v,u)\in A$ has $\gamma(u)<\gamma(v)$ (i.e., all edges are oriented in the backward direction with respect to the ordering defined by $\gamma$). 
For a fixed topological ordering $\gamma:V\rightarrow [n]$ of an $n$-vertex permutation DAG $G$, we  define the \emph{$\gamma$-alphabet size of $G$}, denoted $\alpha(G,\gamma)$, as follows (see Figure \ref{fig:seq-alphabet-size} for an example illustrating the definition):
\begin{align*}
\alpha(G, \gamma)
&:=\min \left\{k:\exists \ \tau \in [k]^n \text{ with } \permutationGraph(\tau)=(\{t_1, \ldots, t_n\}, A') \text{ being isomorphic to $G$}\right.\\
& \quad \quad \quad \quad \quad \left.\text{under the mapping $\phi: \{t_1, \ldots, t_n\}\rightarrow V(G)$ given by $\phi(t_i)=\gamma^{-1}(i)\ \forall \ i\in [n]$}\right\}. 
\end{align*}

We note that the optimization problem $\alpha(G, \gamma)$ may be infeasible in which case, we use $\alpha(G, \gamma)=\infty$ as the convention. 
The following relationship between alphabet size and $\gamma$-alphabet size is immediate for a permutation DAG $G$: 
\[
\alpha(G)=\min \{\alpha(G,\gamma): \gamma \text{ is a topological ordering of } G\}.
\]


\begin{figure}[ht]
\begin{center}
\begin{tikzpicture}[scale =0.8]
\coordinate (O) at (0,0); 

\node [vertex] at (10, 3) (1) {$c$};
\node [vertex] at (12, 3) (2) {$a$};
\node [vertex] at (14, 3) (3) {$d$};
\node [vertex] at (16, 3) (4) {$b$};
\node  at (10, 2.2) (g1) {\footnotesize$\gamma^{-1}(1)$};
\node  at (12, 2.2) (g2) {\footnotesize$\gamma^{-1}(2)$};
\node  at (14, 2.2) (g3) {\footnotesize$\gamma^{-1}(3)$};
\node  at (16, 2.2) (g4) {\footnotesize$\gamma^{-1}(4)$};

\node [vertex] at (10, 0) (t1) {$t_1$};
\node [vertex] at (12, 0) (t2) {$t_2$};
\node [vertex] at (14, 0) (t3) {$t_3$};
\node [vertex] at (16, 0) (t4) {$t_4$};

\path[-]
    (t1) edge [dashed, gray] node [right] {\small $\phi(t_1)$} (g1)
    (t2) edge [dashed, gray] node [right] {\small$\phi(t_2)$} (g2)
    (t3) edge [dashed, gray] node [right] {\small$\phi(t_3)$} (g3)
    (t4) edge [dashed, gray] node [right] {\small$\phi(t_4)$} (g4);

\draw[->, bend right] (3) to (1);
\draw[->, bend right] (3) to (2);
\draw[->, bend right] (4) to (2);


\draw[->, bend right] (t3) to (t1);
\draw[->, bend right] (t3) to (t2);
\draw[->, bend right] (t4) to (t2);

\end{tikzpicture}
\end{center}

\caption{
The graph at the top corresponds to $G$ in topological order $\gamma$ where $\gamma(c)=1$, $\gamma(a)=2$, $\gamma(d)=3$, and $\gamma(b)=4$. The graph at the bottom corresponds to $\permutationGraph(\tau=(2,1,2,1))$. Note that the two graphs are isomorphic under the mapping $\phi:\{t_1, t_2, t_3, t_4\}\rightarrow V(G)$ given by $\phi(t_1)=\gamma^{-1}(1)=c$, $\phi(t_2)=\gamma^{-1}(2)=a$, $\phi(t_3)=\gamma^{-1}(3)=d$, and $\phi(t_4)=\gamma^{-1}(4)=b$ (shown by dotted lines). We have that $\alpha(G,\gamma)=2$ and is achieved by the sequence $\tau$.
}
\label{fig:seq-alphabet-size}
\end{figure}


As a stepping stone towards understanding $\alpha(G)$, we show that $\alpha(G,\gamma)$ for a given topological ordering $\gamma$ of $G$ (i.e., the $\gamma$-alphabet size of $G$) can be computed in polynomial time. 

\begin{restatable}{theorem}{AlphabetSizewrtOrdering} \label{thm:alphabet-size-wrt-ordering-polytime-computability}
There exists a polynomial-time algorithm that takes a permutation DAG $G$ and a topological ordering $\gamma$ of $G$ as input and detects if $\alpha(G, \gamma)$ is finite and if so, then returns a sequence that achieves $\alpha(G, \gamma)$. 
\end{restatable}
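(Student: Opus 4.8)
The plan is to reduce the computation of $\alpha(G,\gamma)$ to a system of difference constraints and then solve it via shortest paths. Write $v_i := \gamma^{-1}(i)$ for the $i$-th vertex in the topological order. Since $\permutationGraph(\tau)$ contains the arc $(t_j,t_i)$ for $i<j$ exactly when $\tau(i)\le\tau(j)$, and contains no arc $(t_j,t_i)$ when $i>j$, requiring $\permutationGraph(\tau)$ to be isomorphic to $G$ under the \emph{prescribed} bijection $t_i\mapsto v_i$ is equivalent to asking that the values $x_i := \tau(i)$ satisfy, for every pair $i<j$,
\[
x_i \le x_j \ \text{ if } (v_j,v_i)\in A(G), \qquad x_i \ge x_j+1 \ \text{ if } (v_j,v_i)\notin A(G)
\]
(the conditions for pairs with $i>j$ hold vacuously on both sides, because $\gamma$ is a topological ordering of $G$). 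Applying an integer shift we may assume $\min_i x_i = 1$, so that $\tau \in [k]^n$ with $k=\max_i x_i$; hence $\alpha(G,\gamma) = 1 + r_{\min}$, where $r_{\min}$ is the minimum of $\max_i x_i - \min_i x_i$ over all integer vectors $x$ satisfying the displayed constraints, and $\alpha(G,\gamma)=\infty$ precisely when no such $x$ exists.

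Next I would form the standard constraint digraph $H$ on vertex set $[n]$: an arc $j\to i$ of weight $0$ for each $i<j$ with $(v_j,v_i)\in A(G)$ (encoding $x_i-x_j\le 0$), and an arc $i\to j$ of weight $-1$ for each $i<j$ with $(v_j,v_i)\notin A(G)$ (encoding $x_j-x_i\le -1$). By the classical theory of difference constraints, the system is feasible over $\mathbb{R}$, equivalently over $\mathbb{Z}$ since all weights are integral, if and only if $H$ has no negative-weight cycle. Moreover every arc of $H$ has weight in $\{0,-1\}$ and every directed cycle of $H$ must traverse at least one weight-$(-1)$ arc (these are exactly the arcs whose head index exceeds the tail index), so $H$ has a negative cycle if and only if $H$ has any directed cycle at all. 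Thus $\alpha(G,\gamma)<\infty$ iff $H$ is acyclic, which is testable in linear time; this settles the detection part.

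Assume now $H$ is acyclic and let $\delta_H(\cdot,\cdot)$ denote shortest-path distances. I would prove $r_{\min}=r^\ast := \max\{-\delta_H(b,a) : a,b\in[n],\ b \text{ reaches } a \text{ in } H\}$ (note $\delta_H(a,a)=0$, so $r^\ast\ge 0$). For the lower bound, any feasible $x$ satisfies $x_a-x_b\le\delta_H(b,a)$ by summing the constraints along a shortest $b$-to-$a$ path, so taking the pair $(a^\ast,b^\ast)$ attaining $r^\ast$ gives $x_{b^\ast}-x_{a^\ast}\ge -\delta_H(b^\ast,a^\ast)=r^\ast$, whence the range of $x$ is at least $r^\ast$. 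For the matching upper bound, augment $H$ by an arc $j\to i$ of weight $r^\ast$ for every ordered pair $i\ne j$; by the definition of $r^\ast$, each such arc together with a shortest $i$-to-$j$ path of $H$ has total weight $\ge r^\ast-r^\ast=0$, so the augmented graph is still free of negative cycles. Adding a super-source $s$ with weight-$0$ arcs into all vertices and setting $x_i:=\delta(s,i)$ in this augmented graph yields an integer vector that is feasible for the original constraints and, thanks to the added arcs, also satisfies $x_i-x_j\le r^\ast$ for all $i,j$, i.e.\ has range at most $r^\ast$. Shifting so that $\min_i x_i=1$ then produces the required sequence $\tau\in[r^\ast+1]^n$. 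Building $H$, testing acyclicity, computing all-pairs (or super-source) shortest paths, and performing the shift are all polynomial-time.

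The part I expect to carry the actual content is the identity $r_{\min}=r^\ast$ together with the explicit extraction of an optimal sequence: one must handle the possibility that $H$ is disconnected (pairs lying in distinct components impose no constraint and must not enter $r^\ast$), and one must verify that the augmented-graph construction certifies \emph{achievability} of $r^\ast$ rather than merely an upper bound. Everything else — in particular translating the ``arc present $\Leftrightarrow$ $x_i\le x_j$'' condition into a complete set of difference constraints, which is the whole point of fixing both $\gamma$ and the bijection $\phi$ — is routine.
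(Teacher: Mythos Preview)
Your approach is correct and genuinely different from the paper's primary argument. The paper proves Theorem~\ref{thm:alphabet-size-wrt-ordering-polytime-computability} by designing a combinatorial greedy procedure (\GreedyAssign): it repeatedly removes the \emph{$\gamma$-least fully suffix connected} vertex, assigns it the current alphabet value, and increments the value exactly when the next chosen vertex lies to the left of the previous one; feasibility and optimality are established by separate inductive lemmas. Your route instead recasts the isomorphism constraint under the fixed bijection $t_i\mapsto\gamma^{-1}(i)$ as a system of difference constraints and solves it via shortest paths. Two remarks are worth making. First, your constraint digraph $H$ is, up to negating the weights, exactly the tournament $H(G,\gamma)$ that the paper introduces later; your acyclicity criterion for feasibility coincides with the paper's Lemma~\ref{lem:H(G,gamma)_DAG} together with Lemma~\ref{lem:permutation-dag-iff-umbrella-free-ordering} (umbrella-freeness), and your identity $r_{\min}=r^\ast$ is precisely the min-max relation of Theorem~\ref{thm:alphabet-size-min-max-relation}, which the paper derives \emph{from} the greedy algorithm rather than directly. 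Second, your augmented-graph step for the upper bound is heavier than needed: since $H$ is a DAG with nonpositive arc weights, setting $x_i$ to the shortest-path distance from a super-source in the \emph{unaugmented} $H$ already yields an integral feasible vector with range at most $r^\ast$ (all distances lie in $[-r^\ast,0]$), so the extra $r^\ast$-weight arcs can be dropped. The trade-off between the two proofs is that the paper's greedy argument yields a clean linear-time implementation and additional structural insight (the LFSC notion), whereas your reduction is shorter, leans only on the textbook theory of difference constraints, and delivers the min-max relation and the polyhedral viewpoint of Section~\ref{subsec:polyhedral_description} essentially for free.
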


Our algorithm underlying Theorem \ref{thm:alphabet-size-wrt-ordering-polytime-computability} also reveals a min-max relation for $\gamma$-alphabet size that we describe now. Let $G=(V, A)$ be a permutation DAG with $n$ vertices and let $\gamma:V\rightarrow [n]$ be a topological ordering of $G$ such that $\alpha(G, \gamma)$ is finite. Let $\overrightarrow{E} := \{(u,v): \gamma(u) < \gamma(v) \text{ and } (v, u) \not \in A\}$ and $H(G, \gamma):=(V, A\cup \overrightarrow{E})$. We note that $H(G, \gamma)$ is a tournament.\footnote{A tournament is a directed graph $H=(V,A)$ in which we have exactly one of the two arcs $(v,u)$ and $(u,v)$ for every pair of vertices $u,v\in V$.} Also, let $w:A\cup \overrightarrow{E} \rightarrow\{0,1\}$ be an arc weight function for $H(G, \gamma)$ defined as follows:
$$w(e) := \begin{cases}
    0& \text{ if } e \in A,\\
    1&  \text{ if } e \in \overrightarrow{E}.
\end{cases}$$ 
Then, we have the following min-max relation for the minimization problem corresponding to $\alpha(G,\gamma)$.

\begin{restatable}{theorem}{AlphabetSizeMinMax} \label{thm:alphabet-size-min-max-relation}
Let $G = (V,A)$ be a permutation DAG and $\gamma$ be a topological ordering of $V$ such that $\alpha(G, \gamma)$ is finite. 
Then, 
$$\alpha(G, \gamma) = 1 + \max \left\{\sum_{e\in P}w(e):P \text{ is a path in } H(G,\gamma)\right\}.$$
\end{restatable}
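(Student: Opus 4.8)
The plan is to recognize the minimization defining $\alpha(G,\gamma)$ as a system of difference constraints over the tournament $H(G,\gamma)$ and then apply the classical longest-path characterization of its minimum feasible range.

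First I would translate the definition of $\alpha(G,\gamma)$ into such a system. A sequence $\tau\in[k]^n$ realizes $G$ under the prescribed isomorphism exactly when the assignment $x:V\to\{1,\dots,k\}$ defined by $x_v:=\tau(\gamma(v))$ satisfies, for every ordered pair $u,v$ with $\gamma(u)<\gamma(v)$, the inequality $x_u\le x_v$ when $(v,u)\in A$ and the inequality $x_u>x_v$ when $(v,u)\notin A$. Reading these off the arcs of $H(G,\gamma)$ and using that the $x$-values are integers, this is precisely the requirement that $x_a-x_b\ge w(e)$ for every arc $e=(a,b)$ of $H(G,\gamma)$: the weight-$0$ arcs of $A$ encode the non-strict inequalities and the weight-$1$ arcs of $\overrightarrow{E}$ encode the strict ones. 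So $\alpha(G,\gamma)$ is the least $k$ for which this difference system has a solution in $\{1,\dots,k\}$.

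For the upper bound I would use the canonical ``longest-path'' solution. Set $d(v):=\max\{w(P):P\text{ is a path in }H(G,\gamma)\text{ starting at }v\}$, where $w(P):=\sum_{e\in P}w(e)$ and a single-vertex path has weight $0$. Since $\alpha(G,\gamma)$ is finite, every directed cycle of $H(G,\gamma)$ has weight $0$ (a positive-weight cycle, summed up, would force $x_v>x_v$), so walk-weights out of $a$ are bounded by $d(a)$ and in particular $d(a)\ge w((a,b))+d(b)$ for every arc $(a,b)$. Taking $x_v:=1+d(v)$ then gives $x_a-x_b\ge w((a,b))$ for all arcs, with $x$ valued in $\{1,\dots,1+\max_v d(v)\}=\{1,\dots,1+\max_P w(P)\}$, so the corresponding $\tau$ witnesses $\alpha(G,\gamma)\le 1+\max_P w(P)$. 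For the matching lower bound I would take any feasible $x:V\to\{1,\dots,\alpha(G,\gamma)\}$ and a maximum-weight path $P=v_0\to v_1\to\cdots\to v_\ell$ of $H(G,\gamma)$; summing the constraints $x_{v_{i-1}}-x_{v_i}\ge w((v_{i-1},v_i))$ along $P$ yields $x_{v_0}\ge x_{v_\ell}+w(P)\ge 1+w(P)$, hence $\alpha(G,\gamma)\ge 1+\max_P w(P)$. The two inequalities give the stated identity.

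The main thing to get right is the translation step, and in particular the point at which the finiteness hypothesis is used: $\alpha(G,\gamma)$ being finite is equivalent to every directed cycle of the tournament $H(G,\gamma)$ having weight zero, and this is exactly what lets one replace ``walk'' by ``path'' in the definition of $d(v)$, making the maximum over paths well defined and equal to the supremum over walks. Once the problem is put in difference-constraint form, the rest is the textbook shortest/longest-path argument specialized to $H(G,\gamma)$.
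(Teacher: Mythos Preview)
Your proof is correct, and the lower-bound half coincides with the paper's: both telescope the arc constraints $x_a-x_b\ge w((a,b))$ along an arbitrary path to get $\alpha(G,\gamma)\ge 1+w(P)$.

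The upper-bound half is where you diverge. The paper does \emph{not} build a feasible assignment from longest-path distances; instead it invokes the \GreedyAssign algorithm of Section~\ref{subsec:Alphabet-Size-Computation}. It argues that the processing order $v_1,\dots,v_n$ produced by the algorithm yields a path $P=(v_n,\dots,v_1)$ in $H(G,\gamma)$ whose weight is exactly $\alpha(G,\gamma)-1$ (using Observations~\ref{obs:greedyAssign-removal-outdegree} and~\ref{obs:GreedyAssign-alphabets-assigned-during-adjacent-iterations}). So the paper exhibits a maximizing path, whereas you exhibit a minimizing assignment via $x_v=1+d(v)$. Your argument is self-contained and is the standard difference-constraints/longest-path duality; the paper's argument ties the min-max relation to the correctness of its greedy algorithm, which fits the narrative but makes the proof depend on that earlier section. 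A minor structural difference: the paper separately proves (Lemma~\ref{lem:H(G,gamma)_DAG}) that $H(G,\gamma)$ is actually acyclic, while you only need and prove the weaker fact that every cycle has weight zero, which already suffices to make $d(v)$ well defined and to pass from walks to paths.
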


In addition to the algorithm and the min-max relation, we give a polyhedral result (see Theorem \ref{thm:polyhedral-description} in Section \ref{subsec:polyhedral_description}) that also leads to an LP-based algorithm to compute $\alpha(G, \gamma)$. 
We believe that alphabet size, as a parameter, is likely to be useful in the context of permutation DAGs and consequently, merits a thorough study. We view Theorems \ref{thm:alphabet-size-wrt-ordering-polytime-computability} and \ref{thm:alphabet-size-min-max-relation} as stepping stones towards the problem of efficiently computing the alphabet size of a given permutation DAG and Theorem \ref{thm:MBT-for-small-alphabet-size} to be an application of this parameter. Resolving the complexity of computing the alphabet size is an intriguing open problem. 

\medskip
Next, we address the maximum binary tree problem in undirected graphs with bounded treewidth. Here, we are given an undirected graph $G$ and the goal is to find a subgraph that is a binary tree with maximum number of nodes. An undirected graph is said to be a binary tree if the graph is acyclic and every vertex has degree at most $3$. We observe that the existence of a binary tree can be expressed as a monadic second order logic property and hence, extensions of Courcelle's theorem \cite{downey2012parameterized} can be used to obtain an algorithm for maximum-sized binary tree that runs in time $f(w)n$ for $n$-vertex undirected graphs with treewidth $w$ for some function $f(w)$. However, the run-time dependence $f(w)$ is at least doubly exponential on the treewidth $w$ in this approach. We improve this dependence substantially. 

\begin{theorem} \label{theorem:bounded-treewidth}
Given a tree decomposition of an $n$-vertex undirected graph $G$ with treewidth $w$, there exists an algorithm to find a maximum-sized binary tree in $G$ in time $w^{O(w)}n$. 
\end{theorem}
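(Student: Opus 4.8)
The plan is a bottom-up dynamic program over a nice tree decomposition of $G$. First I would turn the given width-$w$ tree decomposition into a nice one with $O(wn)$ nodes --- of the usual five types: leaf, introduce-vertex, introduce-edge, forget, and join --- in time $O(w^2n)$, and root it arbitrarily. For a node $x$ write $B_x$ for its bag, $T_x$ for the subtree rooted at $x$, and $G_x$ for the graph on the vertices occurring in the bags of $T_x$ together with the edges introduced in $T_x$. The table $D_x$ that the DP maintains is indexed by triples $(S,\mathcal{P},d)$, where $S\subseteq B_x$, $\mathcal{P}$ is a partition of $S$, and $d\colon S\to\{0,1,2,3\}$; the entry $D_x[S,\mathcal{P},d]$ is the maximum number of \emph{edges} of a subgraph $F$ of $G_x$ such that (i) $F$ is a forest with maximum degree at most $3$ and no isolated vertices, (ii) $V(F)\cap B_x=S$ with $\deg_F(v)=d(v)$ for each $v\in S$, (iii) two vertices of $S$ lie in a common block of $\mathcal{P}$ exactly when they lie in the same component of $F$, and (iv) every component of $F$ meets $B_x$; triples for which no such $F$ exists get value $-\infty$. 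Since a binary tree with $e\ge 1$ edges has $e+1$ vertices and its node set meets some bag, the size of an optimal binary tree equals $\max\{1,\ 1+\max_x\max\{D_x[S,\{S\},d]:S\ne\emptyset\}\}$, where the term $1$ accounts for a one-vertex tree (present as long as $V(G)\ne\emptyset$).

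Next I would write out the transitions. A leaf has only the empty forest. At an introduce-vertex node for $v$ the child's entries are copied verbatim (a vertex with no selected incident edge adds nothing and is not in $S$ yet). At an introduce-edge node for $uv$ each child entry is copied, and in addition, whenever the degree budgets permit ($d(u),d(v)\le 2$, a missing value read as $0$) and the current blocks of $u$ and $v$ are distinct, we also produce the entry obtained by selecting $uv$: add $u,v$ to $S$, increment their degrees, merge their two blocks, add $1$ to the value; collisions keep the maximum. The distinct-block test is exactly what preserves acyclicity. At a forget node for $v$, each child entry with $v\notin S$ is copied; each child entry with $v\in S$ whose block of $v$ is not the singleton $\{v\}$ is copied after deleting $v$ from $S$ and from its block (the degree of $v$ is now frozen at a legal value $\le 3$); child entries in which $v$'s block is $\{v\}$ are dropped, because forgetting $v$ would orphan a component. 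At a join node $x$ with children $y,z$ (so $B_x=B_y=B_z$) we combine $(S_y,\mathcal{P}_y,d_y)\in D_y$ with $(S_z,\mathcal{P}_z,d_z)\in D_z$: put $S_x:=S_y\cup S_z$ and $d_x(v):=d_y(v)+d_z(v)$ (missing values read as $0$), reject if any $d_x(v)>3$; let $\mathcal{P}_x$ be the join, in the partition lattice on $S_x$, of $\mathcal{P}_y$ and $\mathcal{P}_z$ (each padded with singletons on the vertices it misses), and reject unless $|\mathcal{P}_x| = |\mathcal{P}_y|+|\mathcal{P}_z|-|S_y\cap S_z|$; otherwise set the value to $D_y[\cdot]+D_z[\cdot]$, taking the maximum on collisions. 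Using $V(G_y)\cap V(G_z)=B_x$ --- so that the two partial forests share exactly the vertex set $S_y\cap S_z$ and no edges --- a one-line count of vertices, edges, and components shows that this cardinality identity holds iff the union of the two partial forests is again a forest, which legitimizes the transition.

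For correctness of the answer I would invoke the standard fact that, for a connected subgraph $T$ of $G$, the nodes of the tree decomposition whose bags meet $V(T)$ form a connected subtree; let $\hat{x}$ be its topmost node. Then $V(T)\subseteq V(G_{\hat x})$ and $E(T)\subseteq E(G_{\hat x})$, and moreover for every node $x\in T_{\hat x}$ the restriction of $T$ to $G_x$ is a forest each of whose components meets $B_x$ (otherwise such a component would be all of $V(T)$, contradicting the choice of $\hat x$); hence a binary tree $T$ with $\ge 1$ edge is recorded faithfully in $D_{\hat x}$ as a single-block entry, so the formula returns at least $|V(T)|$, and it never returns more since every table entry corresponds to an honest subforest of $G$. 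For the running time, a nice tree decomposition has $O(wn)$ nodes; each table has at most $2^{w+1}$ sets $S$, at most the Bell number $B_{w+1}=w^{O(w)}$ partitions, and at most $4^{w+1}$ degree maps, hence $w^{O(w)}$ entries. The leaf, introduce, and forget transitions spend $\mathrm{poly}(w)$ per entry; the join transition is the costliest, but even pairing every entry of $D_y$ with every entry of $D_z$ costs $(w^{O(w)})^2=w^{O(w)}$ with $\mathrm{poly}(w)$ work per pair (partition join, the cardinality check, the degree sums). Summing over the $O(wn)$ nodes gives total time $w^{O(w)}n$.

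I expect the main obstacle to be the joint enforcement of connectivity, acyclicity, and the degree bound inside a $w^{O(w)}$-size state: connectivity forces a partition of the bag into the table, acyclicity is enforced locally (the distinct-block test at edge introductions and the identity $|\mathcal{P}_x|=|\mathcal{P}_y|+|\mathcal{P}_z|-|S_y\cap S_z|$ at joins), and the degree-$3$ cap contributes both the $4^{w}$ factor in the state and the additive degree check at joins. The only other delicate point is the completion step --- extracting one \emph{connected} tree when the optimum may avoid the root bag and the intermediate tables deliberately discard components that float off the current bag --- which the connected-subtree lemma and the ``$\max$ over all nodes of the single-block value'' together resolve. (A rank-based / representative-sets refinement would bring the dependence down to $2^{O(w)}n$, but that is not needed for the stated bound.)
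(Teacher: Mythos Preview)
Your proposal is correct and follows the same dynamic-programming skeleton as the paper: states on a nice tree decomposition indexed by (subset of bag, partition into component-classes, degree map into $\{0,1,2,3\}$), with the $w^{O(w)}$ state count coming from the partitions. The two proofs differ mainly in how they force the final object to be a single tree rather than a forest. The paper first reduces \umaxBT to the rooted problem by adding a universal vertex $s$ and a pendant $s'$, places $s'$ into every bag of an $s'$-special decomposition, and requires $s'\in X$ in every DP state; the answer is then simply read off at the root as $\mbt_r(\{s'\},\{\{s'\}\},D(s')=1)$. You instead maintain the invariant that every component of the partial forest meets the current bag (discarding an entry at a forget node when a component would float off), and extract the answer as the maximum over \emph{all} decomposition nodes of the single-block entries, justified via the connected-subtree property of tree decompositions. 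Your route avoids the auxiliary-vertex reduction and the accompanying $+1$ to the treewidth, at the cost of the extra pass over all nodes and the ``topmost node $\hat x$'' argument. A second, smaller difference is the acyclicity test at join nodes: the paper builds the auxiliary bipartite graph $H(\cP^j,\cP^k)$ and checks that it is a forest, whereas you use the equivalent cardinality identity $|\mathcal P_x|=|\mathcal P_y|+|\mathcal P_z|-|S_y\cap S_z|$, which is just the component count expressed via $|V|-|E|$ for the union of two edge-disjoint forests sharing exactly $S_y\cap S_z$.
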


\subsection{Related Work}

Heapability of integer sequences was introduced in \cite{byers2010heapable} and has been investigated further 
in \cite{istrate2015heapable,porfilio2015combinatorial,istrate2016heapability,basdevant2016hammersley,basdevant2017almost,hammersley-fl,balogh2020computing}. Heapability of integer sequences can be decided by a simple  greedy algorithm \cite{byers2010heapable} (see also \cite{istrate2016heapability} for an alternate approach based on integer programming, and \cite{balogh2020computing} for connections with Dilworth's theorem and an algorithm based on network flows). Besides introducing the longest heapable subsequence problem, \cite{byers2010heapable} also showed that deciding if a  
sequence can be arranged in a \textit{complete} binary heap is NP-complete. 

Heapable sequences of integers can be regarded as ``loosely increasing''. The celebrated \emph{Ulam-Hammesley problem} aims to understand the length of the longest increasing sequence of a random permutation. This has a long history with deep connections to many areas of science (e.g., see \cite{romik2015surprising}). \cite{byers2010heapable} studied the counterpart of this problem for heapability: they showed that the longest heapable subsequence of a random permutatoin of length $n$ is of size $n-o(n)$ with high probability and it can also be found in an online fashion. 

As mentioned earlier, Porfilio \cite{porfilio2015combinatorial} showed that the longest heapable subsequence is equivalent to solving the maximum-sized binary tree problem in permutation DAGs. In an earlier work \cite{chandrasekaran2019maximum}, we showed that the maximum-sized binary tree problem is NP-hard in DAGs and showed further inapproximability results. We also gave a fixed-parameter algorithm for the maximum binary tree problem when parameterized by the solution size. Furthermore, we designed a polynomial-time algorithm to solve the maximum-sized binary tree problem in the special class of bipartite permutation graphs. It is also known that maximum-sized binary tree problem in DAGs induced by sets of intervals can be solved in polynomial time \cite{balogh2020computing}.

\paragraph{Organization.}  
In Section \ref{sec:DPforalphabetsize}, we present the fixed-parameter algorithm for longest heapable subsequence when parameterized by the alphabet size and prove Theorem \ref{thm:MBT-for-small-alphabet-size}. In Section \ref{sec:computingalphabetsize}, we address the problem of computing $\gamma$-alphabet size, present a min-max relation, and a polyhedral description for the same. In Section \ref{sec:boundedtreewidth}, we present a fixed-parameter algorithm for computing maximum-sized binary tree in a given undirected graph when parameterized by treewidth.
\newcommand{\refine}{\textup{refine}}
\newcommand{\LHS}{\textsf{LHS}}

\section{LHS for small alphabet sizes}\label{sec:DPforalphabetsize}

In this section we prove Theorem~\ref{thm:MBT-for-small-alphabet-size}. 

\mbtAlphabetK*

Before explaining our algorithm, we first establish certain useful definitions. 
We will denote a directed binary tree where each node is labeled by some natural number in $[k]$ such that the labels on every leaf to root path is non-increasing as a \emph{heap over alphabet $[k]$}. 

\begin{definition}[Extended Binary Tree]
Given a rooted non-empty binary tree $T$, we define the extended binary tree $Ext(T)$ by introducing new leaf nodes in a way that makes every node in $T$ have exactly 2 children. The nodes in $T$ are also referred to as \emph{internal nodes}, and the new leaf nodes are referred to as \emph{external nodes}.
\end{definition}

\begin{definition}[Shape]
Given a heap $H$ over alphabet $[k]$, we define its \emph{shape} as a tuple $\*x=\tp{x_0, x_1, \ldots, x_{k-1}, x_k}$, where $x_i$ is the number of external nodes whose parents have label $i$ in $Ext(H)$. We also follow the convention that the shape of an empty heap is $\tp{1, 0, \ldots, 0}$.
\end{definition}

\begin{figure}[ht]
\begin{center}
		\begin{tikzpicture}
		\node[vertex] (a1) at (1,1) {$3$}; 
		\node[vertex] (a2) at (2,2) {$1$};
		\node[rectangle, draw=black, fill=white, minimum size=15pt]  (a3) at (3,1) {$c$};
		\node[rectangle, draw=black, fill=white, minimum size=15pt] (a4) at (0,0) {$a$}; 
		\node[rectangle, draw=black, fill=white, minimum size=15pt]  (a5) at (2,0) {$b$};
		\path [->]
		(a1) edge (a2)
		(a3) edge (a2)
		(a4) edge (a1)
		(a5) edge (a1);
		\end{tikzpicture}
	\end{center}
\caption{Given a binary tree $T$ composed of two nodes $1$ and $3$, the extended binary tree $Ext(T)$ has two internal nodes $1$ and $3$, and three new external nodes $a$, $b$, and $c$. Suppose $k=4$. The shape of the heap above is $\*x=(x_0,x_1,x_2,x_3,x_4)=(0,1,0,2,0)$ because the parent of $a$ and $b$ has label 3 and the parent of $c$ has label 1.}
\end{figure}
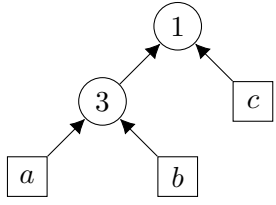

Intuitively, an external node represents the location of a potential future insertion into the heap. Since an insertion is effectively replacing an external node with a new internal node (thus introducing two new external nodes), it is captured by simple manipulations of shapes. This naturally leads us to defining insertions with respect to shapes. Given a shape $\*x=(x_0, \ldots, x_k)$ and labels $a \le b$, the shape obtained by inserting $b$ under $a$, denoted $\*x(a \gets b)$, is defined as 
\begin{align*}
\*x\tp{a \gets b}:=
\begin{cases}
\tp{x_0, \ldots, x_{a-1}, x_a+1, x_{a+1}, \ldots, x_k} & \text{if $x_a > 0$ and $a=b$,} \\
\tp{x_0, \ldots, x_{a-1}, x_a-1, x_{a+1}, \ldots, x_{b-1}, x_b+2, x_{b+1}, \ldots, x_k} & \text{if $x_a > 0$ and $a < b$,} \\
\perp & \text{if $x_a = 0$.}
\end{cases}
\end{align*}
For example, consider the shape $\*x = (0,1,0,2,0)$. The shape $\*x\tp{1 \gets 2}$ is $(0,0,2,2,0)$, and the shape $\*x(2 \gets 3)$ is $\perp$. Given a heap $H$ and a sequence $a=(a_1, \ldots, a_n)$, consider a longest subsequence of $a$ which can be sequentially inserted to $H$ as leaf nodes while maintaining the heap property. We will call such a subsequence a \emph{longest heapable subsequence starting from $H$}. We observe that the longest heapable subsequence of a given sequence starting from $H$ depends only the shape of $H$ and not the precise structure of $H$ (i.e., the optimum does not change for two different heaps $H_1$ and $H_2$ sharing the same shape). Therefore, it is equivalent and also convenient to consider the longest heapable subsequence problem starting from an initial shape instead of an initial heap. This line of thought also 
suggests a natural dynamic programming approach where the subproblems are specified by shapes. 

To analyze the running time, we need to upper bound the number of subproblems, which is the same as the number of distinct shapes. As a starting point, the number of distinct shapes can be upper bounded by $n^{O(k)}$. This is because in any $n$-node heap $H$ there are exactly $n+1$ external nodes in $Ext(H)$ (an elementary property of binary trees). Therefore the number of shapes is bounded by the number of non-negative integral solutions to $x_0+x_1+\ldots+x_{k}=n+1$, which is $n^{O(k)}$. Although this estimate seems like a very crude upper bound, bringing down the estimate into the fixed-parameter regime (i.e., $f(k)n^{O(1)}$) seems very difficult. We employ additional ideas to design a fixed-parameter algorithm.

Consider the longest heapable subsequence problem starting from initial shape $\*x=(x_0, x_1, \ldots, x_k)$. Suppose that the initial shape also satisfies the condition that $x_j\ge k-j+1$ for some $j\in [k]$. Our key observation is that all elements with labels at least $j$ are heapable from $\*x$: we can reserve an external node attached to $j$ for each label $v\in \{j, j+1, \ldots, k\}$, which can then be used to form a chain of elements with the same label $v$. Essentially, once we have reached the shape $\*x$, there are ``infinitely'' many external nodes available for future elements with label at least $j$, and hence, we no longer need to keep track of the precise values of $x_j, x_{j+1},\ldots, x_k $. This motivates the following notion of refined shapes.

\begin{definition}[Refined shapes]
A tuple $(x_0, x_1, \ldots, x_k)$ is a \emph{refined shape} (over alphabet size $k$) if for each $j \in \set{0, 1, \ldots, k}$ we have $x_j \in \set{0, 1, \ldots, k-j} \cup \set{\infty}$, and $x_j = \infty$ implies $x_\ell = \infty$ for all $\ell > j$. We will write $\+X_k$ for the set of all refined shapes over alphabet size $k$.
\end{definition}

The operation $\refine(\cdot)$ introduced below formalizes the intuition discussed earlier.
\begin{definition}
Let $\*x=(x_0, \ldots, x_k)$ be such that $x_j \in \mathbb{N} \cup \set{\infty}$ for all $j$. Let 
\begin{align*}
\refine(\*x) := \begin{cases}
\*x & \textup{if $x_j \le k-j$ for all $j$} \\
(x_0, \ldots, x_{j_0-1}, \infty, \infty, \ldots) & \textup{$j_0$ is the smallest $j$ such that $x_j \ge k-j+1$}
\end{cases}.
\end{align*}
\end{definition}

We remark that $\refine(\*x) \in \+X_k$ for any $\*x$. Next we define insertions with respect to refined shapes. Given a refined shape $\*x=(x_0, \ldots, x_k)$ and labels $a \le b$, the shape obtained by inserting $b$ under $a$, denoted $\*x(a \gets b)$, is defined as 

\begin{align*}
\*x\tp{a \gets b}:=\begin{cases}
\refine\tp{x_0, \ldots, x_{a-1}, x_a+1, x_{a+1}, \ldots, x_{k-1}} & \textup{if $x_a > 0$ and $a=b$,} \\
\refine\tp{x_0, \ldots x_{a-1}, x_a-1,x_{a+1}, \ldots, x_{b-1}, x_b+2, x_{b+1},\ldots, x_{k-1}} & \textup{if $x_a > 0$ and $a < b$,}\\
\perp & \textup{if $x_a = 0$.}
\end{cases}
\end{align*}
where we followed the convention that $\infty > 0$ and $\infty + c = \infty$ for any constant $c$. 

Now we are ready to state the dynamic programming algorithm. In the following, we fix $\tp{a_1, a_2, \ldots, a_n}$ as the input sequence. For $\*x \in \+X_k$ and $i \in [n]$ define $\LHS[i, \*x]$ to be the length of the longest heapable subsequence in the prefix sequence $\tp{a_1, a_2, \ldots, a_i}$, with an additional constraint that the refined shape of the heap constructed from the subsequence should be $\*x$. We write $\LHS[i, \*x] = -\infty$ if there is no feasible solution (i.e. shape $\*x$ is not reachable by any subsequence of $\tp{a_1, \ldots, a_i}$). With this definition, the longest heapable subsequence of the given sequence has length $\max_{\*x\in \+X_k}\LHS[n,\*x]$. Our goal now is to compute $\LHS[n, \*x]$ for each $\*x \in \+X_k$. 

For a label $v \in \set{1, 2, \ldots, k}$ and two refined shapes $\*x$ and $\*x'$, we say that \emph{$\*x$ is reachable from $\*x'$ via an insertion of $v$} if there exists $b \le v$ such that $\*x'(b \gets v) = \*x$. We denote by $\texttt{prev}(\*x, v)$ the set of refined shapes from which $\*x$ is reachable via an insertion of $v$. We show that $\LHS$ satisfies the following recurrence relation. 

\begin{lemma} \label{lem:recurrence-relation}
For every $i\in [n]$ and $\*x \in \+X_k$, we have that 
\[
\quad \LHS[i, \*x] = \max\set{\LHS[i-1, \*x], \max_{\*x' \in \texttt{prev}(\*x, a_i)} \set{\LHS[i-1, \*x']}+1 }.
\]
\end{lemma}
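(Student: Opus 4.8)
The plan is to prove the recurrence by establishing the two inequalities ($\le$ and $\ge$) separately, reasoning about an optimal heapable subsequence of the prefix $(a_1,\ldots,a_i)$ whose resulting refined shape is $\*x$. First I would fix $i$ and $\*x$ and consider whether a longest such subsequence, call it $S$, uses the element $a_i$ or not. If $a_i \notin S$, then $S$ is a heapable subsequence of $(a_1,\ldots,a_{i-1})$ with refined shape $\*x$, so $\LHS[i,\*x] \le \LHS[i-1,\*x]$ in that case. If $a_i \in S$, then removing $a_i$ from $S$ yields a heapable subsequence $S'$ of $(a_1,\ldots,a_{i-1})$ of length $|S|-1$; the key point is that inserting $a_i$ into the heap built from $S'$ as a leaf corresponds exactly to an operation $\*x'(b \gets a_i) = \*x$ for the label $b$ of the parent node used, where $\*x'$ is the refined shape of the heap from $S'$. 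Hence $\*x' \in \texttt{prev}(\*x, a_i)$ and $|S| - 1 \le \LHS[i-1,\*x']$, giving the second term in the max. Taking the maximum over the two cases yields the $\le$ direction.

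For the $\ge$ direction I would argue that both quantities on the right-hand side are achievable. Any heapable subsequence of $(a_1,\ldots,a_{i-1})$ with refined shape $\*x$ is also a heapable subsequence of $(a_1,\ldots,a_i)$ with refined shape $\*x$ (we simply don't pick $a_i$), so $\LHS[i,\*x] \ge \LHS[i-1,\*x]$. For the other term, take any $\*x' \in \texttt{prev}(\*x, a_i)$ witnessed by some $b \le a_i$ with $\*x'(b \gets a_i) = \*x$, and take an optimal heapable subsequence $S'$ of $(a_1,\ldots,a_{i-1})$ of length $\LHS[i-1,\*x']$ whose heap $H'$ has refined shape $\*x'$. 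Since $\*x'(b \gets a_i) \neq \perp$, the shape $\*x'$ has a positive (possibly $\infty$) entry in coordinate $b$, meaning $H'$ (or rather its refinement interpretation) has an available external node hanging off a node labeled $b \le a_i$, so $a_i$ can legally be inserted there as a leaf while preserving the heap property; the resulting heap has refined shape exactly $\*x'(b \gets a_i) = \*x$ and corresponds to a heapable subsequence $S' \cup \{a_i\}$ of length $\LHS[i-1,\*x']+1$. Hence $\LHS[i,\*x]$ is at least this value too.

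The main obstacle, and the step that needs the most care, is the claim that the refined-shape machinery faithfully tracks actual heap insertions — in particular, that whenever the coordinate $b$ of a refined shape $\*x'$ is marked $\infty$, there genuinely is an external node (in fact unboundedly many) attached to a node with label $b$ in every concrete heap realizing $\*x'$, so that the insertion is always legal and the $\refine(\cdot)$ operation never "loses" a reachable shape or "invents" an unreachable one. This requires invoking the key observation made in the text preceding the lemma: once some coordinate $x_j \ge k-j+1$, one can reserve one external node under label $j$ for each of the labels $j, j+1, \ldots, k$, so all future elements with label $\ge j$ are heapable regardless of the exact counts $x_j, \ldots, x_k$. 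I would state this as a small auxiliary invariant — that $\LHS[i,\*x]$ (with $\*x$ a refined shape) equals the maximum over all concrete heaps $H$ with $\refine(\text{shape}(H)) = \*x$ of the longest subsequence of $(a_1,\ldots,a_i)$ insertable as leaves of $H$ — and then the recurrence follows by the case analysis above. Once this correspondence between refined shapes and concrete heaps is pinned down, the rest of the argument is routine bookkeeping on the definition of $\*x(a \gets b)$.
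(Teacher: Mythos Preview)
Your approach is essentially the same as the paper's: a case split on whether $a_i$ belongs to an optimal subsequence for $\LHS[i,\*x]$, with the ``remove $a_i$'' step yielding a predecessor refined shape in $\texttt{prev}(\*x,a_i)$. The paper in fact only writes out the $\le$ direction and declares the $\ge$ direction trivial, whereas you spell out both.

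Where you go beyond the paper is in flagging the faithfulness of the $\refine(\cdot)$ abstraction --- that when a coordinate is $\infty$ there really is an external slot available, and conversely that refining never creates spurious reachability. You are right that this is the only nontrivial point, and the paper's short proof does take it for granted (the sentence ``Removing $a_i$ from $H$ results in a heap $H'$ with a shape $\*x'$ satisfying $\*x'(b\gets a_i)=\*x$'' silently assumes that applying the refined insertion to the refined shape of $H'$ recovers the refined shape of $H$). Your proposed auxiliary invariant is the right instinct; just be careful with its wording --- what you want is that the refined-shape transitions exactly mirror what is achievable by concrete heap insertions, i.e.\ for every concrete heap $H'$ with $\refine(\mathrm{shape}(H'))=\*x'$ and every $b\le a_i$ with $x'_b>0$, inserting $a_i$ under some label-$b$ node of $H'$ yields a heap whose refined shape is $\*x'(b\gets a_i)$, and conversely every such refined transition is realized by some concrete insertion. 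Once that correspondence is pinned down (using the ``reserve one external node per label $\ge j$'' observation you cite), your two-direction argument goes through cleanly.
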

\begin{proof}
We will show that $\LHS[i, \*x] \le \max\set{\LHS[i-1, \*x], \max_{\*x' \in \texttt{prev}(\*x, a_i)} \set{\LHS[i-1, \*x']}+1 }$ as the other direction is trivial. Let us fix an optimal heapable subsequence $s$ of $\tp{a_1, \ldots, a_i}$. If $a_i$ does not belong to $s$, it must be the case that $s$ is also an optimal heapable subsequence of $\tp{a_1, \ldots, a_{i-1}}$. In this case $\LHS[i, \*x] = \LHS[i-1,\*x]$. If $a_i$ belongs to $s$, we further fix an optimal heap $H$ (with refined shape $\*x$) and assume that $a_i$ is inserted under an element with value $b$ in $H$. Removing $a_i$ from $H$ results in a heap $H'$ with a shape $\*x'$ satisfying $\*x'(b \gets a_i) = \*x$. In particular, $\*x' \in \texttt{prev}(\*x, a_i)$. In this case, $\LHS[i,\*x] = \LHS[i-1,\*x'] + 1 \le \max_{\*x' \in \texttt{prev}(\*x, a_i)} \set{\LHS[i-1, \*x']}+1$.
\end{proof}

\begin{proof}[Proof of Theorem~\ref{thm:MBT-for-small-alphabet-size}]
Given Lemma~\ref{lem:recurrence-relation}, it remains to show that the recurrence relation can be implemented in time $(k+1)!\cdot k \cdot O(n)$. We observe that the number of subproblems is bounded by $O(n\abs{\+X_k})$. The set $\texttt{prev}(\*x, a_i)$ can be enumerated in time $O(k)$ by inverting the operation $\*x'(b \gets a_i)$ for each $b \le a_i$. Therefore it suffices to show that $|\+X_k| = O((k+1)!)$.

In order to bound the size of $\+X_k$, we observe that for every $\*x = (x_0, x_1, \ldots, x_k) \in \+X_k$, we have that $x_0 = 0$ unless $\*x=(1,0,\ldots,0)$, and that $x_j \in \set{0, 1, \ldots, k-j} \cup \set{\infty}$ for $j \ge 1$. Therefore $|\+X_k| \le 1+ \prod_{j=1}^{k}(k-j+2) = (k+1)! + 1$. 
\end{proof}

An implementation of this dynamic programming algorithm is given by Algorithm~\ref{alg:LHS-alphabet-k}. This implementation requires space complexity $O((k+1)!n\cdot \log n)$, which can be optimized to $O((k+1)! \cdot \log n)$ using a standard rolling array technique: we observe that in the recurrence relation, $\LHS[i, \*x]$ depends only on $\LHS[i-1, \*x']$ but not on $\LHS[j, \*x']$ for any $j < i-1$. Therefore the values $\LHS[i-2, \*x]$ become obsolete and the space can be recycled to store new values. Essentially, we only need two arrays $\LHS_1[\*x]$ and $\LHS_2[\*x]$ and store new values alternately between them. 

\medskip
\noindent \textbf{Remark.} We note that our dynamic programming algorithm also works in the streaming model, where the elements of the input sequence have to be processed one by one without storing them in memory and the goal is to find the length of a longest heapable subsequence of the input that has arrived so far. For constant alphabet size $k$, the space complexity of our algorithm is $O\tp{\log n}$.

\begin{algorithm}
	\caption{Longest Heapable Subsequence for Alphabet Size $k$} \label{alg:LHS-alphabet-k}
	\textbf{Input:} A sequence $a=\tp{a_1, \ldots, a_n}$ such that $\forall i \in [n]$, $a_i \in \set{1, 2, \ldots, k}$. \\
	\textbf{Output:} The length of longest heapable subsequence in $a$. \\
	
	$\mathbf{\texttt{LHS}}(a_1, a_2, \ldots, a_n):$
	\begin{algorithmic}[1]
	    \State $\+X \gets \set{(1,0,\ldots,0)}$ \Comment{$\+X$ maintains a set of reachable refined shapes}
		\State $\LHS \gets$ integer array of size $n \times (k+1) \times k \times \ldots 2 \times 1$
		\State $\LHS[0, (1,0, \ldots, 0)] \gets 0$
		\For {$i \gets 1$ to $n$} \Comment DP main body
		    \For{$\*x \in \+X$}
		        \State $\LHS[i, \*x] \gets dp[i-1, \*x]$ \Comment{Discard $a_i$}
		    \EndFor
			\For{$\*x \in \+X$}
			    \For{$b \in \set{b' \colon 0 \le b' \le a_i, x_{b'} > 0}$ }
			        \State $\*x' \gets \*x(b \gets a_i)$ \Comment{Insert $a_i$ under $b$ to reach refined shape $\*x'$}
			        \If {$\*x' \notin \+X$} \Comment{First time reaching shape $\*x'$}
			            \State $\+X \gets \+X \cup \set{\*x'}$
			            \State $\LHS[i, \*x'] \gets \LHS[i-1, \*x] + 1$
			        \ElsIf{$\LHS[i, \*x'] < \LHS[i-1, \*x] + 1$} 
			            \State $\LHS[i, \*x'] \gets \LHS[i-1, \*x] + 1$
			        \EndIf
			    \EndFor
			\EndFor
		\EndFor
		\Return{$\max\set{\LHS[n, \*x] \colon \*x \in \+X}$}
	\end{algorithmic}
\end{algorithm}
\newcommand{\LFSC}{\ensuremath{\mathsf{LFSC}}\xspace}
\newcommand{\gammaLFSC}{\ensuremath{\gamma\text{-}\mathsf{LFSC}}\xspace}
\newcommand{\orderProjection}{\ensuremath{\mathsf{Proj}}\xspace}
\section{Alphabet Size of Permutation DAGs}\label{sec:computingalphabetsize}
In this section, we consider the problem of computing the $\gamma$-alphabet size of a permutation DAG where $\gamma$ is a given topological ordering of $G$. 
We give an efficient algorithm 
in Section \ref{subsec:Alphabet-Size-Computation}, a min-max relation in Section \ref{subsec:alphabet-size-wrt-ordering-minmax-relation}, and a polyhedral description in Section \ref{subsec:polyhedral_description}. 

We recall that a directed graph $G$ is a permutation DAG if there exists a sequence $\sigma$ such that $\permutationGraph(\sigma)$ is isomorphic to $G$. We note that permutation DAGs are \emph{transitively closed}, i.e., for a permutation DAG $G=(V,A)$, if $(u, v), (v, w) \in A$, then $(u, w) \in A$. In order to recognize if a given DAG is a permutation DAG, we need the notion of \emph{umbrella-free ordering} defined below (see Figure \ref{fig:umbrella-free} for an example). This notion will also help us recognize if $\alpha(G,\gamma)$ is finite. 

\begin{definition}[Umbrella-free Order]
Let $G = (V, A)$ be an $n$-vertex DAG. An order $\gamma:V \rightarrow [n]$ of $V$ is \emph{umbrella-free} if for all $(v,u) \in A$ and for every vertex $w \in V$ with $\gamma(u) < \gamma(w) < \gamma(v)$, either $(w, u) \in A$ or $(v, w) \in A$ (or both).
\end{definition}

\begin{figure}[ht]
\begin{center}
\begin{tikzpicture}[scale =0.7]
\coordinate (O) at (0,0); 

\node [draw, circle] at (4, 0)(1) {$u$};
\node [draw, circle] at (6, 0) (2) {$w$};
\node [draw, circle] at (8, 0) (3) {$v$};
\draw[->, bend right] (3) to (1);

\node [draw, gray, circle] at (12, 0) (a) {$a$};
\node [draw, circle, thick] at (14, 0) (b) {$b$};
\node [draw, circle, thick] at (16, 0) (c) {$c$};
\node [draw, circle, thick] at (18, 0) (d) {$d$};
\draw[->, bend right, gray] (b) to (a);
\draw[->, bend right, gray] (c) to (a);
\draw[->, bend right, gray] (d) to (a);
\draw[->, bend right, thick] (d) to (b);

\node  at (6, -1) () {\small(a)};

\node  at (19, -1) () {\small(b)};

\node [draw, circle] at (20, 0) (A) {a};
\node [draw, circle] at (22, 0) (B) {b};
\node [draw, circle] at (24, 0) (D) {d};
\node [draw, circle] at (26, 0) (C) {c};

\draw[->, bend right] (B) to (A);
\draw[->, bend right] (C) to (A);
\draw[->, bend right] (D) to (A);
\draw[->, bend right] (D) to (B);
\end{tikzpicture}
\end{center}
\caption{(a) Depicts the scenario when the triple $(u,w,v)$ is an \emph{umbrella}. (b) Shows two topological orderings of the same DAG. The order $(a,b,c,d)$ is not umbrella-free due to the (highlighted) umbrella $(b,c,d)$, while the the order $(a,b,d,c)$ is umbrella-free.}
\label{fig:umbrella-free}
\end{figure}
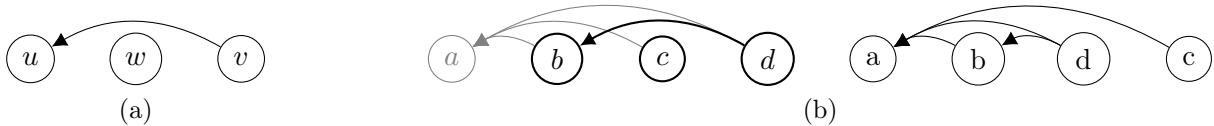

The following lemma characterizes permutation DAGs in terms of the 
existence of an umbrella-free topological ordering. 

\begin{lemma}[\cite{pnueli_lempel_even_1971, GOLUMBIC1980157}]\label{lem:permutation-dag-iff-umbrella-free-ordering}
Let $G=(V,A)$ be a transitively closed DAG. Then $G$ is a permutation DAG if and only if there exists an umbrella-free topological ordering of $G$. Moreover, there exists a polynomial-time algorithm to verify if a given DAG is a permutation DAG and if so, then construct an umbrella-free topological ordering. 
\end{lemma}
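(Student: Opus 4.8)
The plan is to establish the two directions of the equivalence separately, invoking the umbrella-free condition precisely where transitivity of an auxiliary relation is needed, and then handle the algorithmic ``moreover'' by reduction to comparability-graph recognition. For the forward direction, suppose $G$ is a permutation DAG, so there is a sequence $\tau\in\mathbb{N}^n$ and an isomorphism $\phi$ with $\permutationGraph(\tau)=(\{t_1,\dots,t_n\},A')$ and $\phi(t_i)\in V(G)$. I would set $\gamma(\phi(t_i)):=i$ and check two things: (i) $\gamma$ is a topological ordering, which is immediate because every arc of $\permutationGraph(\tau)$ has the form $(t_j,t_i)$ with $i<j$; and (ii) $\gamma$ is umbrella-free. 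For (ii), an arc $(\phi(t_j),\phi(t_i))$ corresponds to indices $i<j$ with $\tau(i)\le\tau(j)$, and a vertex $\phi(t_m)$ strictly between them corresponds to $i<m<j$; if the umbrella-free condition failed at this triple we would have both $\tau(m)<\tau(i)$ and $\tau(j)<\tau(m)$, forcing $\tau(j)<\tau(i)$, a contradiction. This direction is a short computation.

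For the reverse direction, assume $G=(V,A)$ is transitively closed and admits an umbrella-free topological ordering $\gamma$; identify $V$ with $[n]$ so that $\gamma=\mathrm{id}$. Define the strict partial order $P$ by $i<_P j\iff(j,i)\in A$ (it is a partial order since $G$ is acyclic and transitively closed, and it refines $<$ since $\gamma$ is topological), and define the complementary relation $R$ on the $<$-comparable pairs by, for $i<j$, $i\,R\,j\iff(j,i)\notin A$. The first lemma I would prove is that $R$ is transitive: if $i<j<k$ with $i\,R\,j$ and $j\,R\,k$ but $(k,i)\in A$, applying umbrella-freeness to the arc $(k,i)$ with the intermediate vertex $j$ yields $(j,i)\in A$ or $(k,j)\in A$, contradicting $i\,R\,j$ or $j\,R\,k$. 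Next I would introduce the \emph{conjugate order} $<_S$ on $[n]$: for distinct $i,j$, set $i<_S j$ iff ($i<j$ and $i<_P j$) or ($j<i$ and $j\,R\,i$). Every pair is $<_S$-comparable and $<_S$ is antisymmetric by construction; the crux is showing $<_S$ is transitive, via a case analysis on the relative $<$-positions of $i,j,k$ in which each case reduces to transitivity of $P$, transitivity of $R$, or a single application of umbrella-freeness. Once $<_S$ is a linear order, I set $\sigma(i)$ to be the rank of $i$ in $<_S$; then for $i<j$ we get $\sigma(i)<\sigma(j)\iff i<_S j\iff(j,i)\in A$, so the identity map shows $\permutationGraph(\sigma)$ is isomorphic to $G$ (here $\sigma$ is a permutation, so the ``$\le$'' in the definition of $\permutationGraph$ coincides with ``$<$'' on distinct indices, which I would note explicitly).

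For the polynomial-time ``moreover'' part, I would reduce to comparability-graph recognition. The underlying undirected graph $H$ of $G$ is already a comparability graph with the transitive orientation $A$ in hand, and $G$ is (isomorphic to) a permutation DAG iff the complement $\overline H$ is also a comparability graph; given a transitive orientation $F$ of $\overline H$, the relations $A\cup F$ and $A\cup F^{-1}$ are total orders (every pair is comparable in $A$ or in $F$) whose common ``concordant'' pairs are exactly $A$, and reading off the permutation that carries one of these orders to the other produces the sequence $\sigma$, while the order induced by positions in that linear extension is the sought umbrella-free topological ordering. Recognition of comparability graphs and construction of a transitive orientation are polynomial-time (modular decomposition), so the whole procedure runs in polynomial time; alternatively one simply cites \cite{pnueli_lempel_even_1971, GOLUMBIC1980157} for this step.

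The main obstacle I anticipate is the transitivity proof for the conjugate order $<_S$ in the reverse direction: organizing the case analysis so that each of the handful of cases cleanly invokes transitivity of $P$, transitivity of $R$, or the umbrella-free property, and in particular handling the sub-cases where $i,j,k$ do \emph{not} appear in $<$-order — this is exactly where umbrella-freeness, rather than mere transitivity of $P$ or $R$, becomes essential. A secondary point requiring care is bookkeeping the two conventions (arc directions in $\permutationGraph$ versus the orientation of $P$, and ``$\le$'' versus ``$<$''), which is why working with a permutation-valued $\sigma$ of distinct ranks is convenient.
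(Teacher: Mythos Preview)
The paper does not give its own proof of this lemma: it is stated with attribution to \cite{pnueli_lempel_even_1971, GOLUMBIC1980157} and then used without further argument, so there is nothing in the paper to compare your proposal against.

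That said, your proposal is a correct outline of the classical argument from those references. The forward direction and the transitivity of $R$ are immediate; the six-case analysis for transitivity of the conjugate order $<_S$ (ordered by the relative $<$-positions of $i,j,k$) does go through, with two cases using transitive closure of $A$, one using transitivity of $R$, two using umbrella-freeness directly, and one trivial, exactly as you anticipate. Your reduction of the algorithmic part to recognizing $\overline{H}$ as a comparability graph and reading off the two linear orders $A\cup F$ and $A\cup F^{-1}$ is precisely the Pnueli--Lempel--Even construction; the one step you leave implicit---that $A\cup F$ is acyclic---follows because any minimal cycle in a tournament has length three, and a $3$-cycle mixing arcs from $A$ and $F$ is ruled out by transitivity of whichever of $A,F$ contributes two of the arcs together with disjointness of $E(H)$ and $E(\overline{H})$. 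The resulting linear order is then umbrella-free for the same reason (an umbrella would force two consecutive $F$-arcs whose transitive closure contradicts $(v,u)\in A$).
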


Lemma \ref{lem:permutation-dag-iff-umbrella-free-ordering} implies that $\alpha(G,\gamma)$ is finite if and only if $\gamma$ is an umbrella-free topological ordering of $G$.

\subsection{Computing $\gamma$-alphabet size}\label{subsec:Alphabet-Size-Computation}
In this section, we address the problem of computing the $\gamma$-alphabet size of a given permutation DAG, where $\gamma$ is a topological ordering of $G$. 
The following is the main result of this section.

\AlphabetSizewrtOrdering*

We note that umbrella-freeness of a given topological ordering can be verified in polynomial-time, so we may henceforth assume that the input $\gamma$ is in fact an umbrella-free topological ordering of $G$. 
We will give an iterative algorithm to compute $\alpha(G, \gamma)$. We observe that computing $\alpha(G,\gamma)$ involves assigning a value to each vertex of $G$ such that the sequence obtained by 
ordering the values of the vertices in the same order as $\gamma$ gives the same permutation DAG as $G$. 
At each iteration, our algorithm will choose a vertex of $G$ and assign a value to it. The next definition will allow us to formally define the choice of this vertex.

\begin{definition}[Fully Suffix Connected Vertex]
Let $G = (V, A)$ be a permutation DAG and $\gamma$ be a topological ordering of $G$. 
A vertex $u \in V$ is \emph{fully suffix connected} if for all $v \in V$ such that $\gamma (v) > \gamma(u)$, we have $(v,u) \in A$. The \emph{$\gamma$-least fully suffix connected ($\gammaLFSC$) vertex} is the fully suffix connected vertex $u$ with smallest $\gamma(u)$.
\end{definition}

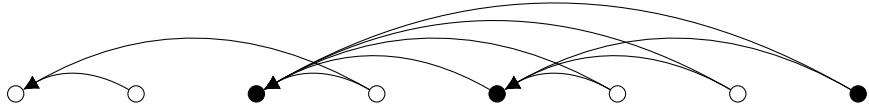
\begin{figure}[ht]
\begin{center}
\begin{tikzpicture}[scale =0.8]
\coordinate (O) at (0,0); 

\node [draw, circle, scale=0.6] at (2, 0)  (1) {};
\node [draw, circle, scale=0.6] at (4, 0)  (2) {};
\node [draw, circle, scale=0.6, fill] at (6, 0)  (3) {};
\node [draw, circle, scale=0.6] at (8, 0)  (4) {};
\node [draw, circle, scale=0.6, fill] at (10, 0) (5) {};
\node [draw, circle, scale=0.6] at (12, 0) (6) {};
\node [draw, circle, scale=0.6] at (14, 0) (7) {};
\node [draw, circle, scale=0.6, fill] at (16, 0) (8) {};

\draw[->, bend right] (4) to (3);
\draw[->, bend right] (5) to (3);
\draw[->, bend right] (6) to (3);
\draw[->, bend right] (7) to (3);
\draw[->, bend right] (8) to (3);
\draw[->, bend right] (6) to (5);
\draw[->, bend right] (7) to (5);
\draw[->, bend right] (8) to (5);
\draw[->, bend right] (2) to (1);
\draw[->, bend right] (4) to (1);

\end{tikzpicture}
\end{center}

\caption{The DAG in the given topological order $\gamma$ has $3$ fully suffix connected vertices that are depicted as filled circles. The leftmost fully suffix connected vertex is the (unique) $\gammaLFSC$ vertex.}
\label{fig:Fully_Suffix_Connected}
\end{figure}
See Figure \ref{fig:Fully_Suffix_Connected} for an example showing fully suffix connected vertices. We note that \gammaLFSC is unique. The following lemma states a useful property of the \gammaLFSC vertex.

\begin{lemma}\label{lem:gammaLFSC-vertex-no-outgoing-arcs}
Let $G = (V,A)$ be a permutation DAG and $\gamma$ be an umbrella-free topological ordering of $G$. Then, the \gammaLFSC vertex has no outgoing arcs in $G$.
\end{lemma}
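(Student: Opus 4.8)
The plan is to prove the contrapositive-style structural claim directly: suppose $u$ is the $\gammaLFSC$ vertex and, for contradiction, that $u$ has an outgoing arc $(u, y) \in A$. Since $(u,y) \in A$ and $\gamma$ is a topological ordering, we have $\gamma(y) < \gamma(u)$. The key point to exploit is that $u$ is \emph{fully suffix connected}: every vertex $v$ with $\gamma(v) > \gamma(u)$ satisfies $(v, u) \in A$. I want to leverage this together with transitive closure and umbrella-freeness to show that $y$ is also fully suffix connected, contradicting the \emph{leastness} of $u$ (because $\gamma(y) < \gamma(u)$).

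First I would take an arbitrary vertex $v$ with $\gamma(v) > \gamma(u)$ and show $(v, y) \in A$. Since $\gamma(v) > \gamma(u)$, full suffix connectedness of $u$ gives $(v, u) \in A$. Combined with $(u, y) \in A$, transitive closure of the permutation DAG $G$ (noted in the excerpt) yields $(v, y) \in A$. This already handles every vertex strictly after $u$ in $\gamma$. The remaining vertices to check, to conclude $y$ is fully suffix connected, are those $w$ with $\gamma(y) < \gamma(w) < \gamma(u)$ and also $\gamma(w) = \gamma(u)$ is impossible since $u \ne y$; I need $(w, y) \in A$ for all such $w$. Here is where umbrella-freeness enters: apply the umbrella-free condition to the arc $(u, y) \in A$ and the intermediate vertex $w$ (which satisfies $\gamma(y) < \gamma(w) < \gamma(u)$). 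Umbrella-freeness says either $(w, y) \in A$ or $(u, w) \in A$. In the first case we are done for this $w$. In the second case, $(u, w) \in A$ with $\gamma(w) < \gamma(u)$; but then $w$ is a vertex with $\gamma(w) < \gamma(u)$, so full suffix connectedness of $u$ does not directly apply to $w$, and moreover $(u,w) \in A$ would be an outgoing arc of $u$ to a vertex with $\gamma(w) < \gamma(u)$ — consistent with $\gamma$ being topological, so no immediate contradiction. Instead, in this second case I would use $(u,w) \in A$ together with full suffix connectedness: for any $v'$ with $\gamma(v') > \gamma(u)$ we have $(v', u) \in A$ and hence $(v', w) \in A$ by transitivity, so $w$ is fully suffix connected too — but $\gamma(w) < \gamma(u)$ contradicts $u$ being the $\gammaLFSC$ vertex. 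So the second case is itself impossible, leaving $(w, y) \in A$ as the only possibility.

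Combining the two ranges: every vertex $v$ with $\gamma(v) > \gamma(y)$ satisfies $(v, y) \in A$ (those with $\gamma(v) > \gamma(u)$ by transitivity through $u$; those with $\gamma(y) < \gamma(v) < \gamma(u)$ by the umbrella argument above; and $\gamma(v) = \gamma(u)$ means $v = u$, and $(u,y) \in A$ by assumption). Hence $y$ is fully suffix connected with $\gamma(y) < \gamma(u)$, contradicting minimality of $\gamma(u)$ among fully suffix connected vertices. Therefore $u$ has no outgoing arcs.

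The main obstacle I anticipate is the case analysis in the intermediate range $\gamma(y) < \gamma(w) < \gamma(u)$: one must be careful that the umbrella-free condition is being applied to the right arc (namely $(u,y)$, not some other arc) and that the "$(u,w) \in A$" branch is genuinely ruled out rather than merely deferred. The clean way to kill that branch is the observation that $(u,w)\in A$ would promote $w$ to a fully suffix connected vertex earlier than $u$ via transitivity, which is exactly the contradiction we are driving toward — so the same leastness argument does double duty. A minor secondary point to state explicitly is transitive closure of $G$, which is already recorded in the text right before the lemma, so I would simply cite that.
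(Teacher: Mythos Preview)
Your overall strategy---show that the target of an outgoing arc from the $\gammaLFSC$ vertex is itself fully suffix connected, contradicting leastness---is exactly the paper's approach. However, there is a genuine gap in your handling of the ``$(u,w)\in A$'' branch of the umbrella-free case split.

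You claim that if $(u,w)\in A$ then $w$ is fully suffix connected, arguing that for every $v'$ with $\gamma(v')>\gamma(u)$ one gets $(v',w)\in A$ by transitivity through $u$. But full suffix connectedness of $w$ requires $(v',w)\in A$ for \emph{every} $v'$ with $\gamma(v')>\gamma(w)$, and you have said nothing about those $v'$ lying strictly between $w$ and $u$ in $\gamma$ (i.e., $\gamma(w)<\gamma(v')<\gamma(u)$). Neither full suffix connectedness of $u$ nor the single arc $(u,w)$ gives you those arcs directly, so the conclusion that $w$ is fully suffix connected does not follow.

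The fix is the one small choice you did not make: rather than letting $y$ be an \emph{arbitrary} out-neighbor of $u$, take $y$ to be the out-neighbor with the \emph{largest} $\gamma$-value. Then in the umbrella-free dichotomy applied to the arc $(u,y)$ and an intermediate $w$, the branch $(u,w)\in A$ is immediately impossible (it would contradict maximality of $\gamma(y)$), and only $(w,y)\in A$ survives. This is precisely what the paper does, and with that adjustment your argument goes through verbatim.
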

\begin{proof}
 Let $v \in V$ be the \gammaLFSC and suppose for contradiction that $v$ has an outgoing arc in $G$.
 Let $u$ be the vertex with largest $\gamma(u)$ such that $(v,u)\in A$. 
 We note that $\gamma(u)<\gamma(v)$ since $\gamma$ is a topological ordering. 
 We will show that such a vertex $u$ is fully suffix connected and hence contradicts the $\gamma$-least fully suffix connected property of vertex $v$.
 
 We first show that for every vertex $w\in V$ such that $\gamma(w)\ge \gamma(v)$, we have $(w, u)\in A$. 
 For $w=v$, this follows since $(v,u)\in A$ by the choice of $u$.
 Let $w$ be a vertex such that $\gamma(w)>\gamma(v)$. Since $v$ is fully suffix connected, we have that $(w,v)\in A$. Also, since $G$ is a permutation DAG, it is transitively closed. Hence, $(v,u)\in A$ implies that $(w,u)\in A$. 
 
 Next, we show that for every vertex $w\in V$ such that $\gamma(u)<\gamma(w)<\gamma(v)$, we have $(w, u)\in A$. Let $w$ be a vertex such that $\gamma(u)<\gamma(w)<\gamma(v)$. By assumption, the ordering $\gamma$ is umbrella-free. Thus, at least one of $(w, u)$ or $(v,w)$ must exist in $A$. However, $(v,w) \not \in A$ as otherwise, $w$ will contradict the choice of vertex $u$. Therefore, $(w, u)\in A$.
 \end{proof}

We now discuss a high level overview of our iterative greedy algorithm for computing $\alpha(G, \gamma)$. During the first iteration, 
the algorithm greedily chooses the \gammaLFSC vertex $v_1$ (say) in $G_1:=G$ to assign the smallest alphabet, namely $\sigma(v_1)=1$. 
The vertex $v_1$ and its incident edges are deleted from $G_1$ to form $G_2$, and the remaining 
$n-1$ vertices $V\backslash\{v\}$ are 
ordered in the same relative order as $\gamma$---denote this ordering as $\gamma_2$. 
In the second iteration, our algorithm  greedily chooses the $\gamma_2$-$\LFSC$ vertex $v_2$ (say) in $G_2$ to assign the next smallest alphabet---the next smallest alphabet is chosen based on whether $v_2$ lies to the left or right of $v_1$: if $v_2$ lies to the left of $v_1$ with respect to $\gamma$, then we set $\sigma(v_2)=\sigma(v_1)+1$, otherwise we set $\sigma(v_2)=\sigma(v_1)$. 
This iterative removal and assignment process continues for $n$ iterations, i.e., until all  vertices are removed from $G$. The final output sequence will just be the sequence of assigned values in the order of vertices in $\gamma$. Before presenting our final complete 
algorithm (Algorithm \ref{alg:greedyassign}), we introduce a definition to formalize the reordering of vertices after removing a vertex from $G$ --- this will allow us to obtain $\gamma_{i+1}$ from $\gamma_i$.

\begin{definition}[Projected order]
Let $G = (V,A)$ be an $n$-vertex 
DAG, and $\gamma$ be a topological ordering of $V$. Let $H = G - v$. Then the \emph{projection of $\gamma$ onto $H$}, denoted by $\orderProjection_H[\gamma]: V\backslash\{v\} \rightarrow [n-1]$, is defined as $$\orderProjection_H[\gamma](u) = \begin{cases}
    \gamma(u)& \text{ if } \gamma(u) < \gamma(v),\\
    \gamma(u) - 1& \text{ if } \gamma(u) > \gamma(v).
\end{cases}$$
\end{definition}

Armed with the notions of fully suffix connected vertices and projected order, we state our algorithm below.

\begin{algorithm}[H]
\caption{\GreedyAssign algorithm to compute $\alpha(G, \gamma)$}
\small
\label{alg:greedyassign}
\textbf{Input:} Permutation DAG $G = (V,A)$ on $n$ vertices in umbrella-free topological order $\gamma:V\rightarrow [n]$.\\
\textbf{Output:} Sequence $\sigma$ of length $n$. \\

$\mathbf{\GreedyAssign}(G, \gamma)$:
\begin{algorithmic}[1]
\State{Initialize $\alpha \gets 1;\ \ G_1 \gets G; \ \ \gamma(v_0) \gets -\infty;\ \ \gamma_1 \gets \gamma$}
\For{$i \gets 1 \textup{ to } n$}
    \State{$v_i \gets \gamma_i\text{-}\LFSC$ in $G_i$}
    \If{$\gamma(v_i) < \gamma(v_{i-1})$}{\ $\alpha \gets \alpha + 1$}
    \EndIf
    \State{$G_{i+1} \gets G_i - v_i$}
    \State{$\gamma_{i+1} \gets \orderProjection_{G_{i+1}}[\gamma_i]$}
    \State{$\sigma(v_i) \gets \alpha$}
\EndFor
\State{\textbf{Return} $\sigma \gets (\sigma(\gamma^{-1}(1)) \ldots \sigma(\gamma^{-1}(n)))$}
\end{algorithmic}
\end{algorithm}

The algorithm can be implemented to run in polynomial-time since a $\gamma$-LFSC vertex in $G$ can be computed in polynomial-time. We now prove the 
correctness of the algorithm.
Let $G = (V, A)$ be an $n$-vertex 
permutation DAG, and $\gamma$ be an umbrella-free topological ordering of $G$. Let $v_1,\ldots, v_n$ 
be the sequence of vertices chosen in the execution of
$\GreedyAssign(G, \gamma)$. Let $\alpha_i, G_{i}$ and $\gamma_i$ denote the alphabet size $\alpha$ at the end of the $i^{th}$ iteration, the remaining subgraph at the start of the $i^{th}$ iteration, and  $\gamma$ projected onto $G_i$ respectively. Finally, let $\sigma$ be the sequence returned by $\GreedyAssign(G, \gamma)$. We have the following observations 
about the execution of the algorithm.

\begin{observation}\label{obs:greedyAssign-removal-outdegree}
The vertex $v_i$ has no outgoing arcs in $G_i$ for all $i \in [n]$.
\end{observation}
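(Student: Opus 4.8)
The plan is to prove the statement by induction on $i$, maintaining the stronger invariant that $G_i$ is a permutation DAG and $\gamma_i$ is an umbrella-free topological ordering of $G_i$. Once this invariant is available, the conclusion for index $i$ is immediate from Lemma~\ref{lem:gammaLFSC-vertex-no-outgoing-arcs} applied to $(G_i, \gamma_i)$, since $v_i$ is by definition the $\gammaLFSC$ vertex of $G_i$. For the base case $i = 1$ we have $G_1 = G$ and $\gamma_1 = \gamma$, which is an umbrella-free topological ordering of the permutation DAG $G$ by the standing assumption, so the invariant holds and Lemma~\ref{lem:gammaLFSC-vertex-no-outgoing-arcs} yields that $v_1$ has no outgoing arcs in $G_1$.

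For the inductive step, suppose the invariant holds for $G_i$ and $\gamma_i$, so that $v_i$ has no outgoing arcs in $G_i$. It remains to re-establish the invariant for $G_{i+1} = G_i - v_i$ with $\gamma_{i+1} = \orderProjection_{G_{i+1}}[\gamma_i]$. First, $G_{i+1}$ is an induced subgraph of the transitively closed DAG $G_i$, hence itself a transitively closed DAG. Second, $\gamma_{i+1}$ is a topological ordering of $G_{i+1}$: deleting a vertex and shifting all larger indices down by one preserves the property that every arc is oriented backward. Third, $\gamma_{i+1}$ is umbrella-free for $G_{i+1}$; this is the one point that needs an argument. Take any arc $(x, y) \in A(G_{i+1})$ and any vertex $z \in V(G_{i+1})$ with $\gamma_{i+1}(y) < \gamma_{i+1}(z) < \gamma_{i+1}(x)$. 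All three of $x, y, z$ are vertices of $G_i$ distinct from $v_i$, their relative order under $\gamma_i$ coincides with that under $\gamma_{i+1}$ since the projection is order-preserving, and $(x,y) \in A(G_i)$ as well. Applying umbrella-freeness of $\gamma_i$ for $G_i$ to the arc $(x,y)$ and the vertex $z$, at least one of $(z, y)$ or $(x, z)$ lies in $A(G_i)$; since neither of these arcs is incident to $v_i$, it also lies in $A(G_{i+1})$. Hence $\gamma_{i+1}$ is umbrella-free, and by the ``if'' direction of Lemma~\ref{lem:permutation-dag-iff-umbrella-free-ordering}, $G_{i+1}$ is a permutation DAG. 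This restores the invariant, and Lemma~\ref{lem:gammaLFSC-vertex-no-outgoing-arcs} applied to $(G_{i+1}, \gamma_{i+1})$ shows that $v_{i+1}$ has no outgoing arcs in $G_{i+1}$, completing the induction.

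The main (minor) obstacle is the umbrella-freeness step: one must check that every potential umbrella in $G_{i+1}$ pulls back to an umbrella configuration in $G_i$ and that the arc guaranteed by umbrella-freeness there survives the deletion of $v_i$. This goes through precisely because umbrella-freeness is a property of triples of vertices, the deleted vertex $v_i$ is none of the three vertices involved, and deleting $v_i$ only removes arcs incident to $v_i$. Every other step—transitive closure of induced subgraphs, the projected order being topological—is routine bookkeeping.
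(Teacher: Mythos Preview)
Your proof is correct and follows the paper's approach, which is simply to invoke Lemma~\ref{lem:gammaLFSC-vertex-no-outgoing-arcs}; the paper states only that the observation ``directly follows'' from that lemma. You go further by explicitly verifying via induction that each $(G_i,\gamma_i)$ satisfies the hypotheses of the lemma (permutation DAG with an umbrella-free topological ordering), a point the paper leaves implicit.
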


\begin{observation}\label{obs:GreedyAssign-alphabets-assigned-during-adjacent-iterations}
If $\gamma(v_{i+1}) < \gamma(v_i)$ then $\sigma(v_{i+1}) = \sigma(v_i) + 1$ and $(v_i, v_{i+1}) \not \in A$,  otherwise $\sigma(v_{i+1}) = \sigma(v_i)$ and $(v_{i+1}, v_i) \in A$. Thus, alphabet assignments by \GreedyAssign are non-decreasing with increasing  iterations i.e. $\sigma(v_i) \leq \sigma(v_j)$ for all $i,j \in [n]$ with $i < j$.
\end{observation}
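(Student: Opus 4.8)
The plan is to read the three assertions directly off the pseudocode of \GreedyAssign, using only Observation~\ref{obs:greedyAssign-removal-outdegree} and the defining property of the $\gamma_i$-$\LFSC$ vertex, and then to obtain the monotonicity statement by a one-line telescoping argument. First I would record the trivial but essential fact that $\gamma$ is a bijection, so $v_{i+1}\ne v_i$ implies $\gamma(v_{i+1})\ne\gamma(v_i)$; hence the word ``otherwise'' in the statement means exactly $\gamma(v_{i+1})>\gamma(v_i)$. The claims about $\sigma$ are then immediate: in iteration $i$ the algorithm assigns $\sigma(v_i)\gets\alpha$ after the conditional update ``\textbf{if} $\gamma(v_i)<\gamma(v_{i-1})$ \textbf{then} $\alpha\gets\alpha+1$'', and in iteration $i+1$ the guard of that update is precisely the test $\gamma(v_{i+1})<\gamma(v_i)$. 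Therefore, between the assignments $\sigma(v_i)\gets\alpha$ and $\sigma(v_{i+1})\gets\alpha$, the variable $\alpha$ increases by exactly one when $\gamma(v_{i+1})<\gamma(v_i)$ and is unchanged when $\gamma(v_{i+1})>\gamma(v_i)$; this gives $\sigma(v_{i+1})=\sigma(v_i)+1$ in the first case and $\sigma(v_{i+1})=\sigma(v_i)$ in the second.

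Next I would verify the two adjacency claims. For the non-edge $(v_i,v_{i+1})\notin A$: by Observation~\ref{obs:greedyAssign-removal-outdegree}, $v_i$ has no outgoing arc in $G_i$; since $v_{i+1}\in V(G_{i+1})=V(G_i)\setminus\{v_i\}\subseteq V(G_i)$ and $G_i$ is the subgraph of $G$ induced on its vertex set, an arc $(v_i,v_{i+1})\in A$ would in particular be an outgoing arc of $v_i$ in $G_i$, a contradiction (this in fact holds in both cases, though it is only asserted when $\gamma(v_{i+1})<\gamma(v_i)$). For the edge $(v_{i+1},v_i)\in A$ when $\gamma(v_{i+1})>\gamma(v_i)$: $v_i$ is the $\gamma_i$-$\LFSC$ vertex of $G_i$ and so, in particular, fully suffix connected in $(G_i,\gamma_i)$, i.e., $(v,v_i)\in A(G_i)$ for every $v\in V(G_i)$ with $\gamma_i(v)>\gamma_i(v_i)$. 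Since the projected order $\gamma_i$ preserves the relative order that $\gamma$ induces on $V(G_i)$, the hypothesis $\gamma(v_{i+1})>\gamma(v_i)$ yields $\gamma_i(v_{i+1})>\gamma_i(v_i)$, whence $(v_{i+1},v_i)\in A(G_i)\subseteq A$.

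Finally, for the monotonicity $\sigma(v_i)\le\sigma(v_j)$ whenever $i<j$: the first part of the observation shows $\sigma(v_{\ell+1})\in\{\sigma(v_\ell),\,\sigma(v_\ell)+1\}$ for every $\ell$, hence $\sigma(v_\ell)\le\sigma(v_{\ell+1})$; chaining these inequalities from $\ell=i$ to $\ell=j-1$ gives $\sigma(v_i)\le\sigma(v_{i+1})\le\cdots\le\sigma(v_j)$.

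I do not anticipate a genuine obstacle; the proof is essentially bookkeeping, and the only points needing care are (a) that $G_i$ is an induced subgraph of $G$, so arcs of $A$ between surviving vertices are exactly the arcs of $G_i$; (b) that $\gamma$ is injective, so the case split ``$<$ versus otherwise'' is really ``$<$ versus $>$''; and (c) that iteratively taking projected orders preserves the relative order inherited from $\gamma$, which is what lets the fully-suffix-connected property of $v_i$ (phrased in terms of $\gamma_i$ and $A(G_i)$) be transferred to a statement about $\gamma$ and $A$.
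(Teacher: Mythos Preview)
Your proposal is correct and is essentially the same approach the paper takes: the paper states only that the observation ``is due to the conditional increment of the alphabet size, $\alpha$, in \GreedyAssign'' and leaves the details implicit, whereas you have carefully unpacked exactly those details (reading the $\sigma$-updates off the pseudocode, invoking Observation~\ref{obs:greedyAssign-removal-outdegree} for the non-arc, and the fully-suffix-connected property of $v_i$ for the arc). Nothing is missing or unsound.
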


Observation \ref{obs:greedyAssign-removal-outdegree} directly follows from Lemma \ref{lem:gammaLFSC-vertex-no-outgoing-arcs}. Observation \ref{obs:GreedyAssign-alphabets-assigned-during-adjacent-iterations} is due to the conditional increment of the alphabet size, $\alpha$, in \GreedyAssign. 
The next two lemmas show feasibility and optimality of 
\GreedyAssign respectively. Theorem \ref{thm:alphabet-size-wrt-ordering-polytime-computability} then immediately follows from Lemmas \ref{lem:GreedyAssign-Feasability} and \ref{lem:GreedyAssign-Optimality}.

\begin{lemma}[Feasibility of \GreedyAssign]\label{lem:GreedyAssign-Feasability}
Let $\permutationGraph(\sigma) = (\{t_1\ldots t_n\}, A')$  
Then $\permutationGraph(\sigma)$ 
is isomorphic to $G$ under the mapping $\phi:\{u_1 \ldots u_n\} \rightarrow V$ given by $\phi(u_i) = \gamma^{-1}(i)$.
\end{lemma}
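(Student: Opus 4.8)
The plan is to show that the permutation DAG $\permutationGraph(\sigma)$ produced from the sequence $\sigma$ output by \GreedyAssign agrees, arc by arc, with $G$ under the natural identification $\phi(t_i) = \gamma^{-1}(i)$. Unwinding definitions, for $i < j$ the graph $\permutationGraph(\sigma)$ contains the arc $(t_j, t_i)$ exactly when $\sigma(\gamma^{-1}(i)) \le \sigma(\gamma^{-1}(j))$; we must show this holds if and only if $(\gamma^{-1}(j), \gamma^{-1}(i)) \in A$. So fixing two vertices $u, v \in V$ with $\gamma(u) < \gamma(v)$, the goal is the equivalence
\[
(v, u) \in A \iff \sigma(u) \le \sigma(v).
\]
First I would record the structural facts already available: Observation~\ref{obs:greedyAssign-removal-outdegree} says each chosen vertex $v_i$ has no outgoing arcs in $G_i$, and Observation~\ref{obs:GreedyAssign-alphabets-assigned-during-adjacent-iterations} says consecutive assignments are non-decreasing and track the left/right position of $v_{i+1}$ relative to $v_i$, with the adjacent-pair arc relation ($(v_{i+1}, v_i)\in A$ iff $\gamma(v_{i+1}) > \gamma(v_i)$, i.e. iff $\sigma(v_{i+1}) = \sigma(v_i)$).

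The key step is to upgrade the adjacent-iteration statement of Observation~\ref{obs:GreedyAssign-alphabets-assigned-during-adjacent-iterations} to a statement about arbitrary pairs. Suppose $u = v_p$ and $v = v_q$ were chosen in iterations $p$ and $q$; since $v_q$ is chosen from $G_q = G - \{v_1,\dots,v_{q-1}\}$ and $v_q$ has no outgoing arcs in $G_q$, the only arcs of $G$ incident to $v_q$ that can point \emph{out} of $v_q$ must go to vertices removed earlier, i.e. to some $v_r$ with $r < q$. I would argue by induction on $q - p$ using transitivity of the permutation DAG together with the \gammaLFSC choice. Concretely, for the forward direction ($(v,u)\in A \Rightarrow \sigma(u)\le\sigma(v)$): if $(v_q, v_p)\in A$ with $p < q$, then at iteration $q$ the vertex $v_p$ has already been removed, so we cannot directly invoke it; instead, consider the last iteration $r < q$ with $p \le r$ — $v_q$ still has an arc into $G_r$'s removed predecessor $v_p$ by transitivity, and the \gammaLFSC property forces $\gamma(v_r) \le \gamma(v_q)$ at each such step (otherwise some intermediate vertex would have been a "smaller" fully suffix connected vertex), which by Observation~\ref{obs:GreedyAssign-alphabets-assigned-during-adjacent-iterations} keeps $\sigma$ from incrementing; chaining gives $\sigma(v_p) \le \sigma(v_q)$. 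For the reverse direction ($\sigma(u) = \sigma(v)$ with $\gamma(u) < \gamma(v) \Rightarrow (v,u)\in A$), note that equal $\sigma$-values occur only along a run of consecutive iterations in which the chosen vertex always moves to the right ($\gamma(v_{i+1}) > \gamma(v_i)$); within such a run Observation~\ref{obs:GreedyAssign-alphabets-assigned-during-adjacent-iterations} gives $(v_{i+1}, v_i)\in A$ for each adjacent pair, and transitivity of the permutation DAG then gives $(v, u)\in A$ for all pairs in the run. The case $\sigma(u) < \sigma(v)$ combined with $\gamma(u) < \gamma(v)$ must then also yield $(v,u)\in A$: crossing each "increment boundary" corresponds (by the contrapositive of Observation~\ref{obs:GreedyAssign-alphabets-assigned-during-adjacent-iterations}) to a leftward move with a \emph{missing} arc between the boundary pair, but umbrella-freeness of $\gamma$ together with transitivity bridges past such boundaries to recover the arc $(v,u)$ when $\gamma(u) < \gamma(v)$; this is where I expect to lean on the umbrella-free hypothesis and Lemma~\ref{lem:gammaLFSC-vertex-no-outgoing-arcs} most heavily.

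The main obstacle I anticipate is exactly this last bookkeeping: tracking how the $\gamma$-order of $v_p$ relative to $v_q$ interacts with the sequence of leftward/rightward moves in between, and confirming that "$\gamma(u) < \gamma(v)$ and $\sigma(u) \le \sigma(v)$" cannot be realized by a vertex pair that is \emph{non-adjacent} in $A$. The clean way through is probably to prove a single invariant maintained across iterations — something like "for every pair $x, y$ still present in $G_i$, $(y,x)\in A$ iff $x$ would be assigned a value $\le$ that of $y$ under the partial assignment extended greedily" — and then the lemma falls out of the invariant at termination. I would set up that invariant, verify it holds initially and is preserved when $v_i$ is removed and $\gamma_{i+1} = \orderProjection_{G_{i+1}}[\gamma_i]$ is formed (using umbrella-freeness to see that projecting preserves umbrella-freeness, and using Lemma~\ref{lem:gammaLFSC-vertex-no-outgoing-arcs} to see $v_i$ contributes no outgoing arcs that need reconciling), and conclude.
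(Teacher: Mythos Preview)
Your forward direction gets there, but the induction on $q-p$ is unnecessary: once Observation~\ref{obs:greedyAssign-removal-outdegree} gives $p<q$ from $(v_q,v_p)\in A$, the monotonicity clause of Observation~\ref{obs:GreedyAssign-alphabets-assigned-during-adjacent-iterations} already yields $\sigma(v_p)\le\sigma(v_q)$ in one step.

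The genuine gap is in the reverse direction, case $\sigma(u)<\sigma(v)$. Your ``umbrella-freeness together with transitivity bridges past increment boundaries'' is a hope, not an argument, and you acknowledge as much by retreating to an unspecified invariant. The paper avoids all of this by taking the contrapositive: assume $\gamma(u)<\gamma(v)$ and $(v,u)\notin A$. Then, by the definition of fully suffix connected, $u$ can never become fully suffix connected while $v$ remains in the graph (since $v$ sits to the right of $u$ with no arc to it). Hence $v$ is removed first; writing $v=v_i$, $u=v_j$ with $i<j$, the hypothesis $\gamma(v_j)<\gamma(v_i)$ forces a leftward move somewhere between iterations $i$ and $j$, and Observation~\ref{obs:GreedyAssign-alphabets-assigned-during-adjacent-iterations} then gives a strict increment, i.e.\ $\sigma(u)>\sigma(v)$. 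This handles both of your sub-cases at once and never invokes umbrella-freeness or transitivity beyond what is already packaged into Observation~\ref{obs:greedyAssign-removal-outdegree}.

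There is also a more direct route you are narrowly missing: in either of your sub-cases you can show $u=v_p$ is chosen before $v=v_q$ (for $\sigma(u)<\sigma(v)$ this is monotonicity; for $\sigma(u)=\sigma(v)$ it is your own run argument). But then simply recall that $v_p$ is, by construction, fully suffix connected in $G_p$, and $v_q$ is still present in $G_p$ with $\gamma(v_q)>\gamma(v_p)$; the arc $(v_q,v_p)\in A$ follows immediately from the definition. No bridging, no invariant.
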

\begin{proof}
We will prove isomorphism of the two graphs under $\phi$ by showing that $(u,v) \in A$ if and only if $(\phi^{-1}(u), \phi^{-1}(v)) \in A'$.

For the forward direction, it suffices to show that $\sigma(u) \leq \sigma(v)$ whenever $(u,v) \in A$. 
We observe that if $(u,v) \in A$, then $\gamma(v) < \gamma(u)$. By Observation \ref{obs:greedyAssign-removal-outdegree}, \GreedyAssign must assign $\sigma(v)$ before $\sigma(u)$. Observation \ref{obs:GreedyAssign-alphabets-assigned-during-adjacent-iterations} then implies that $\sigma(u) \leq \sigma(v)$. 

Next we show the contrapositive of the converse direction. Assume that $(u,v) \not \in A$. We first consider 
the case when $\gamma(v) > \gamma(u)$. Let $\phi^{-1}(u) = t_{\gamma(u)}$ and  $\phi^{-1}(v) = t_{\gamma(v)}$. By definition of permutation DAGs, $\permutationGraph(\sigma)$ does not have arc $(t_i, t_j)$ when $i < j$. Thus $(t_{\gamma(u)}, t_{\gamma(v)}) \not \in A$. 
Next, we consider 
the case when $\gamma(v) < \gamma(u)$. 
For this, it suffices to show that $\sigma(u) < \sigma(v)$. Since $(u,v) \not \in A$, 
the vertex $v$ will never become fully suffix connected before the removal of $u$. Thus \GreedyAssign sets $\sigma(u)$ before $\sigma(v)$. Thus, by Observation \ref{obs:GreedyAssign-alphabets-assigned-during-adjacent-iterations}, we have that $\sigma(u) \leq \sigma(v)$. Let $u = v_i$ and $v = v_j$, where $i, j \in [n]$ are the iteration numbers during which \GreedyAssign assigns $\sigma(u)$ and $\sigma(v)$ respectively. Then, there exists $k$ 
such that $i \leq k < j$ and $\gamma(v_{k+1}) < \gamma(v_k)$ as otherwise, Observation \ref{obs:GreedyAssign-alphabets-assigned-during-adjacent-iterations} would imply that $\gamma(v_i) < \gamma(v_j)$, a contradiction. Thus, $\sigma(u) < \sigma(v)$.
\end{proof}

\begin{lemma}[Optimality of \GreedyAssign]\label{lem:GreedyAssign-Optimality}
Let $\optsequence$ be a sequence achieving $\alpha(G, \gamma)$. Then, $\sigma(v_i) \leq \optsequence(v_i)$ for all $i \in [n]$.
\end{lemma}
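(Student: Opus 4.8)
The plan is to prove $\sigma(v_i)\le \optsequence(v_i)$ by induction on $i$, feeding in just two ingredients: the arc-structure characterization of any feasible sequence, and Observation \ref{obs:GreedyAssign-alphabets-assigned-during-adjacent-iterations}. First I would unpack what it means for $\optsequence$ to achieve $\alpha(G,\gamma)$: the graph $\permutationGraph(\optsequence)$ is isomorphic to $G$ under the canonical mapping $t_j\mapsto\gamma^{-1}(j)$, and, after tracing through the definition of $\permutationGraph$ and being careful about the orientation of its arcs, this translates into the clean statement that for all $u,v\in V$ with $\gamma(u)>\gamma(v)$ we have $(u,v)\in A$ if and only if $\optsequence(v)\le\optsequence(u)$ (here I write $\optsequence(v)$ as shorthand for $\optsequence(\gamma(v))$). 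I would also record that $\optsequence$ is a sequence of positive integers, so $\optsequence(v)\ge 1$ everywhere; this integrality is what later lets me upgrade a strict inequality to a ``$+1$'' gap.

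For the base case $i=1$, $\GreedyAssign$ initializes $\alpha\gets 1$ and sets $\sigma(v_1)=1$, which is at most $\optsequence(v_1)\ge 1$.

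For the inductive step I would assume $\sigma(v_{i-1})\le\optsequence(v_{i-1})$ and split into the two cases of Observation \ref{obs:GreedyAssign-alphabets-assigned-during-adjacent-iterations} (with the index shift $i-1\to i$). If $\gamma(v_i)>\gamma(v_{i-1})$, then $\sigma(v_i)=\sigma(v_{i-1})$ and $(v_i,v_{i-1})\in A$; since $v_i$ comes after $v_{i-1}$ under $\gamma$, feasibility of $\optsequence$ forces $\optsequence(v_{i-1})\le\optsequence(v_i)$, and chaining with the inductive hypothesis gives $\sigma(v_i)=\sigma(v_{i-1})\le\optsequence(v_{i-1})\le\optsequence(v_i)$. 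If instead $\gamma(v_i)<\gamma(v_{i-1})$, then $\sigma(v_i)=\sigma(v_{i-1})+1$ and $(v_{i-1},v_i)\notin A$; since $v_{i-1}$ comes after $v_i$ under $\gamma$, the contrapositive of the feasibility characterization gives $\optsequence(v_i)>\optsequence(v_{i-1})$, and integrality together with the inductive hypothesis give $\optsequence(v_i)\ge\optsequence(v_{i-1})+1\ge\sigma(v_{i-1})+1=\sigma(v_i)$. In both cases $\sigma(v_i)\le\optsequence(v_i)$, closing the induction.

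I expect the main (and essentially only) obstacle to be bookkeeping rather than anything conceptual: getting the direction of the $\permutationGraph$ arcs right and lining it up precisely with the canonical isomorphism demanded in the definition of $\alpha(G,\gamma)$, and then invoking the two-case dichotomy of Observation \ref{obs:GreedyAssign-alphabets-assigned-during-adjacent-iterations} with the correct index shift. Once those are in place, both cases close in one line each, so no further case analysis or auxiliary claim is needed.
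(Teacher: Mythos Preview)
Your proposal is correct and follows essentially the same approach as the paper: induction on $i$, with the base case $\sigma(v_1)=1$ and the inductive step split into the two cases of Observation~\ref{obs:GreedyAssign-alphabets-assigned-during-adjacent-iterations}, using the arc $(v_i,v_{i-1})\in A$ to get $\optsequence(v_{i-1})\le\optsequence(v_i)$ in the first case and the non-arc $(v_{i-1},v_i)\notin A$ plus integrality to get $\optsequence(v_i)\ge\optsequence(v_{i-1})+1$ in the second. The only cosmetic difference is that the paper cites Observation~\ref{obs:greedyAssign-removal-outdegree} for the non-arc in the second case, whereas you read it off directly from Observation~\ref{obs:GreedyAssign-alphabets-assigned-during-adjacent-iterations}; both are fine.
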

\begin{proof}
We will show this by induction on $i$. For the base case of 
$i = 1$, \GreedyAssign always sets $\sequence(v_1) = 1$, the smallest possible alphabet assignment. Thus $\sequence(v_1) \leq \optsequence(v_1)$ holds. For the induction step, let $i\ge 2$. 
We have the following two cases based on whether \GreedyAssign incremented the alphabet size while assigning $v_{i}$.
\begin{enumerate}
    \item %
    Suppose $\sigma(v_{i}) = \sigma(v_{i-1})$. 
    By the description of the algorithm \GreedyAssign, we have that $v_{i-1}$ is fully suffix connected in $G_{i-1}$ and $\gamma(v_{i}) > \gamma(v_{i-1})$. Thus, the arc $(v_{i}, v_{i-1})$ must exist in $G_{i-1}$ and so also in $G$. It follows that $$\sigma(v_{i}) = \sigma(v_{i-1}) \leq \optsequence(v_{i-1}) \leq \optsequence(v_{i}).$$
    Here, the first inequality is by the induction hypothesis, while the second inequality is due to the observation that $(v_{i}, v_{i-1}) \in A$.
    
    \item 
    Suppose $\sigma(v_{i}) \not= \sigma(v_{i-1})$. 
     By the description of the algorithm \GreedyAssign, we have that 
     $\gamma(v_{i}) < \gamma(v_{i-1})$. Thus by Observation \ref{obs:greedyAssign-removal-outdegree}, the arc $(v_{i-1}, v_{i})$ does not exist in $G_{i-1}$ and hence, does not exist in $G$. It follows that
     $$\sigma(v_{i}) = \sigma(v_{i-1}) +1 \leq \optsequence(v_{i-1}) +1 \leq \optsequence(v_{i}).$$
     The equality relation is due to Observation \ref{obs:GreedyAssign-alphabets-assigned-during-adjacent-iterations}. The first inequality is due to the induction hypothesis while the second inequality is due to our observation that $(v_{i-1}, v_{i}) \not \in A$.
\end{enumerate}
\end{proof}

\noindent \textbf{Remark.} Algorithm \ref{alg:greedyassign} can be implemented to run in $O(|V| + |A|)$ time. This can be done by using a \emph{priority queue} data structure initialized as a \emph{stack}. All fully suffix connected vertices should be added to the priority queue with priorities being position in $\gamma$. The choice of vertex to assign is the vertex with the \emph{minimum} priority. 
The alphabet size should be incremented whenever a vertex removal results in new vertices becoming fully suffix connected.

\subsection{Min-Max Relation}\label{subsec:alphabet-size-wrt-ordering-minmax-relation}
Min-max relations are significant in optimization literature as they are strong indicators for the existence of a polynomial-time algorithm. In the context of algorithm design for optimization problems, min-max relations bring the optimization problem into $\NP \cap \coNP$, thus providing strong evidence for the existence of polynomial-time algorithms. 
In this section, we give a min-max relation for 
$\alpha(G, \gamma)$. An immediate consequence of our min-max relation 
will be an alternative linear time algorithm for computing $\alpha(G,\gamma)$. We believe that the min-max relation could be a useful tool towards computing $\alpha(G)$.

Let $G = (V,A)$ be a permutation DAG and $\gamma$ be an umbrella-free 
topological ordering of its vertices. We recall that $\overrightarrow{E} := \{(u,v): \gamma(u) < \gamma(v) \text{ and } (v, u) \not \in A\}$ and $H(G,\gamma)$ is the graph $(V,A\cup \overrightarrow{E})$. We also recall the arc weight function $w:A\cup \overrightarrow{E} \rightarrow\{0,1\}$ given by:
$$w(e) := \begin{cases}
    0& \text{ if } e \in A,\\
    1&  \text{ if } e \in \overrightarrow{E}.
\end{cases}$$ 
We observe that $H(G,\gamma)$ is a tournament. 
We now restate and prove the min-max relation. Our proof of the min-max relation will rely on our algorithm from Section \ref{subsec:Alphabet-Size-Computation}.
\AlphabetSizeMinMax*
\begin{proof}
We will show the  equation by showing inequality in both directions. We begin by showing the lower bound on $\alpha(G, \gamma)$. 
Let $P$ be any path in $H(G, \gamma)$, and $\sigma$ be any sequence such that $\permutationGraph(\sigma) = (\{t_1, \ldots, t_n\}, A')$ is isomorphic to $G$ under the mapping $\phi:\{t_1, \ldots, t_n \} \rightarrow V$ given by $\phi(t_i) = \gamma^{-1}(i)$. 
For every arc $(\phi(t_i), \phi(t_j)) \in P$ such that $(\phi(t_i), \phi(t_j)) \in \overrightarrow{E}$, we have the following two observations. First, the arc $(\phi^{-1}(t_j),\phi^{-1}(t_i)) \not \in A'$ as the arc $(\phi(t_j), \phi(t_i)) \not\in A$. Second,  $\sigma(\phi(t_i)) \geq \sigma(\phi(t_j)) + 1$ as $i < j$. It follows that 
$$ w(P) = \sum_{(u,v) \in P}w(u,v) = \sum_{(u,v) \in P \cap \overrightarrow{E}}w(u,v) \leq   \sum_{(u,v) \in P\cap \overrightarrow{E}}\sigma(u) - \sigma(v) \leq \alpha(G,\gamma) - 1.$$
The first and second equations are by definition of $w(P)$ and the weight function $w$ respectively. The first inequality is due to our observation that $\sigma(u) \geq \sigma(v) + 1$ whenever $(u,v) \in \overrightarrow{E}$. Let $a$ and $b$ be the first and last vertices on $P$. Then the final inequality follows from $\sigma(a) \geq 1$ and $\sigma(b) \leq \alpha(G, \gamma)$.

Next, we show the upper bound on $\alpha(G, \gamma)$. 
We recall that $v_1, \ldots, v_n$ is the order in which \GreedyAssign processes vertices of $G$. Consider $P = (v_n, v_{n-1}, \ldots, v_1)$. To prove the upper bound, it suffices to show that $(1)$ $P$ is a path in $H(G, \gamma)$; 
and $(2)\ w(P) \geq \alpha(G,\gamma) - 1$. To prove $(1)$, we show that $(v_i, v_{i-1}) \in A \cup \overrightarrow{E}$ for each $i \geq 2$. Consider the case when $\gamma(v_i) > \gamma(v_{i-1})$. Since $v_{i-1}$ was $\gamma_{i-1}\text{-}\LFSC$ in $G_i$, the arc $(v_i, v_{i-1}) \in A$. Next, consider the case when $\gamma(v_i) < \gamma(v_{i-1})$. By Observation \ref{obs:greedyAssign-removal-outdegree}, we have that the arc $(v_{i-1}, v_i) \not \in A$. Thus, the arc $(v_i, v_{i-1}) \in \overrightarrow{E}$ by definition of $\overrightarrow{E}$.
We now prove (2). 
By Observation \ref{obs:GreedyAssign-alphabets-assigned-during-adjacent-iterations}, and definitions of $w$ and $\overrightarrow{E}$, we have $w(v_i, v_{i-1}) = \sigma(v_i) - \sigma(v_{i-1})$ It follows that
$$ w(P) = \sum_{i = 2}^n w(v_i, v_{i-1})  = \sum_{i = 2}^n \sigma(v_i) - \sigma(v_{i-1}) = \alpha(G,\gamma) - 1.$$
The second equality is due to our previous observation. The final equality is due to the \GreedyAssign assignments $\sigma(v_n) = \alpha(G, \gamma)$ and $\sigma(v_1) = 1$.
\end{proof}

We remark that although the RHS problem in the min-max relation given in Theorem \ref{thm:alphabet-size-min-max-relation} is the longest path problem in a directed graph, it can be solved in the graph $H(G, \gamma)$ owing to the following lemma.
Lemma \ref{lem:H(G,gamma)_DAG} allows the optimization problem in the RHS of Theorem \ref{thm:alphabet-size-min-max-relation} to be solved 
in $O(|V| + |A|)$ time by the classical dynamic programming algorithm for maximum weight path in a DAG. This leads to 
an alternative algorithm for computing $\alpha(G, \gamma)$.

\begin{lemma}\label{lem:H(G,gamma)_DAG}
$H(G, \gamma)$ is a DAG.
\end{lemma}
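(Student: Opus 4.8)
The plan is to show that $\gamma$ itself is a topological ordering of $H(G,\gamma)$, i.e., every arc of $H(G,\gamma)$ points from a higher-$\gamma$ vertex to a lower-$\gamma$ vertex. Once this is established, acyclicity is immediate, since a directed cycle would force $\gamma$ to strictly decrease around the cycle and return to its starting value. So it suffices to check the orientation claim separately for the two types of arcs making up $A\cup\overrightarrow{E}$.

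First I would handle the arcs in $\overrightarrow{E}$: by definition $\overrightarrow{E}=\{(u,v):\gamma(u)<\gamma(v)\text{ and }(v,u)\notin A\}$. Wait — this is the one subtlety: arcs of $\overrightarrow{E}$ are written $(u,v)$ with $\gamma(u)<\gamma(v)$, so as oriented they go from \emph{lower} to \emph{higher} $\gamma$. Hence the right normalization is to say $\gamma$ \emph{reversed} is the topological order, or equivalently that every arc of $H(G,\gamma)$ goes from the $\gamma$-smaller endpoint to the $\gamma$-larger endpoint. For arcs in $A$: since $\gamma$ is a topological ordering of $G$, every arc $(v,u)\in A$ satisfies $\gamma(u)<\gamma(v)$; rewriting with the smaller endpoint first, the underlying pair is oriented in $H(G,\gamma)$ from $u$ to $v$ with $\gamma(u)<\gamma(v)$ — consistent. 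And $\overrightarrow{E}$ arcs already have this form by construction. Thus in $H(G,\gamma)$, regardless of arc type, traversing an arc strictly increases $\gamma$ (when we orient arcs of $A$ as the pair $\{u,v\}$ with $u\to v$, $\gamma(u)<\gamma(v)$)...

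Let me restate this cleanly to avoid the orientation confusion: I would argue directly that $H(G,\gamma)$ has no directed cycle. Suppose $C=(z_1,z_2,\ldots,z_m,z_1)$ were a directed cycle in $H(G,\gamma)$. For each consecutive arc $(z_\ell,z_{\ell+1})$, it lies in $A$ or in $\overrightarrow{E}$. If it lies in $\overrightarrow{E}$, then $\gamma(z_\ell)<\gamma(z_{\ell+1})$ by the definition of $\overrightarrow{E}$. If it lies in $A$, then since $\gamma$ is a topological ordering of $G$ we have $\gamma(z_{\ell+1})<\gamma(z_\ell)$. So arcs of $A$ decrease $\gamma$ and arcs of $\overrightarrow{E}$ increase $\gamma$ — a cycle could still exist in principle. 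The actual content is therefore: $H(G,\gamma)$ is a tournament that is acyclic, which forces it to be a \emph{transitive} tournament, and a tournament is transitive iff it is acyclic. The cleanest route uses umbrella-freeness: I would show that for every pair $u,v$ with $\gamma(u)<\gamma(v)$, exactly one of $(u,v)\in\overrightarrow{E}$ or $(v,u)\in A$ holds (that is the tournament property, already noted), and moreover that $H(G,\gamma)$ is transitively oriented — if $(a,b)$ and $(b,c)$ are arcs then $(a,c)$ is an arc — by a short case analysis on the relative $\gamma$-positions of $a,b,c$, using transitive closure of $G$ for the $A$-$A$ case and umbrella-freeness for the mixed cases. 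A transitive tournament is acyclic, completing the proof.

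The main obstacle I expect is the bookkeeping in the transitivity case analysis: with three vertices $a,b,c$ and arcs that may each be of type $A$ or type $\overrightarrow{E}$, there are several subcases depending on the $\gamma$-order of $a,b,c$, and each requires invoking the correct structural fact (transitive closure of $A$, the defining property of $\overrightarrow{E}$, or umbrella-freeness of $\gamma$) to conclude the existence and type of the arc between $a$ and $c$. An alternative that sidesteps transitivity entirely — and is probably what the authors intend given they only need acyclicity — is to exhibit an explicit topological order of $H(G,\gamma)$: I claim $\gamma$ \emph{itself} works once we observe that in $H(G,\gamma)$ the arc between any pair $u,v$ with $\gamma(u)<\gamma(v)$ is always directed $v\to u$ when it comes from $A$... but since $\overrightarrow{E}$-arcs go the other way, this fails, so the honest statement is that the order reversal isn't a fixed orientation convention. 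Given this, I'd commit to the transitive-tournament argument: prove $H(G,\gamma)$ is transitive via the case analysis, then note every transitive tournament is acyclic (its reachability relation is a strict total order), which is exactly the claim.
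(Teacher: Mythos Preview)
Your final committed approach---show $H(G,\gamma)$ is a transitive tournament via case analysis on arc types, invoking transitive closure of $A$ and umbrella-freeness of $\gamma$---is correct and is essentially the paper's argument. The paper organizes it a bit more economically: it first reduces a hypothetical cycle to a 3-cycle (the standard tournament trick), then proves separately that $(V,A)$ and $(V,\overrightarrow{E})$ are each transitively closed (the latter is exactly where umbrella-freeness enters), so neither arc class can contribute two arcs to a 3-cycle---contradiction. This sidesteps the mixed $A$/$\overrightarrow{E}$ cases you anticipated as the main bookkeeping burden, but the underlying content is the same. Your initial observation that $\gamma$ is \emph{not} a topological order of $H(G,\gamma)$ (since $A$-arcs decrease $\gamma$ while $\overrightarrow{E}$-arcs increase it) is correct and is precisely why some real argument is needed.
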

\begin{proof}
Suppose for contradiction that $H(G, \gamma)$ contains a cycle. Let $C = (u_1, u_2, \ldots u_k, u_1)$ 
be a cycle with the smallest number of vertices. If $(u_1, u_3) \in A \cup \overrightarrow{E}$, then $C' = (u_1, u_3, \ldots u_k, u_1)$ is a cycle, contradicting our choice of $C$. Since $H(G, \gamma)$ is a tournament, the arc $(v_3, v_1) \in A \cup \overrightarrow{E}$, and $C' = (u_1, u_2, u_3, u_1)$ is also a cycle i.e. $k=3$. Recall that the subgraph $(V, A)$ is transitively closed. Thus, at most one edge of $C$ can belong to $A$. To get the required contradiction, it suffices to show that the subgraph $(V, \overrightarrow{E})$ is transitively closed. 
Suppose for contradiction that $\overrightarrow{E}$ is not transitively closed. Then, there exist arcs $(u,v), (v,w) \in \overrightarrow{E}$ such that the arc $(u,w) \not\in \overrightarrow{E}$. By definition of $\overrightarrow{E}$, we have that $\gamma(u) < \gamma(v) < \gamma(w)$. It follows that the arc $(w, u) \in A$, and the triple $(u,v,w)$ is an umbrella in $G$ ordered by $\gamma$. This contradicts that $\gamma$ is umbrella-free.
\end{proof}

\subsection{Polyhedral Description}\label{subsec:polyhedral_description}

In this section, we give a polyhedral description for the convex-hull of sequences that are feasible for $\alpha(G, \gamma)$. As a consequence, it leads to an LP-based algorithm to compute $\alpha(G, \gamma)$. We emphasize that our polyhedral result is stronger than giving an LP-based algorithm to compute $\alpha(G, \gamma)$: it implies that one can efficiently compute an \emph{integer-valued} sequence $\sigma=(\sigma(1), \ldots, \sigma(n))$ with minimum weight $\sum_{i=1}^n w_i \sigma(i)$ for any given non-negative weights $w_1, \ldots, w_n$ such that $G(\sigma)$ is isomorphic to $G$ under the mapping $\phi: \{t_1, \ldots, t_n\}\rightarrow V$ given by $\phi(t_i)=\gamma^{-1}(i)$ for every $i\in [n]$. The following is the main result of this section.

\begin{theorem}
\label{thm:polyhedral-description}
Let $G=(V, A)$ be an $n$-vertex permutation DAG and $\gamma$ be an umbrella-free topological ordering of its vertices. Let $Q(G, \gamma)$ be the convex-hull of indicator vectors of $\vec{x}\in \mathbb{N}^n$ whose sequence $\sigma:=(x_1, \ldots, x_n)$ is such that $\permutationGraph(\sigma) = (\{t_1, \ldots, t_n\}, A')$ is isomorphic to $G$ under the mapping $\phi:\{t_1, \ldots, t_n\} \rightarrow V$ given by $\phi(t_i) = \gamma^{-1}(i)$ for all $i\in [n]$. Then, 
\[
  Q(G, \gamma) = \left\{ x\in \R^{n}\ \middle\vert \begin{array}{l}
x_{\gamma(u)}\le x_{\gamma(v)} \ \ \ \ \ \ \ \forall (v,u)\in A, \\
x_{\gamma(v)}\le x_{\gamma(u)}-1 \  \ \forall (v,u)\not\in A \text{ with } \gamma(u)<\gamma(v), \text{ and}\\
x_i\ge 1 \ \ \ \ \ \ \ \ \ \ \ \ \ \ \  \forall\ i\in [n]
  \end{array}\right\}.
\]
\end{theorem}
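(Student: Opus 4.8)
The plan is to prove the two inclusions, writing $P$ for the polyhedron on the right-hand side. The entire content lives in one observation: \emph{the integral points of $P$ are exactly the feasible sequences for $\alpha(G,\gamma)$}. To establish this I would unwind the definition of $\permutationGraph$ together with the identification $\phi(t_i)=\gamma^{-1}(i)$: for a positive integer vector $x=(x_1,\dots,x_n)$ and $\sigma:=(x_1,\dots,x_n)$, the graph $\permutationGraph(\sigma)$ is isomorphic to $G$ under $\phi$ if and only if, for every ordered pair $u,v\in V$ with $\gamma(u)<\gamma(v)$, one has $(v,u)\in A \iff x_{\gamma(u)}\le x_{\gamma(v)}$. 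The forward direction of this biconditional says: if $(v,u)\in A$ then $x_{\gamma(u)}\le x_{\gamma(v)}$, which is the first family; and if $(v,u)\notin A$ (with $\gamma(u)<\gamma(v)$) then $x_{\gamma(u)}>x_{\gamma(v)}$, which for \emph{integral} $x$ is precisely $x_{\gamma(v)}\le x_{\gamma(u)}-1$, the second family; finally $x_i\ge 1$ is exactly the requirement that $\sigma$ take values in $\{1,2,\dots\}$. Conversely, any integral $x$ satisfying all three families yields, for each $i<j$, that $x_i\le x_j$ holds iff $(\gamma^{-1}(j),\gamma^{-1}(i))\in A$, i.e.\ $\permutationGraph(\sigma)$ is isomorphic to $G$ under $\phi$. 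Hence $P\cap\mathbb{Z}^n$ is the set of feasible sequences, so $Q(G,\gamma)=\mathrm{conv}(P\cap\mathbb{Z}^n)$; in particular the inclusion $Q(G,\gamma)\subseteq P$ is immediate, and $P\cap\mathbb{Z}^n\neq\emptyset$ because $\gamma$ is umbrella-free (Lemma~\ref{lem:permutation-dag-iff-umbrella-free-ordering}).

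It remains to prove $P\subseteq Q(G,\gamma)$, for which it suffices to show $P$ is an \emph{integral} polyhedron, i.e.\ $P=\mathrm{conv}(P\cap\mathbb{Z}^n)$. I would write $P=\{x\in\mathbb{R}^n : Mx\le b\}$. Every row of $M$ coming from the first two families is a difference constraint $x_a-x_b\le c$ with $a\neq b$ (distinct, since $\gamma$ is a bijection and $u\neq v$) and $c\in\{0,-1\}$, contributing one $+1$ and one $-1$; every row coming from $x_i\ge 1$ contributes a single $-1$. Thus $M$ has entries in $\{0,\pm1\}$ with at most one $+1$ and at most one $-1$ per row, hence is totally unimodular (a textbook incidence-type matrix; equivalently, substituting $y=x-\mathbf{1}$ puts $P$ in the form $\{y:A'y\le b',\ y\ge 0\}$ where $A'$ is the incidence matrix of a digraph on the coordinates). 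Since $b$ is integral, the Hoffman--Kruskal theorem on total unimodularity implies that $P$ is an integral polyhedron.

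Combining the two steps, $P=\mathrm{conv}(P\cap\mathbb{Z}^n)=Q(G,\gamma)$. The proof is short; the one place that needs care is the translation in the first paragraph --- in particular, the strict inequality $x_{\gamma(u)}>x_{\gamma(v)}$ collapses to $x_{\gamma(v)}\le x_{\gamma(u)}-1$ \emph{only} because the sequences are integer-valued, and one must keep straight which endpoint of each arc/non-arc carries the larger $\gamma$-value. I expect this bookkeeping, rather than the routine total-unimodularity argument, to be the main thing to get right.
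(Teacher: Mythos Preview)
Your proof is correct and arrives at the same conclusion as the paper, but through a genuinely different integrality argument. The paper establishes that $P(G,\gamma)$ is integral by a direct perturbation (its Lemma~\ref{lem:ext-point-integrality}): given a putative non-integral extreme point $\vec{x}$, it lets $S$ be the set of fractional coordinates, sets $y_i:=\epsilon/2$ for $i\in S$ and $y_i:=0$ otherwise (with $\epsilon$ small enough that no coordinate crosses an integer), and verifies case by case that $\vec{x}\pm\vec{y}\in P$, contradicting extremality. You instead note that every constraint is either a difference constraint $x_a-x_b\le c$ or a bound $-x_i\le -1$, so the constraint matrix has at most one $+1$ and one $-1$ per row, hence is totally unimodular, and Hoffman--Kruskal finishes. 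Your route is shorter and more conceptual, and it makes transparent the shortest-path/difference-constraint structure that also underlies the min-max relation of Theorem~\ref{thm:alphabet-size-min-max-relation}; the paper's perturbation argument, by contrast, is entirely self-contained and avoids citing the TU machinery. Your first paragraph --- explicitly identifying $P\cap\mathbb{Z}^n$ with the set of feasible sequences by unwinding the definition of $\permutationGraph$ --- is also more carefully spelled out than the paper's one-line ``any extreme point of $Q(G,\gamma)$ satisfies the constraints defining $P(G,\gamma)$''.
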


For notational convenience, let $P(G, \gamma)$ denote the polyhedron defined in the RHS of Theorem \ref{thm:polyhedral-description}. 
Before proving Theorem \ref{thm:polyhedral-description}, we describe how $\alpha(G, \gamma)$ can be obtained by optimizing over $P(G', \gamma')$ for a graph $G'$ and an ordering $\gamma'$ obtained from $G$ and $\gamma$. 
Let $G'=(V', A')$ be obtained from $G$ by adding a vertex $t$ with edges $(t,u)$ for all $u\in V$ and $\gamma':V'\rightarrow [n+1]$ be defined as $\gamma'(u)=\gamma(u)$ if $u\in V$ and $\gamma'(t)=n+1$. We note that 
if $G$ is a permutation DAG and $\gamma$ is an umbrella-free topological ordering of $G$, then $G'$ is also a permutation DAG and $\gamma'$ is an umbrella-free topological ordering of $G'$. Moreover, we also have that 
\[
\alpha(G, \gamma) = \min\left\{x_{\gamma'(n+1)}: x\in Q(G', \gamma')\right\}.
\]
Thus, by Theorem \ref{thm:polyhedral-description}, the $\gamma$-alphabet size of $G$, i.e., $\alpha(G, \gamma)$, can be computed by optimizing along the objective direction $(0, \ldots, 0, 1)\in \R^{n+1}$ over the polyhedron $P(G', \gamma')$.

We now prove Theorem \ref{thm:polyhedral-description}. 
\begin{proof}[Proof of Theorem \ref{thm:polyhedral-description}]
We recall that a point $\vec{x}$ is an \emph{extreme point} of a polyhedron if $\vec{x}$ cannot be expressed as a convex combination of any two distinct points in the polyhedron. 
Any extreme point $x$ of $Q(G, \gamma)$ satisfies the constraints defining $P(G, \gamma)$. Thus, $Q(G, \gamma)\subseteq P(G, \gamma)$. In order to show equality, it suffices to show that all extreme points of $P(G, \gamma)$ are integral. 
Lemma \ref{lem:ext-point-integrality} shows that all extreme points of $P(G, \gamma)$ are integral, thus completing the proof of Theorem \ref{thm:polyhedral-description}. 
\end{proof}

\begin{lemma}
\label{lem:ext-point-integrality}
Let $G=(V, A)$ be an $n$-vertex DAG and $\gamma$ be a topological ordering of its vertices. If $\vec{x}$ is an extreme point of $P(G, \gamma)$, then $\vec{x} \in \mathbb{Z}^n$.
\end{lemma}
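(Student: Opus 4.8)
The plan is to show that the constraint matrix defining $P(G,\gamma)$ is totally unimodular (TU), which together with the integrality of the right-hand side vector immediately yields integrality of all extreme points. First I would write the system in matrix form $Mx \le c$, where the first block of rows encodes $x_{\gamma(u)} - x_{\gamma(v)} \le 0$ for each $(v,u)\in A$, the second block encodes $x_{\gamma(v)} - x_{\gamma(u)} \le -1$ for each ``missing forward arc'' pair, and the third block encodes $-x_i \le -1$ for each $i\in[n]$. The key structural observation is that every row of $M$ has at most two nonzero entries, each of which is $+1$ or $-1$, and when a row has two nonzero entries they are of opposite sign: indeed a constraint of the form $x_a - x_b \le (\text{const})$ contributes a $+1$ in column $a$ and a $-1$ in column $b$. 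Rows from the third block have a single nonzero entry (equal to $-1$).

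Next I would invoke the classical sufficient condition for total unimodularity of such ``network-like'' matrices (the Heller--Tcompkins / Hoffman--Gale criterion): a matrix with entries in $\{0,\pm 1\}$ and at most two nonzeros per row is TU provided its rows can be partitioned into two groups so that, for every column, the two nonzero entries in that column are in the same group if they have opposite signs and in different groups if they have the same sign. Here one simply puts all rows into a single group: since every pair of nonzero entries within a row is of opposite sign, and we never constrain how entries are distributed across columns, the partition into one nonempty and one empty class works. (Equivalently, $M$ is the incidence matrix of a directed graph augmented with unit rows, which is a standard TU matrix.) Therefore $M$ is TU, hence every vertex of $\{x : Mx \le c\}$ with integral $c$ is integral. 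Since $c$ has all entries in $\{0,-1\}$, this proves $\vec x\in\mathbb Z^n$ for every extreme point $\vec x$.

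An alternative, more self-contained route I would keep in reserve avoids citing TU: argue directly that the feasible region is an integral polyhedron by a rounding/uncrossing argument. Given an extreme point $\vec x$, consider the ``fractional level sets'' and show that if some coordinate is non-integral then one can perturb $\vec x$ in two opposite directions along the affine subspace of tight constraints while staying feasible, contradicting extremality; concretely, let $S$ be the set of coordinates whose fractional part equals that of a fixed non-integral $x_j$, and observe that simultaneously adding $\pm\epsilon$ to all coordinates in $S$ preserves all constraints of the form $x_a - x_b \le $ integer (since such a constraint, if tight, must have $a,b$ both in $S$ or both outside $S$, because the difference of two integers plus $0$ forces equal fractional parts) and preserves $x_i\ge 1$ for small $\epsilon$. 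This again yields integrality.

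The main obstacle is the bookkeeping in the direct argument — specifically, verifying carefully that a tight difference constraint $x_a - x_b = (\text{integer})$ forces $a$ and $b$ into the same fractional-part class, and that the lower-bound constraints $x_i \ge 1$ cannot be tight at a coordinate we wish to perturb downward (handled by choosing $\epsilon$ small and noting that if $x_i = 1$ is tight then $i$ has integral, hence zero, fractional part and is excluded from $S$). If I instead use the total unimodularity route, the only subtlety is citing the correct TU criterion and confirming $M$ literally matches the hypothesis (at most two $\pm 1$ entries per row, opposite signs when two are present), which is routine; I would present the TU argument as the primary proof for brevity and robustness.
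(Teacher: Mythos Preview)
Your primary route via total unimodularity is correct and gives the lemma, though your statement of the sufficient criterion is garbled (you swap the roles of rows and columns midway). The clean version: each row of $M$ contains at most one $+1$ and at most one $-1$, so $M^{\top}$ is a submatrix of the node--arc incidence matrix of a directed graph, hence TU; transposition and appending single-$\pm 1$ rows preserve TU. Since the right-hand side is integral, every extreme point is integral. This is a genuinely different argument from the paper's.

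The paper takes precisely your \emph{alternative} route, the direct perturbation argument, but with a simpler choice of $S$: it lets $S$ be the set of \emph{all} non-integral coordinates (not just those sharing a fixed fractional part), sets $\epsilon$ to the minimum distance from any such $x_i$ to its nearest integer, and checks by a short case analysis that both $\vec x + \tfrac{\epsilon}{2}\chi_S$ and $\vec x - \tfrac{\epsilon}{2}\chi_S$ lie in $P(G,\gamma)$, where $\chi_S$ is the indicator vector of $S$. Your fractional-part refinement is unnecessary here because every constraint has integer right-hand side, so a constraint $x_a - x_b \le c$ with $c\in\mathbb Z$ that is tight already forces $x_a, x_b$ to be simultaneously integral or simultaneously non-integral; the coarser $S$ suffices. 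The TU argument buys a one-line proof and a structural explanation (the polyhedron is a shortest-path/dual-network polyhedron); the paper's perturbation is fully self-contained but requires the small case check.
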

\begin{proof}
Suppose for contradiction that $\vec{x}$ is non-integral. We will show the existence of two points in $P(G, \gamma)$ such that $\vec{x}$ is a convex combination of these points. Let $S := \{i:x_i \not \in \mathbb{Z} \}$. We note that the set $S$ is non-empty due to our choice of $\vec{x}$. Let  $\epsilon \in \mathbb{R}$ be as follows
$$\epsilon := \min_{i \in S}\big\{\min(x_i - \floor{x_i}, \ceil{x_i} - x_i)\big\}.$$
Since $S$ is non-empty, we have $\epsilon > \frac{\epsilon}{2} > 0$. 
Let $\vec{y} \in \mathbb{R}^n$ be defined as follows:
$$y_i := \begin{cases}
\epsilon/2& \text{ if $i \in S$,}\\
0 & \text{ otherwise.}
\end{cases}$$
We note that $\vec{x} = \frac{1}{2}(\vec{x + y}) + \frac{1}{2}(\vec{x - y})$. 
It suffices to show that 
$\vec{x+y}, \vec{x-y} \in P$. 
We will show that the point $\vec{x+y} \in P$ and remark that the proof of $\vec{x-y} \in P$ is along very similar lines. We observe that $\vec{y}\ge 0$.

Constraint (3) is always satisfied as $x_i + y_i \geq x_i \geq 1$. 
We first focus on constraint $(1)$. Consider any arc $(v,u) \in A$. Since $\vec{y} \geq 0$, the constraint is easily seen to be satisfied in the cases where (1) $x_{\gamma(u)}, x_{\gamma(v)} \in \mathbb{Z}$; (2) $x_{\gamma(u)}, x_{\gamma(v)} \not \in \mathbb{Z}$; and (3) $x_{\gamma(u)} \in \mathbb{Z} \text{ but } x_{\gamma(v)} \not\in \mathbb{Z}$. Consider the case when $x_{\gamma(u)} \not\in \mathbb{Z}$ but $x_{\gamma(v)} \in \mathbb{Z}$. Then, we have that $$x_{\gamma(u)} + y_{\gamma(u)} < x_{\gamma(u)} + \epsilon \leq \ceil{x_{\gamma(u)}} \leq x_{\gamma(v)} = x_{\gamma(v)} + y_{\gamma(v)}.$$
The first inequality is by $y_i \leq \epsilon/2$ for all $i \in [n]$. The second inequality is by definition of $\epsilon$. The third inequality is due to $\vec{x} \in P$ and our case assumption that $x_{\gamma(v)} \in \mathbb{Z}$.  The equality relation is by definition of $\vec{y}$.

Next, we consider constraint (2). Let $\gamma(u) < \gamma(v)$ but $(v,u) \not \in A$. Similar to the above analysis, the constraint is easily seen to be satisfied in the cases where (1) $x_{\gamma(u)}, x_{\gamma(v)} \in \mathbb{Z}$; (2) $x_{\gamma(u)}, x_{\gamma(v)} \not \in \mathbb{Z}$; and (3) $x_{\gamma(u)} \not\in \mathbb{Z} \text{ but } x_{\gamma(v)} \in \mathbb{Z}$. Consider the case when $x_{\gamma(u)} \in \mathbb{Z}$ but $x_{\gamma(v)} \not\in \mathbb{Z}$.
Then, we have that $$x_{\gamma(v)} + y_{\gamma(v)} < x_{\gamma(v)} + \epsilon \leq \ceil{x_{\gamma(u)}} \leq x_{\gamma(u)} = x_{\gamma(u)} + y_{\gamma(u)}.$$
The first inequality is due to $y_i \leq \epsilon/2$ for all $i \in [n]$. The second inequality is by definition of $\epsilon$. The third inequality is due to $\vec{x} \in P$ and our case assumption that $x_{\gamma(u)} \in \mathbb{Z}$.  The equality relation is by definition of $\vec{y}$.
\end{proof}

Based on Lemma \ref{lem:ext-point-integrality}, it is natural to wonder if the integral extreme points of $P(G, \gamma)$ have any combinatorial interpretation when $G$ is an arbitrary DAG and $\gamma$ is an arbitrary topological ordering of $G$. 
The following lemma shows that integrality of $P(G, \gamma)$ is useful only when $G$ is a \emph{permutation} DAG and $\gamma$ is an \emph{umbrella-free} topological ordering of $G$.

\begin{lemma}\label{lem:polyhedron_feasibility}
Let $G$ be a DAG and $\gamma$ be a topological ordering of $G$. Then, $P(G,\gamma)$ is non-empty if and only if $G$ is a permutation DAG and $\gamma$ is umbrella-free.
\end{lemma}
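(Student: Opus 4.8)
The plan is to prove both directions by directly connecting feasibility of the linear system $P(G,\gamma)$ to the combinatorial characterization of permutation DAGs via umbrella-free orderings (Lemma \ref{lem:permutation-dag-iff-umbrella-free-ordering}). For the easy direction, suppose $G$ is a permutation DAG and $\gamma$ is umbrella-free. Then $\alpha(G,\gamma)$ is finite (by the remark following Lemma \ref{lem:permutation-dag-iff-umbrella-free-ordering}), so there is a sequence $\sigma=(\sigma(1),\ldots,\sigma(n))\in[k]^n$ with $\permutationGraph(\sigma)$ isomorphic to $G$ under $\phi(t_i)=\gamma^{-1}(i)$. Unwinding the definition of $\permutationGraph$: for $i<j$ the arc $(t_j,t_i)$ is present iff $\sigma(i)\le\sigma(j)$. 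Translating this through $\phi$ gives exactly that $\sigma$, read as the point $x$ with $x_{\gamma(v)}=\sigma(v)$, satisfies constraint (1) (when $(v,u)\in A$ we get $x_{\gamma(u)}\le x_{\gamma(v)}$) and constraint (2) (when $(v,u)\notin A$ with $\gamma(u)<\gamma(v)$, we must have $x_{\gamma(u)}>x_{\gamma(v)}$, i.e. $x_{\gamma(v)}\le x_{\gamma(u)}-1$ since the values are integers). Shifting all coordinates up so they are $\ge 1$ (which preserves both difference constraints) yields a feasible point, so $P(G,\gamma)\ne\emptyset$.

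For the converse, suppose $P(G,\gamma)\ne\emptyset$; I want to conclude that $G$ is a permutation DAG and $\gamma$ is umbrella-free. First note $G$ must be transitively closed: if $(v,u),(u,w)\in A$ then constraint (1) forces $x_{\gamma(w)}\le x_{\gamma(u)}\le x_{\gamma(v)}$; if we had $(v,w)\notin A$, then since $\gamma(w)<\gamma(v)$ (transitivity of the order), constraint (2) would demand $x_{\gamma(w)}\le x_{\gamma(v)}-1$, which is consistent, so transitive closure does not immediately follow from feasibility alone — I should instead argue it differently. The cleaner route: take any feasible $x\in P(G,\gamma)$, round/perturb it to get a point with all \emph{distinct} coordinates that still satisfies the (strict-in-spirit) constraints — actually the constraints already nearly force distinctness in the relevant places. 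Define a sequence $\sigma$ from $x$ and check that $\permutationGraph(\sigma)$ under $\phi$ reproduces exactly the arc set $A$: constraint (1) ensures every arc of $A$ appears in $\permutationGraph(\sigma)$, and constraint (2) ensures every non-arc of $A$ (in the backward direction of $\gamma$) is absent from $\permutationGraph(\sigma)$. The subtlety is that $\permutationGraph(\sigma)$ adds an arc $(t_j,t_i)$ for $i<j$ whenever $\sigma(i)\le\sigma(j)$, including ties; I need constraint (2)'s strict gap ($-1$) precisely to rule out spurious arcs from ties, and I need to make sure the rational coordinates of $x$ can be scaled to integers preserving all these inequalities (clear by clearing denominators and scaling). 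Having produced such a $\sigma$, we get that $G\cong\permutationGraph(\sigma)$, hence $G$ is a permutation DAG; and since $\permutationGraph(\sigma)$ always admits its defining ordering as an umbrella-free topological ordering, and the isomorphism $\phi$ carries that ordering to $\gamma$, we conclude $\gamma$ is umbrella-free.

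The main obstacle I anticipate is the tie-handling in the converse direction: making the translation between "$x$ feasible for $P(G,\gamma)$" and "$\sigma$ defines a permutation DAG isomorphic to $G$" airtight, in particular confirming that the $\ge x_{\gamma(u)}-1$ slack in constraint (2) is exactly what is needed (no more, no less) to guarantee $\permutationGraph(\sigma)$ has no extra arcs, while constraint (1) with its non-strict $\le$ is exactly what captures the arcs that should be present. A secondary technical point is the passage from a real feasible point to an integer-valued sequence — but since $P(G,\gamma)$ is a rational polyhedron with only difference constraints and lower bounds, any feasible point can be replaced by an integer feasible point (e.g. scale by the lcm of denominators, or invoke that the constraint matrix is totally unimodular / it is an interval-type system), so this is routine. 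I would also double-check the boundary convention $\gamma(v_0)=-\infty$ and the indexing to avoid off-by-one errors when reading $x$ as a sequence indexed by positions versus by vertices.
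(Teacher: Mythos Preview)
Your overall proposal is correct, and for the reverse direction it matches the paper (which simply invokes the feasibility of \GreedyAssign). For the forward direction you take a genuinely different route: you construct an integer feasible point, read off a sequence $\sigma$, and argue that $\permutationGraph(\sigma)$ is isomorphic to $G$ under $\phi(t_i)=\gamma^{-1}(i)$, whence $G$ is a permutation DAG and $\gamma$ (being carried from the natural ordering of $\permutationGraph(\sigma)$) is umbrella-free. This is valid, and your scaling argument for obtaining an integer point is fine since multiplying a rational feasible $x$ by a large integer only strengthens the ``$-1$'' constraints.

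By contrast, the paper's forward direction is purely by contradiction on the constraints, with no sequence construction: it shows directly that if $G$ fails to be transitively closed, or if $\gamma$ has an umbrella, then the three relevant inequalities of $P(G,\gamma)$ form an infeasible cycle. You actually started down this road and abandoned it, but your reason for abandoning it was a sign error: for $(v,w)\notin A$ with $\gamma(w)<\gamma(v)$, constraint~(2) gives $x_{\gamma(v)}\le x_{\gamma(w)}-1$, not $x_{\gamma(w)}\le x_{\gamma(v)}-1$. With the correct direction, combining this with $x_{\gamma(w)}\le x_{\gamma(u)}\le x_{\gamma(v)}$ (from constraint~(1) applied to $(v,u),(u,w)\in A$) yields $x_{\gamma(v)}\le x_{\gamma(v)}-1$, an immediate contradiction. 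So the direct approach does work and is in fact shorter; your constructive detour buys you an explicit witness sequence but at the cost of the integrality discussion.
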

\begin{proof}
The reverse direction follows from the correctness of \GreedyAssign (Lemma \ref{lem:GreedyAssign-Feasability}). We focus on proving the forward direction. Let $\vec{x} \in P(G, \gamma)$ be a feasible point. It suffices to show that $G$ is transitively closed and $\gamma$ is umbrella-free. 

First, assume for contradiction that $G$ is not transitively closed. Then, there exist arcs $(u,v), (v,w) \in A$ such that the arc $(u,w) \not \in A$. Since $\vec{x}$ is feasible, we have the following: (1) $x_{\gamma(v)} \leq x_{\gamma(u)}$; (2) $x_{\gamma(w)} \leq x_{\gamma(v)}$; and (3) $x_{\gamma(u)} \leq x_{\gamma(w)} - 1$. However, these inequalities do not admit any feasible solution, a contradiction. 

Next, assume for contradiction that $\gamma$ is not umbrella-free. Then, there exists a triple $(u,v,w)$ such that $\gamma(u) < \gamma(v) < \gamma(w)$, and the arc $(w,u) \in A$, but the arcs $(v, u), (w,v) \not \in A$. Since the point $\vec{x}$ is feasible, we have the following: $x_{\gamma(w)} \leq x_{\gamma(v)} - 1$; (2) $x_{\gamma(v)} \leq x_{\gamma(u)} -1$; and (3) $x_{\gamma(u)} \leq x_{\gamma(w)}$. However, these inequalities do not admit any feasible solution, a contradiction. 
\end{proof}
\section{An efficient algorithm for finding a maximum binary tree in bounded treewidth graphs}\label{sec:boundedtreewidth}

In this section, we prove Theorem~\ref{theorem:bounded-treewidth} by designing an efficient dynamic programming algorithm to find a maximum binary tree in graphs with bounded treewidth.
Given an undirected graph $G=(V,E)$, we say that a subgraph $T$, where $V(T) \subseteq V$ and $E(T) \subseteq E$, is a \emph{binary tree} in $G$ if $T$ is connected, acyclic, and $\deg_T(v) \le 3$ for every vertex $v \in V(T)$. The problem of interest is the following.

\begin{problem}{\umaxBT}
	Given: An undirected graph $G$.
	
	Goal: A binary tree in $G$ with maximum number of vertices. 
\end{problem}

We recall that if the maximum degree requirement for each vertex is two instead of three, then this problem is exactly the longest path problem. Our algorithm is modified from the standard dynamic programming approach for longest path in bounded treewidth graphs. Our main modification to the algorithm is to address the degree requirement for each vertex in the subproblems. We note that our technique is also applicable to find a maximum-sized bounded degree tree in a given graph of bounded treewidth, where the degree bounds are constant.
We will focus on the rooted version of the problem as defined below. 

\begin{problem}{\rumaxBT}
	Given: An undirected graph $G$ and a root $s \in V(G)$.
	
	Goal: A binary tree in $G$ containing $s$ where $\deg_T(s) \leq 2$ with maximum number of vertices.
\end{problem}

Now we introduce the definition of \emph{treewidth}.

\begin{definition} \label{def:tree-decomposition}
A {\em tree decomposition} of an undirected graph $G$ is given by $\cT = (T,\{X_t\}_{t \in V(T)})$, where $T$ is a tree in which every tree node $i \in V(T)$ is assigned a subset $X_i \subseteq V(G)$ of vertices of $G$, called a bag, such that the following conditions are satisfied:
\begin{enumerate}[label=(T\arabic*)]
\item \label{T1} $\cup_{i \in V(T)} X_i = V(G)$.
\item \label{T2} For each $\{u,v\} \in E(G)$, there is a tree node $i \in V(T)$ in the tree $T$ such that both $u$ and $v$ are in $X_i$.
\item \label{T3} For each vertex $u \in V(G)$, if there exist two tree nodes $i, j\in V(T)$ in the tree $T$ such that $u$ is in both $X_i$ and $X_j$, then $u\in X_k$ for every tree node $k \in V(T)$ on the unique path between $i$ and $j$.
\end{enumerate}
For ease of distinction, we will denote $i \in V(T)$ as a \emph{tree node} and the set $X_i$ as a \emph{bag} of the tree decomposition $\cal T$. The \emph{width} of a tree decomposition is $\max_{i \in V(T)}\{|X_i|-1\}$. The \emph{treewidth} $w_G$ of the graph $G$ is the minimum width among all possible tree decompositions. A tree decomposition with treewidth $w_G$ can be found in $(w_G 2^{w_G})^{O(w_G^2)}|V(G)|$ time \cite{bodlaender1996linear}.
\end{definition}

\subsection{Outline of the algorithm}
Our algorithm consists of two parts:
\begin{enumerate}
    \item Design a linear time algorithm to solve \rumaxBT.
    \item Reduce \umaxBT to \rumaxBT.
\end{enumerate}

For the first part, we will show the following theorem in the next sections.
\begin{restatable}{theorem}{rbddtw} \label{thm:rbddtw}
Let $s \in V(G)$ be a root. Given a tree decomposition of $G$ with width $w_G$, \rumaxBT can be solved in ${w_G}^{O(w_G)}|V(G)|$ time.
\end{restatable}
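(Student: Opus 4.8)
The plan is to set up a bottom-up dynamic program over a \emph{nice} tree decomposition of $G$ rooted at some tree node. First I would preprocess the given width-$w_G$ tree decomposition into a nice tree decomposition (with Leaf, Introduce, Forget, and Join nodes) of the same width and with $O(w_G \cdot |V(G)|)$ tree nodes; this is standard and incurs only a constant-factor-in-$w_G$ blowup. I would force the root $s$ of the instance to appear in every bag (or at least in the root bag) by adding it to bags along a path, so that the partial solution always ``knows'' whether it has already attached to $s$. The subproblems are indexed, for each tree node $t$ with bag $X_t$, by: (i) a partition of a subset $U \subseteq X_t$ into blocks, recording which vertices of the bag currently lie in the partial forest and which blocks are in the same tree component so far; (ii) for each vertex $v \in U$, a ``residual degree budget'' $d(v) \in \{0,1,2\}$ recording how many more edges incident to $v$ may still be added higher in the decomposition (so that the final degree never exceeds $3$, or $2$ if $v=s$); and (iii) a bit per block indicating whether that block already contains $s$. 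The DP value is the maximum number of vertices in a subforest of the subgraph $G_t$ induced by vertices introduced in the subtree below $t$, consistent with this boundary data, where every component either contains a bag vertex or contains $s$.

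The key steps, in order: (1) define $G_t$ and state precisely the invariant that a DP table entry is realizable iff there is a subforest of $G_t$ whose trace on $X_t$ matches the partition/degree/$s$-flag data and all of whose components meet $X_t \cup \{s\}$; (2) give the transition rules. For a \textbf{Leaf} node the table is trivial. For an \textbf{Introduce} node introducing $v$, we either leave $v$ out of the forest, or put $v$ in as a new singleton block with full degree budget $\min(3,\deg)$ (resp.\ $2$ if $v=s$) --- note $v$ has no edges to $G_t$ yet in a nice decomposition, so no merging happens here. For a \textbf{Forget} node forgetting $v$, we must now commit all edges incident to $v$ that we ever intend to use: we iterate over which subset of $v$'s neighbors inside the bag get an edge to $v$, check the resulting merges do not create a cycle (two endpoints already in the same block) and do not violate degree budgets, decrement budgets accordingly, merge blocks, propagate the $s$-flag, then drop $v$ from the index --- but only if $v$'s component still touches the (new) bag or contains $s$, since a component that is now ``finished'' must be maximal and correct on its own. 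For a \textbf{Join} node with children $t_1,t_2$ sharing bag $X_t$, we combine a table entry from each child whose restrictions to $X_t$ are "compatible": the vertex sets coincide, degree budgets add up without exceeding the cap (each unit of budget spent corresponds to a real edge in exactly one side), and we take the transitive closure of the union of the two partitions, rejecting if this union-find step merges two vertices already connected on the same side (that would be a cycle); the vertex count is added with the shared bag vertices counted once. (3) At the very end, read off the answer from the root node's table over all entries in which the (unique, by the rooting) component containing $s$ is selected and $\deg(s)\le 2$, taking the maximum vertex count.

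For the running time: each bag has size $\le w_G+1$, so the number of partitions of subsets of $X_t$ is at most the Bell number $B_{w_G+1} = w_G^{O(w_G)}$, the degree-budget vector contributes a $3^{w_G+1}$ factor, and the $s$-flag per block at most another $2^{w_G+1}$; thus each table has $w_G^{O(w_G)}$ entries. Introduce and Forget transitions cost $2^{O(w_G)}$ per entry (enumerating which incident bag-edges to use), and Join costs $w_G^{O(w_G)}$ per pair via a union-find merge; with $O(w_G|V(G)|)$ nice-tree-decomposition nodes the total is $w_G^{O(w_G)}|V(G)|$, as claimed. The main obstacle I anticipate is bookkeeping the connectivity/acyclicity invariant correctly across Forget and Join nodes --- in particular, deciding exactly when a component is allowed to ``leave'' the boundary (it must then be final and hence must already be a legal tree, i.e.\ we must not still owe it any edges) and making sure the Join step's partition-union does not silently permit a cycle or double-count vertices; this is the standard subtlety in Steiner-tree-style treewidth DPs, and it needs a careful realizability proof by induction on the tree decomposition in both directions. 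The degree-budget component, which is the genuinely new ingredient relative to the longest-path DP, is comparatively routine: it is just a bounded counter attached to each live boundary vertex, consumed monotonically as edges are committed, and capped at $3$ (or $2$ for $s$).
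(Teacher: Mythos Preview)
Your proposal is correct and follows essentially the same approach as the paper: a bottom-up dynamic program over a nice tree decomposition, with states indexed by a subset of the bag, a partition of that subset recording connected components of the partial forest, and per-vertex degree data, yielding $w_G^{O(w_G)}$ states per node and the claimed running time. The differences from the paper are implementation choices rather than new ideas. The paper uses a nice tree decomposition \emph{with introduce-edge nodes} (so each edge is committed at its own dedicated node), whereas you commit edges at Forget nodes; both are standard. The paper tracks the exact degree $D(v)\in\{0,1,2,3\}$ rather than a residual budget (equivalent up to reparameterization). Instead of your per-block $s$-flag, the paper adds a fresh pendant vertex $s'$ adjacent only to $s$ and places $s'$ in every bag, which forces the constraint $s'\in X$ and hence that the single surviving component at the root contains $s'$; this avoids carrying the flag but increases the width by one. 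Finally, the paper counts edges rather than vertices, which sidesteps the double-counting issue at Join that you (correctly) flag. One small slip: your residual budget should range over $\{0,1,2,3\}$, not $\{0,1,2\}$, since a freshly introduced non-root vertex must be allowed three future incident edges.
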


For the second part, by Theorem~\ref{thm:rbddtw}, given a tree decomposition of $G$ with width $w_G$, we can find a maximum binary tree of $G$ in $w_G^{O(w_G)}|V(G)|^2$ time by solving \rumaxBT for every possible choice of the root vertex in $G$. The time complexity can be brought down to linear as shown in the proof below. Corollary \ref{coro:bdd-MBT} proves Theorem \ref{theorem:bounded-treewidth}.

\begin{corollary}\label{coro:bdd-MBT}
Given a tree decomposition of $G$ with width $w_G$, \umaxBT can be solved in ${w_G}^{O(w_G)}|V(G)|$ time.
\end{corollary}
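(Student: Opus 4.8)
The plan is to avoid the naive $|V(G)|$-fold blow-up that comes from calling the \rumaxBT solver once per candidate root. The key structural observation is that \rumaxBT can be solved on a tree decomposition $\cT$ of width $w_G$ in linear time (Theorem~\ref{thm:rbddtw}), and that the dynamic program underlying that theorem is agnostic to \emph{which} vertex we treat as the root except through the constraint $\deg_T(s)\le 2$. So instead of iterating over roots, I would run a single augmented dynamic program over $\cT$ that, for each bag, additionally tracks whether the (at most one) root vertex of the partial solution has already been committed and which bag vertex it is. Since at most $w_G+1$ vertices live in any bag, the extra state contributes only a $\mathrm{poly}(w_G)$ factor, and since each vertex of $G$ appears in at least one bag, every candidate root is considered in some subproblem; taking the maximum over all leaf/root states of the DP then yields the global optimum. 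This gives run time $w_G^{O(w_G)}|V(G)|$.

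First I would recall that \rumaxBT (Theorem~\ref{thm:rbddtw}) already finds, for a \emph{fixed} root $s$, a maximum binary tree containing $s$ with $\deg_T(s)\le 2$; and that every binary tree $T$ in $G$, rooted arbitrarily at any vertex $r\in V(T)$ of degree at most $2$ in $T$ (such a vertex always exists in a tree with max degree $3$ — e.g.\ a leaf), is a feasible solution to \rumaxBT with root $r$. Hence $\umaxBT(G)=\max_{s\in V(G)}\rumaxBT(G,s)$, which already proves correctness of the quadratic-time algorithm. Second, to remove the extra factor of $|V(G)|$, I would open up the DP from Theorem~\ref{thm:rbddtw}: its table is indexed by a bag node $i$ of $\cT$, a subset of $X_i$ that lies in the partial tree, a partition of that subset into connected components of the partial forest, and degree budgets for the bag vertices; the root enters only as a distinguished vertex whose degree budget is $2$ rather than $3$. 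I would add one coordinate recording ``root status'': either ``root not yet placed in this subtree of $\cT$'' or ``root is vertex $v\in X_i$''. The transitions at leaf / introduce / forget / join nodes are the obvious lifts of the existing ones (when a vertex is introduced it may optionally be declared the root if none is present; forgetting the root vertex is allowed only after its degree is finalized; a join requires the two child subtrees to agree on root status, with at most one of them carrying the root). The final answer is read off at the root of $\cT$ over all states in which a root has been placed.

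The main obstacle I anticipate is bookkeeping rather than conceptual: making sure the ``root status'' coordinate interacts correctly with property~\ref{T3} of tree decompositions, so that the unique bag-subtree containing a given vertex is exactly where that vertex may be nominated as root, and that the degree-$2$ cap on the root is enforced consistently as the root vertex is introduced, used, and forgotten along the decomposition. One must also verify that this extra coordinate multiplies the state space by only $O(w_G)$ (the number of choices of root within a bag, plus the ``not placed'' marker) and the transition time by at most a comparable factor, so that the overall run time remains $w_G^{O(w_G)}|V(G)|$. Given the flexibility of the standard connectivity-tracking DP for trees on bounded-treewidth graphs, none of these steps should present a genuine difficulty, so I would expect the proof of Corollary~\ref{coro:bdd-MBT} to be short, citing Theorem~\ref{thm:rbddtw} and describing the single added coordinate.
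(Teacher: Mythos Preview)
Your approach works and gives the stated bound, but it is genuinely different from the paper's. The paper never opens the dynamic program: it reduces \umaxBT to a \emph{single} call of \rumaxBT by adding a fresh universal vertex $s$ adjacent to all of $V(G)$, attaching a padding binary tree $B$ with $|E(B)|=|E(G)|$ at $s$, and then invoking Theorem~\ref{thm:rbddtw} on the resulting graph $G'$ rooted at $s$. The universal vertex raises the treewidth by exactly one and $B$ has treewidth one, so Theorem~\ref{thm:rbddtw} applies directly; the padding forces one of $s$'s two allowed root-edges into $B$, so the other selects the best anchor $v\in V(G)$, and the subtree below $v$ is the optimal unrooted binary tree in $G$. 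This keeps the corollary to a few lines and treats Theorem~\ref{thm:rbddtw} as a black box, whereas your route trades the gadget for extra DP bookkeeping but avoids having to argue about the padding.

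One small correction on your side: two root-status values are not enough. After the root vertex is forgotten at a drop node you are no longer in the state ``root is $v\in X_i$'', but transitioning back to ``root not yet placed'' would permit a second root to be nominated later; you need a third status ``root already placed and outside the current bag'' (or, equivalently, record the answer at the moment the root is forgotten and the partial forest has collapsed to a single tree). Relatedly, at a join node with the root still in $X_i=X_j=X_k$ \emph{both} children must be in status ``root is $v$'', not ``at most one of them''; the at-most-one rule applies only to the third status. With those routine fixes your argument goes through at the same $w_G^{O(w_G)}|V(G)|$ cost.
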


\begin{proof}
\umaxBT with $G$ as the input graph can be solved by the following reduction to \rumaxBT in a graph whose treewidth is one larger than the treewidth of $G$:
\begin{enumerate}
\item Construct $G'$ by adding a vertex $s$ and a binary tree $B$ such that:
\begin{enumerate}
\item $s$ is adjacent to all the vertices in $V(G)$,
\item $s$ is adjacent to the root of $B$, and
\item $|E(B)| = |E(G)|$.
\end{enumerate}
\item Solve \rumaxBT on $G'$ rooted at $s$ to obtain the maximum binary tree $T'$ of $G'$.
\item Let $v \in V(G)$ be the vertex such that $\{v,s\} \in E(T')$. Pick the subtree of $T'$ rooted at $v$.
\end{enumerate}
This is a generic reduction from \umaxBT to \rumaxBT (but not approximation-preserving). We note that the treewidth of the graph $G'$ is one larger than the treewidth of $G$.
\end{proof}

\subsection{Proof of Theorem \ref{thm:rbddtw}}
We will prove Theorem \ref{thm:rbddtw} in this section. We begin with a convenient form of tree decompositions and associated notations in Section \ref{subsubsec:special-tree-decompositions}. We describe the subproblems of the dynamic program along with the recursive expressions for these subproblems and analyze the run-time in Section \ref{subsubsec:bddtw-dp}. We prove the correctness of the recursive expressions in Section \ref{sec:dp_pf}.

\subsubsection{Special tree decompositions} \label{subsubsec:special-tree-decompositions}
In this subsection, we will introduce a modified form of tree decomposition that will be convenient for our dynamic program. We also introduce the associated notations and prove a crucial edge disjointness property (Lemma \ref{lem:disjoint}). 

Given a tree decomposition $\cT$ with an arbitrarily chosen root, the parent-child and ancestor-descendant relationships between the tree nodes are defined naturally. With these relationships, it will be convenient to think of \emph{nice tree decompositions} (defined below) as rooted trees.

\begin{definition} \label{def:nice-tree-decomposition}
A {\em nice tree decomposition} of an undirected graph $G$ is a tree decomposition $\cT = (T,\{X_t\}_{t \in V(T)})$ where $T$ is rooted at a tree node $r \in V(T)$ such that:

\begin{enumerate}
\item $X_l = \emptyset$ for every leaf tree node $l \in V(T)$ and $X_r = \emptyset$.
\item Every non-leaf tree node $i$ is one of the following four types:
\begin{description}
\item[Introduce vertex:] Tree node $i$ has only one child $j$ with $X_i = X_j \cup \{v\}$ for some $v \in V(G) \setminus X_j$. In this scenario, we say that vertex $v$ is \emph{introduced} at tree node $i$.
\item[Drop:] Tree node $i$ has only one child $j$ with $X_i = X_j \setminus \{v\}$ for some $v \in X_j$. In this scenario, we say that vertex $v$ is \emph{dropped} at tree node $i$.
\item[Join:] Tree node $i$ has two children $j$ and $k$ with $X_i=X_j=X_k$.
\item[Introduce edge:] Tree node $i$ is labeled by an edge $\set{u,v} \in E(G)$ such that both $u$ and $v$ belong to $X_i$, and $i$ has only one child $j$ with $X_i = X_j$. In this scenario, we say that edge $\set{u,v}$ is \emph{introduced} at tree node $i$.
\end{description}
\item Every edge of $G$ is introduced at exactly one tree node.
\item Every edge $\{u,v\}$ is introduced at tree node $i$ which is between tree nodes $j$ and $k$ such that:
\begin{enumerate}
\item Node $j$ is an ancestor of node $k$.
\item Node $j$ either drops $u$ or drops $v$.
\item All tree nodes between node $j$ and node $k$ are of type introduce edge and they introduce edges incident at the vertex that $j$ drops.
\item Node $k$ is not of type introduce edge.
\end{enumerate}
\end{enumerate}
\end{definition}

Given a tree decomposition with width $w_G$, a nice tree decomposition with $O(w_G |V(G)|)$ tree nodes can be computed in $O(w_G^2 \max\{|V(G)|,|V(T)|\})$ time \cite{FPT-book}. We note that by \ref{T3}, each vertex in $V(G)$ is dropped \emph{only once}, but may be introduced several times in a nice tree decomposition.

In order to solve \rumaxBT, we construct $G^s$ from $G$ by adding a pendant vertex $s'$ and edge $\{s,s'\}$. Formally, $G^s:=(V(G) \cup \{s'\}, E(G) \cup \set{\set{s,s'}})$.

\begin{observation}
The maximum binary tree in $G$ rooted at $s$ can be obtained by finding a maximum binary tree in $G^s$ rooted at $s'$ and removing the vertex $s'$ and the edge $\{s,s'\}$ from the tree.
\end{observation}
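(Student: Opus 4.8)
The plan is to prove this observation by establishing a bijection between binary trees in $G$ rooted at $s$ and binary trees in $G^s$ rooted at $s'$, and then checking that this bijection preserves the relevant structure. Recall $G^s$ is obtained from $G$ by adding a pendant vertex $s'$ with the single edge $\{s,s'\}$, and that for the rooted problem a binary tree rooted at a vertex $s$ is required to be connected, acyclic, have every vertex of degree at most $3$ and the root $s$ of degree at most $2$.

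First I would prove the forward direction: given a binary tree $T$ in $G$ rooted at $s$ with $\deg_T(s)\le 2$, form $T' := (V(T)\cup\{s'\},\, E(T)\cup\{\{s,s'\}\})$. Since $s'$ is a leaf attached only to $s$, $T'$ is still connected and acyclic; every original vertex keeps its degree except $s$, whose degree increases by exactly one, so $\deg_{T'}(s)\le 3$ and $\deg_{T'}(s')=1\le 3$; hence $T'$ is a binary tree in $G^s$. Moreover it is rooted at $s'$ in the sense that $\deg_{T'}(s')\le 2$, so it is feasible for \rumaxBT on $G^s$ rooted at $s'$, and $|V(T')| = |V(T)|+1$.

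Next I would prove the reverse direction: let $T'$ be any binary tree in $G^s$ rooted at $s'$ (so $\deg_{T'}(s')\le 2$). The only edge of $G^s$ incident to $s'$ is $\{s,s'\}$, so if $s'\in V(T')$ then $\deg_{T'}(s')\le 1$, and since $T'$ is connected with at least two vertices (it contains $s'$ and, being a nontrivial tree, at least one neighbor) we must have $\deg_{T'}(s')=1$ and $\{s,s'\}\in E(T')$ with $s\in V(T')$. Define $T := T' - s'$, i.e. $T := (V(T')\setminus\{s'\},\, E(T')\setminus\{\{s,s'\}\})$. Removing a leaf keeps the graph connected and acyclic; all degrees are unchanged except $\deg_T(s) = \deg_{T'}(s)-1 \le 3-1 = 2$. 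Thus $T$ is a binary tree in $G$ containing $s$ with $\deg_T(s)\le 2$, i.e. feasible for \rumaxBT on $G$ rooted at $s$, and $|V(T)| = |V(T')|-1$. (The degenerate case where $T'$ is empty corresponds to $T$ empty; one can assume nonempty solutions since $s$ alone is always a valid binary tree rooted at $s$.)

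Finally I would combine the two directions: the maps $T\mapsto T'$ and $T'\mapsto T$ are mutually inverse bijections between the feasible solutions of the two rooted problems, and they shift the vertex count by exactly $1$. Therefore a maximum-size solution on $G^s$ rooted at $s'$ maps to a maximum-size solution on $G$ rooted at $s$, which is exactly the claimed statement. I do not anticipate a real obstacle here — the content is entirely routine degree bookkeeping; the only point requiring a moment's care is the argument that the constraint $\deg_{T'}(s')\le 2$ combined with $s'$ being pendant forces $\{s,s'\}$ into $T'$ (so that the ``$-s'$'' operation genuinely reduces $\deg(s)$ by one and converts the $\le 3$ bound at $s$ into the required $\le 2$ bound), plus the trivial edge case of the empty/singleton tree.
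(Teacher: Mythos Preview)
Your proof is correct; the paper itself states this observation without any proof, so there is no approach to compare against. Your bijection argument is exactly the natural verification one would expect, and your handling of the degree bookkeeping (in particular, the point that the root constraint $\deg_{T'}(s')\le 2$ together with $s'$ being pendant forces $\{s,s'\}\in E(T')$ in any nontrivial $T'$, and the degenerate singleton case) is careful and complete.
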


Now we introduce a special tree decomposition of $G^s$.

\begin{definition}
Given a nice tree decomposition of $G$, an \emph{$s'$-special tree decomposition} of $G^s$ is a tree decomposition obtained by the following steps:
\begin{enumerate}
\item Obtain a nice tree decomposition of $G$ with width $w_G$.
\item Add $s'$ to each bag and insert a tree node that introduces the edge $\set{s,s'}$ between the tree node that drops $s$ and its child.
\end{enumerate}
\end{definition}

The idea behind this tree decomposition is to ensure that every bag contains the new root $s'$, which is useful in the definition of the subproblems for the dynamic program.

For brevity, we will denote the resulting $s'$-special tree decomposition of $G^s$ as $\cT  = (T,\{X_t\}_{t \in V(T)})$. Let $r$ be the root tree node of $T$. Since we will only care about the optimal structure that is stored locally with respect to a tree node, we will use the following notations.

\begin{definition}
For a tree node $i$, let 
\begin{enumerate}
\item $S_i$ denote the set of tree nodes consisting of all its descendants (including $i$),
\item $V_i := \bigcup_{j \in S_i}{X_j}$ be the {\em descendant vertices}, 
\item $E_i := \set{e\in E(G): e \text{ is introduced at some tree nodes in the subtree rooted at $i$}}$ be the {\em descendent edges}, and 
\item $G_i := (V_i, E_i)$.
\end{enumerate}
\end{definition}

With the above notations, we show a \emph{disjointness} property for the join tree nodes.
\begin{lemma} \label{lem:disjoint}
In an $s'$-special tree decomposition, suppose a join tree node $i$ has two children $j$ and $k$ with $X_i=X_j=X_k$. Then $(V_j \setminus X_i) \cap (V_k \setminus X_i) = \emptyset$ and $E_j \cap E_k = \emptyset$.
\end{lemma}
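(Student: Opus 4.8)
The plan is to prove the disjointness property by exploiting the connectivity property (T3) of tree decompositions together with the structure of nice tree decompositions, in particular the fact that each vertex of $G$ is dropped exactly once.

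\textbf{Vertex disjointness.} First I would show that $(V_j \setminus X_i) \cap (V_k \setminus X_i) = \emptyset$. Suppose for contradiction that some vertex $v$ lies in both $V_j \setminus X_i$ and $V_k \setminus X_i$. Since $v \in V_j$, there is a tree node $j' \in S_j$ with $v \in X_{j'}$, and similarly a tree node $k' \in S_k$ with $v \in X_{k'}$. The unique path in $T$ between $j'$ and $k'$ passes through the join node $i$ (since $j$ and $k$ are in different subtrees hanging off $i$), so by the connectivity condition (T3), $v \in X_i$. This contradicts $v \notin X_i$. Note this direction does not even require the nice/$s'$-special structure — it holds for any tree decomposition.

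\textbf{Edge disjointness.} Next I would show $E_j \cap E_k = \emptyset$. Suppose some edge $e = \{u,w\} \in E(G)$ is introduced both at some tree node in the subtree rooted at $j$ and at some tree node in the subtree rooted at $k$. But in a nice tree decomposition, every edge of $G$ is introduced at \emph{exactly one} tree node (condition 3 of Definition~\ref{def:nice-tree-decomposition}), and this property is preserved in the $s'$-special tree decomposition (adding $s'$ to bags and inserting the single node introducing $\{s,s'\}$ does not duplicate any edge introduction). Since a single tree node cannot lie in both the subtree rooted at $j$ and the subtree rooted at $k$ (these subtrees are vertex-disjoint in $T$, as $j$ and $k$ are distinct children of $i$ and neither is a descendant of the other), the edge $e$ cannot be introduced in both subtrees — contradiction. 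Hence $E_j \cap E_k = \emptyset$.

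The main thing to be careful about is ensuring the ``introduced at exactly one tree node'' property genuinely transfers to the $s'$-special tree decomposition; this is essentially immediate from its construction (step 2 of the definition only adds one new edge-introducing node, for the new edge $\{s,s'\}$, and changes no bag's edge-introduction role), so I do not anticipate a real obstacle. Indeed, neither part of the lemma is hard — the proof is a short application of (T3) for the vertex part and of the defining property of nice tree decompositions for the edge part. I would write it as two short paragraphs mirroring the two claims.
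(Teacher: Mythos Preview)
Your proof is correct. The vertex-disjointness argument is identical to the paper's: both invoke (T3) on the path through $i$ between a descendant of $j$ and a descendant of $k$ containing $v$.

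For edge disjointness you take a different and shorter route than the paper. You directly apply condition~3 of Definition~\ref{def:nice-tree-decomposition} (each edge is introduced at exactly one tree node) together with the fact that the subtrees $S_j$ and $S_k$ are disjoint in $T$; this immediately rules out any edge appearing in both $E_j$ and $E_k$. The paper instead argues via condition~4: because edge-introducing nodes sit just below the drop node of one of their endpoints, any edge in $E_j$ must have an endpoint in $V_j\setminus X_i$ (and similarly for $E_k$), and then the already-proved vertex disjointness $(V_j\setminus X_i)\cap(V_k\setminus X_i)=\emptyset$ forces $E_j\cap E_k=\emptyset$. Your argument is more elementary and does not need condition~4 or the first part of the lemma; the paper's argument, on the other hand, establishes the auxiliary fact that edges in $E_j$ cannot have both endpoints in $X_i$, though this fact is not used elsewhere beyond the present lemma. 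Your remark that condition~3 survives the passage to the $s'$-special decomposition is correct and worth stating, since the construction only inserts one new introduce-edge node for the new edge $\{s,s'\}$.
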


\begin{proof}
First, we show that $(V_j \setminus X_i) \cap (V_k \setminus X_i) = \emptyset$. Assume for the sake of contradiction that there is a vertex $v \in (V_j \setminus X_i) \cap (V_k \setminus X_i)$. Then there is a descendent tree node $d_j$ of $j$, such that $v \in X_{d_j} \setminus X_i$. Similarly, there is a descendent tree node $d_k$ of $k$, such that $v \in X_{d_k} \setminus X_i$. From \ref{T3}, the vertex $v$ must belong to the bag of every tree node that is on the unique path between tree nodes $d_j$ and $d_k$, and this includes tree node $i$. Hence, $v \in X_i$, a contradiction.

Next, we show that $E_j \cap E_k = \emptyset$. By Definition~\ref{def:nice-tree-decomposition}, a tree node that drops vertex $v$ is followed by descendant tree nodes that introduce edges with $v$ as an endvertex until a descendant tree node that is not of type introduce edge is reached. Therefore, the edge $\{u,v\}$ where $u \in X_i$ and $v \in X_i$ is not yet included in $E_i$. Consequently, any edge in $E_j$ cannot have both endvertices in $X_j = X_i$, i.e. it must have an endvertex in $V_j \setminus X_j=V_j\setminus X_i$. Similar argument implies that any edge in $E_k$ cannot have both endvertices in $X_k=X_i$, i.e., it must have an endvertex in $V_k\setminus X_k=V_k\setminus X_i$. From $(V_j \setminus X_i) \cap (V_k \setminus X_i) = \emptyset$, we must have $E_j \cap E_k = \emptyset$.
\end{proof}

\subsubsection{The dynamic program} \label{subsubsec:bddtw-dp}

Our algorithm solves subproblems in a bottom-up fashion beginning with the leaf tree nodes. For every tree node $i$ and for all possible $(X,\mathcal{P},D)$, where 
\begin{enumerate}
\item $X \subseteq X_i$, 
\item $\mathcal{P}$ is a partition $\{P_1, P_2, ..., P_q\}$ of $X$ with each part of $\mathcal{P}$ being non-empty, and 
\item $D:X \rightarrow \{0,1,2,3\}$ is a function specifying the degree constraints on the vertices of $X$,  
\end{enumerate}
let $\mbt_i(X,\cP,D)$ be the \emph{maximum number of edges} in a binary forest $F$ satisfying the following five constraints:

\begin{enumerate} [label=(DP\arabic*),align=left]
\item \label{DP1} $F$ is a subgraph of $G_i$,
\item \label{DP2} $s' \in X$,
\item \label{DP3} $X_i \cap V(F) = X$, i.e. the set of vertices in $F$ from $X_i$ is exactly $X$,
\item \label{DP4} $F$ has \emph{exactly} $q$ connected components (trees) $T_1, ..., T_q$ such that $ V(T_p) \cap X = P_p$ for each $p \in \{1,2,...,q\}$, i.e. each part of $\mathcal{P}$ corresponds to a single tree of $F$, and 
\item \label{DP5} $\deg_F(v) = D(v)$ for each $v \in X$, i.e. each vertex $v\in X$ has exactly $D(v)$ edges incident to it in $F$.
\end{enumerate}

A forest is \emph{feasible} for $\mbt_i(X,\cP,D)$ if it satisfies properties \ref{DP1}, \ref{DP2}, \ref{DP3}, \ref{DP4}, and \ref{DP5}. If a forest has the maximum number of edges among the feasible forests, then we say that this forest is \emph{optimal} for $\mbt_i(X,\cP,D)$. If there is no feasible forest for $\mbt_i(X,\mathcal{P},D)$, then we call the subproblem to be \emph{infeasible}. For example, $\mbt_i(X,\mathcal{P},D)$ where $D(v)=0$ for a vertex $v \in X \subseteq X_i$ in a part $P\in \mathcal{P}$ with $|P| \geq 2$ is infeasible. We will define $\mbt_i(X,\mathcal{P},D)=-\infty$ when the subproblem is infeasible.

We show an example in Figure~\ref{fig:bbdtw-dp-ex}. Suppose at tree node $i$, $X_i = \{v_1, ..., v_6\}$, $V_i = \{v_1, ..., v_{10}\}$, and $G_i=(V_i,E_i)$ where $E_i$ consists of solid and dashed edges. All the edges in $E_i$ are introduced in the subtree rooted at $i$ where the solid edges are selected in the binary forest $F$ while the dashed edges are not.  Suppose the given input for tree node $i$ is $(X,P,D)$ where $X = \{v_1, ..., v_5\}$, $P = \set{\set{v_1,v_2},\set{v_3,v_5},\set{v_4}}$, $D(v_1)=D(v_2)=D(v_5)=1$, $D(v_4)=0$, and $D(v_3)=2$. In this case $OPT(i,X,P,D)=6$.

\begin{figure}[h]
\centering
\begin{tikzpicture}
\draw (0,0) ellipse (200pt and 25pt);
\node[vertex] (v1) at (-5,0) {$v_1$};
\node[vertex] (v2) at (-3,0) {$v_2$};
\node[vertex] (v3) at (-1,0) {$v_3$};
\node[vertex] (v4) at (1,0) {$v_4$};
\node[vertex] (v5) at (3,0) {$v_5$};
\node[vertex] (v6) at (5,0) {$v_6$};
\node[vertex] (v7) at (-4,-1.5) {$v_7$};
\node[vertex] (v8) at (-2,-1.5) {$v_8$};
\node[vertex] (v9) at (2,-1.5) {$v_9$};
\node[vertex] (v10) at (4,-1.5) {$v_{10}$};
\node at (-6,0) {$i$};
\node at (0,0) {$X$};
\draw[dashed] (-5.5,-0.5) -- (-5.5,0.5)
-- (3.5,0.5) -- (3.5,-0.5) --cycle;
-- (5.5,0.5) -- (5.5,-0.5) --cycle;
\path
	(v1) edge (v7)
	(v2) edge (v7)
	(v3) edge (v8)
	(v3) edge (v9)
	(v5) edge (v9)
	(v9) edge (v10);
\path[dashed]
	(v2) edge (v8)
	(v4) edge (v9)
	(v7) edge (v8)
	(v6) edge (v10);
\end{tikzpicture}
\caption{$\mbt_i(X,\cP,D)$}
\label{fig:bbdtw-dp-ex}
\end{figure}

We recall that $r$ is the root of the $s'$-special tree decomposition of $G^s$. The number of edges in the maximum binary tree of $G^s$ rooted at $s'$ is exactly $\mbt_r(X=\{s'\},\mathcal{P}=\set{\set{s'}},D(s')=1)$. We now provide the recurrence relations for the dynamic programming algorithm for each type of tree node.\\

\noindent \textbf{Leaf:} Suppose $i$ is a leaf tree node. Then, 
\begin{equation*} \tag{DP-L} \label{eq:DP-L}
\mbt_i(X,\mathcal{P},D)=
\begin{cases}
0 &\text{if $X=\set{s'}, \cP=\set{\set{s'}}$, and $D(s')=0$,}\\
-\infty &\text{otherwise}.
\end{cases}
\end{equation*}
\mbox{}

\noindent \textbf{Introduce vertex:} Suppose $i$ introduces vertex $v$ and $j$ is the child of $i$ with $X_i=X_j\cup \{v\}$ for some $v\in V(G)\setminus X_j$. Consider the following conditions:
\begin{enumerate} [label=(IntroVertex \arabic*),align=left]
\item \label{IV1} $v\in X$ and either $\{v\}$ is not a part in $\cP$ or $D(v)>0$.
\item \label{IV2} $v \in X$, $\set{v}$ is a part in $\cP$, and $D(v)=0$.
\end{enumerate}
For $D:X\rightarrow \{0,1,2,3\}$, let $D':X\setminus \{v\}\rightarrow \{0,1,2,3\}$ be obtained by setting $D'(u)=D(u)$ for every $u\in X\setminus \{v\}$. Then, 
\begin{equation} \tag{DP-IV} \label{eq:DP-IV}
\mbt_i(X,\cP,D) =
\begin{cases}
-\infty & \text{if \ref{IV1},}\\
\mbt_j(X \setminus \set{v},\cP \setminus \set{\set{v}}, D') & \text{if \ref{IV2},}\\
\mbt_j(X,\cP,D) & \text{otherwise.}\\
\end{cases}
\end{equation}
\mbox{}

\noindent \textbf{Introduce edge:} Suppose $i$ introduces edge $\set{u,v}$ and $j$ is the child of $i$ with $X_i=X_j$. Consider the following condition:

\begin{enumerate} [label=(IntroEdge\arabic*),align=left]
\item \label{IE1} $u$ and $v$ are both in $X$ and in the same part $P\in \cP$.
\end{enumerate}

Suppose \ref{IE1} holds. Then let $\cP'$ be a partition of $X$ obtained from $\cP$ by splitting $P$ into two disjoint sets $P_u$ and $P_v$ such that $u \in P_u$ and $v \in P_v$, i.e. $\cP' = \cP \setminus \{P\} \cup \{P_u, P_v\}$, and $D':X\rightarrow \{0,1,2,3\}$ be obtained by setting $D'(w) = D(w)$ for $w \in X \setminus \set{u,v}$, $D'(u) = D(u)-1 \leq 3$, $D(u) \geq 1$, $D'(v) = D(v)-1 \leq 3$, and $D(v) \geq 1$. Then
\begin{equation} \tag{DP-IE} \label{eq:DP-IE}
\mbt_i(X,\cP,D) =
\begin{cases}
\max\{
\maxx_{\cP',D'}{\mbt_j(X,\cP',D')+1},
\mbt_j(X,\cP,D)\} & \text{if \ref{IE1},}\\
\mbt_j(X,\cP,D) & \text{otherwise.}
\end{cases}
\end{equation}

\noindent \textbf{Drop:} Suppose $i$ drops vertex $v$ and $j$ is the child of $i$ with $X_i = X_j \setminus \{v\}$ for some $v \in X_j$. Let $\cP'$ be a partition obtained by adding $v$ to one of the existing parts of $\cP$ and $D':X\cup\set{v} \rightarrow \set{0,1,2,3}$ is obtained by setting $D'(u)=D(u)$ for $u \in X$ and $D'(v)$ is set to some value in $\{1,2,3\}$. Then
\begin{equation} \tag{DP-D} \label{eq:DP-D}
\mbt_i(X,\cP,D) = 
\max\left\{
\maxx_{\cP',D'}\left\{\mbt_j(X \cup \set{v},\cP',D')\right\},
\mbt_j(X,\cP,D)
\right\}.
\end{equation}
\mbox{}

\noindent \textbf{Join:} Suppose $i$ has two children $j$ and $k$ with $X_i=X_j=X_k$. 

For arbitrary partitions $\cP^j, \cP^k$ of $X$, we obtain a graph $H(\cP^j,\cP^k)$ and a partition $\mathcal{Q}(\cP^j,\cP^k)$ as follows: for $b \in \{j,k\}$ we first construct an \emph{auxiliary graph} $H_b$ that contains vertices for every vertex in $X$ and vertices for every part in $\cP^b$, with the vertex corresponding to a part of $\cP^b$ being adjacent to all vertices that are in that part. We note that $V(H_j)\cap V(H_k)=X$ and $E(H_j)\cap E(H_k)=\emptyset$. We consider the \emph{merged graph} $H(\cP^j,\cP^k) := (V(H_j) \cup V(H_k), E(H_j) \cup E(H_k))$. Let $\mathcal{Q'}$ be a partition of $V(H(\cP^j,\cP^k))$ where vertices in the same connected component of $H$ belong to the same part of $\mathcal{Q'}$. Let $\mathcal{Q}(\cP^j, \cP^k)$ be the partition of $X$ obtained from $\mathcal{Q}'$ by restricting each part of $\mathcal{Q}'$ to elements in $X$. Namely, $\mathcal{Q}(\cP^j, \cP^k) = \{Q \cap X \mid Q \in \mathcal{Q'} \}$. We note that $\mathcal{Q}(\cP^j,\cP^k)$ can be found efficiently by running a depth first search on $H(\cP^j,\cP^k)$.

Let $(X,\cP,D)$ be the input at tree node $i$. Let $\cP^j$ and $\cP^k$ be partitions such that the graph $H(\cP^j,\cP^k)$ is a forest and the resulting partition $\mathcal{Q}(\cP^j,\cP^k)=\cP$. Let $D^j, D^k:X \rightarrow \set{0,1,2,3}$ be functions such that $D(v) = D^j(v)+D^k(v)$ for every $v \in X$. Then
\begin{equation} \tag{DP-J} \label{eq:DP-J}
\mbt_i(X,\cP,D) =
\begin{cases}
\maxx_{\cP^j,\cP^k,D^j,D^k}{\mbt_j(X,\cP^j,D^j)+\mbt_k(X,\cP^k,D^k),} \\
-\infty \text{ if there are no such $\cP^j,\cP^k,D^j$, and $D^k$.}
\end{cases}
\end{equation}

We show an example in Figure \ref{fig:join}. Suppose $X = \set{v_1,...,v_6}$, $\cP^j = \set{\set{v_1,v_2},\set{v_3,v_4},\set{v_5},\set{v_6}}$, and $\cP^k = \{\set{v_1},\{v_2,v_3,v_5\},\set{v_4},\set{v_6}\}$, the merged graph $H(\cP^j,\cP^k)$ is a forest. The resulting partition is $\{\set{v_1,v_2,v_3,v_4,v_5},\set{v_6}\}$.

\begin{figure}[h]
\centering
\begin{tikzpicture}
\draw (0,0) ellipse (220pt and 25pt);
\node[vertex] (v1) at (-5,0) {$v_1$};
\node[vertex] (v2) at (-3,0) {$v_2$};
\node[vertex] (v3) at (-1,0) {$v_3$};
\node[vertex] (v4) at (1,0) {$v_4$};
\node[vertex] (v5) at (3,0) {$v_5$};
\node[vertex] (v6) at (5,0) {$v_6$};
\node[vertex] (v7) at (-4,1.5) {};
\node[vertex] (v8) at (0,1.5) {};
\node[vertex] (v9) at (0,-1.5) {};
\draw[dashed] (-5.6,-0.5) -- (-5.6,2)
-- (5.5,2) -- (5.5,-0.5) --cycle;
\draw[dashed] (-5.5,-2) -- (-5.5,0.5)
-- (5.6,0.5) -- (5.6,-2) --cycle;
\node at (-2,1.5) {$H_j$};
\node at (2,-1.5) {$H_k$};
\node at (-6,0) {$i$};
\path
	(v1) edge (v7)
	(v2) edge (v7)
	(v3) edge (v8)
	(v4) edge (v8)
	(v2) edge (v9)
	(v3) edge (v9)
	(v5) edge (v9);
\end{tikzpicture}
\caption{$H(\cP^j,\cP^k)$}
\label{fig:join}
\end{figure}

When the root tree node $r$ is reached, the maximum number of edges of the binary tree in $G^s$ is $\mbt_r(X=\{s'\},P=\{\{s'\}\},D(s')=1)$. To find the optimal binary tree, we can simply backtrack through the subproblems to obtain an optimal solution.

This concludes the description of the algorithm. The proof of correctness is in Section~\ref{sec:dp_pf}. We proceed to analyze the running time. The bag of each tree node has at most $w_G+2$ vertices, so the number of subproblems per tree node is at most $(8w_G + 16)^{w_G + 2} = {w_G}^{O(w_G)}$, since for a tree node $i$ there are $2^{|X_i|}$ subsets $X \subseteq X_i$, at most $|X|^{|X|}$ partitions of $X$, and at most $4^{|X|}$ degree requirements. Solving a subproblem at a tree node requires considering at most all states from the children of that tree node, which takes $({w_G}^{O(w_G)})^2 = {w_G}^{O(w_G)}$ time. Thus, the running time for computing all the $\mbt$ values of a tree node is ${w_G}^{O(w_G)}$. We have thus proved Theorem~\ref{thm:rbddtw}.

\rbddtw*

\subsubsection{Correctness of the dynamic program} \label{sec:dp_pf}

We prove the correctness of the recursion expressions in the dynamic program described in the previous section.

\begin{lemma}
$\mbt_i(X,\cP,D)$ satisfies the recurrence relations given in \ref{eq:DP-L}, \ref{eq:DP-IV}, \ref{eq:DP-IE}, \ref{eq:DP-D}, and \ref{eq:DP-J}.
\end{lemma}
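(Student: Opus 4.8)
The plan is to prove each of the five recurrences by establishing two inequalities: a ``$\ge$'' direction, where we exhibit a forest feasible for $\mbt_i(X,\cP,D)$ whose number of edges equals the stated right-hand side (built by combining optimal forests for the relevant child subproblems), and a ``$\le$'' direction, where we take an arbitrary forest $F$ feasible for $\mbt_i(X,\cP,D)$, restrict it to the descendant subgraph(s) $G_j$ (and $G_k$), read off the induced data $(X',\cP',D')$ of the restriction, and bound the edge count of $F$ by the corresponding child value. Throughout, we will repeatedly use three structural facts about the $s'$-special tree decomposition: (i) when a vertex $v$ is introduced at tree node $i$, no edge incident to $v$ lies in $E_i$, so $v$ is isolated in $G_i$; (ii) every edge of $G^s$ is introduced at exactly one tree node, so $G_i$ only contains edges introduced within the subtree $S_i$; and (iii) the disjointness property of Lemma~\ref{lem:disjoint} at join nodes, namely $E_j\cap E_k=\emptyset$ and $(V_j\setminus X_i)\cap(V_k\setminus X_i)=\emptyset$.

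The leaf, introduce-vertex, introduce-edge, and drop cases are routine once (i) and (ii) are in place. For \ref{eq:DP-L}, $G_i$ is the single isolated vertex $s'$, so the empty forest is the only candidate and it is feasible exactly when $X=\{s'\}$, $\cP=\{\{s'\}\}$, $D(s')=0$. For \ref{eq:DP-IV}, fact (i) forces $v$ (if present in $F$) to be isolated, hence $D(v)=0$ and $\{v\}$ a singleton part; so the subproblem is infeasible under \ref{IV1}, it reduces to the child with $v$ deleted under \ref{IV2}, and otherwise ($v\notin X$) the usable subgraphs of $G_i$ and $G_j$ coincide. For \ref{eq:DP-IE}, a feasible $F$ either omits the new edge $e=\{u,v\}$ (equivalently, $F$ is feasible for $\mbt_j(X,\cP,D)$), or contains it, in which case $u,v$ share a part $P$ of $\cP$ and $F-e$ is feasible for $\mbt_j(X,\cP',D')$ for the split $P=P_u\cup P_v$ of that part into the two sub-trees and with degrees decremented at $u,v$; conversely, adding $e$ back to an optimum over all valid splits yields a forest feasible for $\mbt_i$. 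For \ref{eq:DP-D}, since a drop node has $G_i=G_j$ and only changes the bag, a feasible $F$ either excludes $v$ from $V(F)$ (feasible for $\mbt_j(X,\cP,D)$) or includes it; in the latter case we may assume $\deg_F(v)\ge 1$, as an isolated $v$ contributes no edges and can be dropped, and then $F$ is feasible for $\mbt_j(X\cup\{v\},\cP',D')$ with $v$ placed into the part of its component and $D'(v)=\deg_F(v)\in\{1,2,3\}$; maximizing over the placement $\cP'$ and the value $D'(v)$ gives the recurrence, and the converse direction simply re-adds $v$ together with its (already introduced) incident edges.

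The main obstacle is the join case \ref{eq:DP-J}. For a forest $F$ feasible for $\mbt_i(X,\cP,D)$, set $F_j:=F\cap G_j$ and $F_k:=F\cap G_k$; by Lemma~\ref{lem:disjoint} these are edge-disjoint and share exactly the vertex set $X\cap V(F)=X$, so $F=F_j\cup F_k$ is glued only along $X$. Let $(\cP^j,D^j)$ and $(\cP^k,D^k)$ be the partition-into-trees and degree data induced at the two children. Then $D=D^j+D^k$ pointwise on $X$; the auxiliary graph $H_b$ records the connectivity among $X$-vertices within $F_b$ ($b\in\{j,k\}$), so in the merged graph $H(\cP^j,\cP^k)$ two vertices of $X$ are connected iff they lie in a common tree of $F$, i.e. $\mathcal{Q}(\cP^j,\cP^k)=\cP$; and because $H(\cP^j,\cP^k)$ is the picture of $F$ obtained by contracting each tree of $F_j$ and of $F_k$ to a single node, a cycle in $H(\cP^j,\cP^k)$ would lift to a cycle in the acyclic graph $F$, so $H(\cP^j,\cP^k)$ is a forest. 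This yields $\mbt_i(X,\cP,D)\le\mbt_j(X,\cP^j,D^j)+\mbt_k(X,\cP^k,D^k)$ for the choices $(\cP^j,D^j),(\cP^k,D^k)$ read off from $F$. Conversely, given any $\cP^j,\cP^k,D^j,D^k$ satisfying the recurrence's conditions together with optimal forests $F_j,F_k$ for the corresponding child subproblems, disjointness makes $F_j\cup F_k$ a subgraph of $G_i$ meeting $X_i$ in exactly $X$, the forest-ness of $H(\cP^j,\cP^k)$ makes it acyclic, $\mathcal{Q}(\cP^j,\cP^k)=\cP$ makes its components induce $\cP$ on $X$, and $D^j+D^k=D$ (with the children's feasibility) enforces \ref{DP5} and the bound $\deg\le 3$ everywhere; hence $F_j\cup F_k$ is feasible for $\mbt_i(X,\cP,D)$, establishing the reverse inequality. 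The two points I expect to demand the most care are the exact equivalence ``$H(\cP^j,\cP^k)$ is a forest $\iff$ gluing $F_j$ and $F_k$ along $X$ stays acyclic'' and the correspondence between the connected components of $H(\cP^j,\cP^k)$ and the parts of the merged partition $\mathcal{Q}$; once these are pinned down, the remaining bookkeeping is mechanical.
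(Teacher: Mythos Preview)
Your proposal is correct and follows essentially the same approach as the paper: both argue each recurrence by matching upper and lower bounds, use that a newly introduced vertex is isolated in $G_i$, and handle the join case by splitting an optimal $F$ into $F_j=F\cap G_j$ and $F_k=F\cap G_k$ via Lemma~\ref{lem:disjoint} and then relating the acyclicity/connectivity of $H(\cP^j,\cP^k)$ to that of $F$ through a path-correspondence. Two minor points to tighten: in the drop case, $\deg_F(v)\ge 1$ is actually forced by feasibility (an isolated $v\notin X$ would create a component disjoint from $X$, violating \ref{DP4}), not merely by optimality; and your ``contraction'' description of $H(\cP^j,\cP^k)$ is a good heuristic but, as you anticipate, needs exactly the alternating-path bijection the paper writes out to become a proof.
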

\begin{proof}
The proof consists of showing two parts: If $\mbt_i(X,\cP,D)$ is feasible, then the right hand side (RHS) of the recurrence formula should be the same as the left hand side (LHS), or equivalently, RHS is an upper bound and lower bound of LHS. In addition, if $\mbt_i(X,\cP,D)$ is infeasible, then the recurrence should return $\mbt_i(X,\cP,D)=-\infty$. We will show the upper and lower bound and the infeasible case with a proof by induction depending on the tree node type.\\

\noindent \textbf{Leaf:} For the base case, i.e. when $i$ is a leaf tree node, the only vertex in $G_i$ is $s'$. The only input for which $\mbt_i(X,\cP,D)$ is feasible is $X=\{s'\},\cP=\set{\set{s'}}$, and $D(s')=0$. For the feasible input, the node $s'$ must be included in the optimal binary forest $F$ otherwise constraint \ref{DP2} will be violated. Since no edges have been introduced in the subtree rooted at tree node $i$, the degree of $s'$ in the optimal binary forest $F$ will be zero. Therefore, $\mbt_i(X,\cP,D) = 0$. If $\mbt_i(X,\cP,D)$ is infeasible, then it is set to $-\infty$.\\

Suppose that $\mbt_j(X,\cP,D)$ is correct for all possible inputs $(X,\cP,D)$ for all children $j$ of a non-leaf tree node $i$. We will show that $\mbt_i(X,\cP,D)$ computed using the recurrence relation based on its children is correct. We consider the various types for tree node $i$.\\

\noindent \textbf{Introduce vertex:}
Suppose that $i$ has one child $j$ with $X_i = X_j \cup \{v\}$ for some $v \in V(G) \setminus X_j$. We recall that \emph{none} of the edges incident to $v$ have been introduced in the subtree rooted at $i$, so the vertex $v$ is isolated in $G_i$, i.e. $\deg_{G_i}(v)=0$. 
Suppose that $\mbt_i(X,\cP,D)$ is feasible with $F$ being an optimal binary forest. We have two cases:

\emph{Case 1:} Suppose that $v \in X$. Then $\set{v}$ must be a part of $\cP$ and $D(v)=0$ (i.e., condition \ref{IV2} is satisfied), otherwise $\mbt_i(X,\cP,D)$ is infeasible. If \ref{IV2} holds, then $(V(F) \setminus \{v\}, E(F))$ is  a feasible solution for $\mbt_j(X\setminus \{v\},\cP \setminus \{\{v\}\},D')$, where $D':X\setminus \{v\}\rightarrow \{0,1,2,3\}$ is obtained by setting $D'(u)=D(u)$ for every $u\in X\setminus \{v\}$. Hence, $\mbt_i(X,\cP,D) \leq \mbt_j(X \setminus \{v\},\cP \setminus \{\{v\}\},D')$. Since there is a feasible forest $F_j$ for $\mbt_j(X\setminus \{v\},\cP \setminus \{\{v\}\},D')$, a feasible forest for $\mbt_i(X,\cP,D)$ can be obtained by adding an isolated vertex $v$ to $V(F_j)$. Therefore, $\mbt_i(X,\cP,D) \geq \mbt_j(X \setminus \{v\},\cP \setminus \{\{v\}\},D')$.

\emph{Case 2:} Suppose $v \notin X$. Then $v$ is not in $F$. The forest $F$ is a feasible solution for $\mbt_j(X,\cP,D)$. Hence, $\mbt_i(X,\cP,D) \leq \mbt_j(X,\cP,D)$. Since there is a feasible forest $F_j$ for $\mbt_j(X,\cP ,D)$, and $F_j$ is also feasible for $\mbt_i(X,\cP,D)$, we have $\mbt_i(X,\cP,D) \geq \mbt_j(X,\cP,D)$.

Next, suppose that $\mbt_i(X,\cP,D)$ is infeasible. If \ref{IV1} holds, then we are done. Suppose \ref{IV1} fails, but \ref{IV2} holds. In this case, $\mbt_j(X\setminus \{v\},\cP \setminus \{\{v\}\},D')$, where $D':X\setminus \{v\}\rightarrow \{0,1,2,3\}$ is obtained by setting $D'(u)=D(u)$ for every $u\in X\setminus \{v\}$, is infeasible. Hence, $\mbt_i(X,\cP,D)=\mbt_j(X\setminus \{v\},\cP \setminus \{\{v\}\},D')=-\infty$. Suppose that both \ref{IV1} and \ref{IV2} fail. In this case, $\mbt_j(X,\cP,D)$ is infeasible. Hence, $\mbt_i(X,\cP,D)=\mbt_j(X,\cP,D)=-\infty$.\\

\noindent \textbf{Introduce edge:}
Suppose $i$ introduces edge $\set{u,v}$ and has one child $j$ such that $X_i = X_j$. Suppose that $\mbt_i(X,\cP,D)$ is feasible with $F$ being an optimal binary forest. We have two cases:

\emph{Case 1:} Suppose $u$ and $v$ are both in $X$ and also in the same part of $\cP$ (i.e., condition \ref{IE1} is satisfied). Then, the forest $F$ can either include or exclude $\{u,v\}$.

If $\{u,v\}$ is excluded, then $F$ is also feasible for $\mbt_j(X,\cP,D)$ and hence $\mbt_i(X,\cP,D) \leq \mbt_j(X,\cP,D)$. There is a feasible forest for $\mbt_j(X,\cP ,D)$ which is also feasible for $\mbt_i(X,\cP,D)$. Therefore, $\mbt_i(X,\cP,D) \geq \mbt_j(X,\cP,D)$.

If $\{u,v\}$ is included, then $\cP$ must have a part $P$ that can be partitioned into two disjoint parts $\cP_u$ and $\cP_v$ where $u \in \cP_u$ and $v \in \cP_v$. Hence, the forest $(V(F) \setminus \{u,v\}, E(F) \setminus \{\{u,v\}\})$ is also feasible for $\mbt_j(X,\cP',D')$ for some $\cP' = \cP \setminus \{P\} \cup \{P_u, P_v\}$ with $P=P_u\cup P_v$, and $D':X\rightarrow \{0,1,2,3\}$ obtained by setting $D'(w) = D(w)$ for $w \in X \setminus \set{u,v}$, $D'(u) = D(u)-1$, and $D'(v) = D(v)-1$. Consequently, $\mbt_i(X,\cP,D) \le \maxx_{\cP',D'}{\mbt_j(X,\cP',D')+1}$. On the other hand, a feasible forest for $\mbt_i(X,\cP,D)$ can be obtained by adding edge $\{u,v\}$ to a feasible forest for $\mbt_j(X,\cP',D')$, so $\mbt_i(X,\cP,D) \ge \maxx_{\cP',D'}{\mbt_j(X,\cP',D')+1}$.

\emph{Case 2:} Otherwise, either (1) $u \notin X$ or $v \notin X$, or (2) $u$ and $v$ are both in $X$ but not in the same part of $\cP$. In this case $\{u,v\}$ cannot be in $F$. Hence, the forest $F$ is a feasible solution for the subproblem $\mbt_j(X,\cP,D)$ and hence $\mbt_i(X,\cP,D) \leq \mbt_j(X,\cP,D)$. There is a feasible forest for $\mbt_j(X,\cP ,D)$ which is also feasible for $\mbt_i(X,\cP,D)$. Therefore, $\mbt_i(X,\cP,D) \geq \mbt_j(X,\cP,D)$.

Next, suppose that $\mbt_i(X,\cP,D)$ is infeasible. Then, $\mbt_j(X,\cP,D)$ is also infeasible and hence, $\mbt_j(X,\cP,D)=-\infty$. We consider \ref{IE1} now: In this case, for every partition $\cP' = \cP \setminus \{P\} \cup \{P_u, P_v\}$ with $P=P_u\cup P_v$, and for $D':X\rightarrow \{0,1,2,3\}$ obtained by setting $D'(w) = D(w)$ for $w \in X \setminus \set{u,v}$, $D'(u) = D(u)-1$, and $D'(v) = D(v)-1$, we have that $\mbt_j(X,\cP',D')$ is infeasible. Therefore $\maxx_{\cP',D'}{\mbt_j(X,\cP',D')}=-\infty$.\\

\noindent \textbf{Drop:}
Suppose $i$ drops $v$ and has a child $j$ such that $X_i = X_j \setminus \{v\}$ for some $v \in X_j$. Suppose that $\mbt_i(X,\cP,D)$ is feasible with $F$ being an optimal binary forest.

If $F$ excludes $v$, then $F$ is also a feasible solution for $\mbt_j(X,\cP,D)$ and hence $\mbt_i(X,\cP,D) \leq \mbt_j(X,\cP,D)$. There is a feasible forest for $\mbt_j(X,\cP ,D)$ which is also feasible for $\mbt_i(X,\cP,D)$. Therefore, $\mbt_i(X,\cP,D) \geq \mbt_j(X,\cP,D)$.

Suppose that $F$ includes $v$, then the forest $F$ is a feasible solution for some $\mbt_j(X \cup \set{v},\cP',D')$, where $\cP'$ is a partition obtained by adding $v$ to one of the existing parts of $\cP$ and $D':X\cup\set{v} \rightarrow \set{0,1,2,3}$ is obtained by setting $D'(u)=D(u)$ for $u \in X$ and $D'(v)$ is set to some value in $\{1,2,3\}$. Hence, $\mbt_i(X,\cP,D)\le \maxx_{\cP',D'}{\mbt_j(X \cup \set{v},\cP',D')}$. On the other hand, a feasible forest for $\mbt_i(X,\cP,D)$ can be obtained by selecting a feasible forest for $\mbt_j(X,\cP',D')$, so $\mbt_i(X,\cP,D) \ge \maxx_{\cP',D'}{\mbt_j(X \cup \set{v},\cP',D')}$.

Next, suppose that $\mbt_i(X,\cP,D)$ is infeasible. Then, $\mbt_j(X,\cP,D)$ is also infeasible and hence $\mbt_j(X,\cP,D)=-\infty$. Moreover, for every partition $\cP'$ obtained by adding $v$ to one of the existing parts of $\cP$ and every $D':X\cup\set{v} \rightarrow \set{0,1,2,3}$ obtained by setting $D'(u)=D(u)$ for $u \in X$ and every choice of $D'(v)\in \{1,2,3\}$, there is no feasible forest for $\mbt_j(X\cup \set{v},\cP',D')$. Hence, $\maxx_{\cP',D'}{\mbt_j(X \cup \set{v},\cP',D')}=-\infty$.\\

\noindent \textbf{Join:}
Suppose $i$ has two children $j$ and $k$ and $X_i=X_j=X_k$. Suppose that $\mbt_i(X,\cP,D)$ is feasible with $F$ being an optimal binary forest.

First we show that LHS $\leq$ RHS. By Lemma~\ref{lem:disjoint}, the edges in $F$ are from disjoint edge sets $E_j$ and $E_k$. Hence, the forest $F$ decomposes into two binary forests $F_j:=(V(F) \cap V_j ,E(F)\cap E_j)$ and $F_k:=(V(F) \cap V_k ,E(F)\cap E_k)$. In particular $F_j$ and $F_k$ are subgraphs of $G_j$ and $G_k$, respectively. Let $X^j := V(F_j) \cap X_j$ be the set of vertices in $X_j$ that belong to $F_j$ and similarly, let $X^k:=V(F_k)\cap X_k$.

\begin{claim} \label{cl:sameX}
$X^j = X^k = X$.
\end{claim}
\begin{proof}
We will show that $X^j=X$. The proof for $X^k=X$ will also follow by the same argument. 

We first have that $X \subseteq X^j$ because $X \subseteq V(F)$ and $X \subseteq X_i = X_j \subseteq V_j$ imply $X \subseteq V(F) \cap V_j \cap X_j = X^j$. Next, suppose that there exists a vertex $v \in X^j \setminus X$. Then $v \in X_j \setminus X = X_i \setminus X$. Besides, $v \in V(F_j) \subseteq V(F)$. However, these two statements violate \ref{DP3} so $X^j \subseteq X$.
\end{proof}

We show that there exist partitions $\cP^j$ and $\cP^k$ of $X$, degree requirements $D^j$, $D^k: X\rightarrow \{0,1,2,3\}$, and a decomposition of the forest $F$ into $F^j$ and $F^k$ such that
\begin{enumerate} [label=(J\arabic*)]
\item \label{J2} the merged graph $H(\cP^j,\cP^k)$ is a forest,
\item \label{J3} the partition $\mathcal{Q}(\cP^j,\cP^k)$ is exactly $\cP$,
\item \label{J4} $D^j(v) + D^k(v) = D(v)$ for $v \in X$, and
\item \label{J1} $F_j$ is feasible for $\mbt_j(X,\cP^j,D^j)$ and $F_k$ is feasible for $\mbt_k(X,\cP^k,D^k)$.
\end{enumerate}

Let $b \in \{j,k\}$. We obtain the partition $\cP^b$ of $X$ as follows: vertices of $X$ in the same connected component of $F_b$ will be in the same part of $\cP^b$, i.e., if $\mathcal{R}'_b$ is a partition of $V(F_b)$ based on the connected components of $F_b$, then $\cP^b=\{R'\cap X| R'\in \mathcal{R}_b'\}$.

We recall that $H_b$ is an auxiliary graph that contains vertices for every vertex in $X$ and vertices for every part in $\cP^b$, with the vertex corresponding to a part of $\cP^b$ being adjacent to all vertices that are in that part. The merged graph $H(\cP^j,\cP^k) := (V(H_j) \cup V(H_k), E(H_j) \cup E(H_k))$ where $V(H_j)\cap V(H_k)=X$ is such that $E(H_j)\cap E(H_k)=\emptyset$. Let $u, v \in X$. For brevity, we denote \emph{$u$-$v$ path} as a path between $u$ and $v$. From the fact that (1) $u$ and $v$ are in the same part of $\cP^j$ (or $\cP^k$) if and only if their is a unique $u$-$v$ path in $F_j$ (or $F_k$) and (2) $u$ and $v$ are in the same part of $\cP^j$ (or $\cP^k$) if and only if their is a unique $u$-$v$ path in $H_j$ (or $H_k$), we have the following observation:

\begin{observation}\label{obs:path}
Let $u, v \in X$, then there is a unique $u$-$v$ path in $F_j$ (or $F_k$) if and only there is a unique $u$-$v$ path in $H_j$ (or $H_k$).
\end{observation}

\begin{claim}
The merged graph $H(\cP^j,\cP^k)$ satisfies \ref{J2}. 
\end{claim}
\begin{proof}
The proof consists of showing two parts: (1) if $u, v \in X$ and there is a $u$-$v$ path in $H(\cP^j,\cP^k)$, then this path is unique, and (2) there is no cycle in $H(\cP^j,\cP^k)$. We will use the notation $u - F_j - v$ to denote that there is a unique $u$-$v$ path in $F_j$ and such a path may have intermediary vertices from $X \setminus \{u,v\}$. We will use similar notations for $F_k$, $H_j$, and $H_k$.

Let $u, v \in X$. Suppose that there is a $u$-$v$ path in $H(\cP^j,\cP^k)$. Since $E(H_j) \cap E(H_k) = \emptyset$ and $(V(H_j) \setminus X) \cap (V(H_k) \setminus X) = \emptyset$, the path must be one of the following:
\begin{enumerate}
\item $u - H_j - w_1 - H_k - w_2 - ... - w_{n-1} - H_k - w_n - H_j - v$,
\item $u - H_j - w_1 - H_k - w_2 - ... - w_{n-1} - H_j - w_n - H_k - v$,
\item $u - H_k - w_1 - H_j - w_2 - ... - w_{n-1} - H_k - w_n - H_j - v$, or
\item $u - H_k - w_1 - H_j - w_2 - ... - w_{n-1} - H_j - w_n - H_k - v$
\end{enumerate}
where $w_1, ..., w_n \in X \setminus \{u,v\}$. In short, a $u$-$v$ path in $H(\cP^j,\cP^k)$ must alternate edges between $H_j$ and $H_k$. By Observation~\ref{obs:path}, if the $u$-$v$ path in $H(\cP^j,\cP^k)$ is the first case, then there is a $u$-$v$ path $u - F_j - w_1 - F_k - w_2 - ... - w_{n-1} - F_k - w_n - F_j - v$ in $F$. The argument is similar for the other three cases. We note that the mapping of a $u$-$v$ path in $H(\cP^j,\cP^k)$ to a $u$-$v$ path in $F$ is a bijection. If there are two $u$-$v$ paths in $H(\cP^j,\cP^k)$, then there are two $u$-$v$ paths in $F$ and $F$ is not a forest, a contradiction. Therefore, we have the following observation:

\begin{observation} \label{obs:bij}
Let $u, v \in X$. There is no $u$-$v$ path in $F$ if and only if there is no $u$-$v$ path in $H(\cP^j,\cP^k)$. If there is a $u$-$v$ path in $F$, then there is a bijection between the unique $u$-$v$ path in $F$ and the unique $u$-$v$ path in $H(\cP^j,\cP^k)$.
\end{observation}

This implies that if $u, v \in X$ and there is a $u$-$v$ path in $H(\cP^j,\cP^k)$, then this path is unique.
Next, suppose that there is a cycle in $H(\cP^j,\cP^k)$. Then this cycle has even length in $H(\cP^j,\cP^k)$ since $H(\cP^j,\cP^k)$ is bipartite with $X$ on one side and vertices for every part in $\cP^j$ or $\cP^k$ on the other side. Furthermore, the cycle has length at least four and alternates vertices from $X$ and from parts in $\cP^j$ or $\cP^k$. This indicates that there must be two vertices $u,v \in X$ in the cycle, which contradicts the fact that if there is a $u$-$v$ path in $H(\cP^j,\cP^k)$, then such a path must be unique.
\end{proof}

\begin{claim}
The partition $\mathcal{Q}(\cP^j, \cP^k)$ satisfies \ref{J3}.
\end{claim}
\begin{proof}
Let $\mathcal{Q'}$ be a partition of $V(H(\cP^j,\cP^k))$ where vertices in the same connected component of $H(\cP^j,\cP^k)$ belong to the same part of $\mathcal{Q'}$. We recall that $\mathcal{Q}(\cP^j, \cP^k)$ is the partition of $X$ obtained from $\mathcal{Q}'$ by restricting each part of $\mathcal{Q}'$ to elements in $X$. That is, $\mathcal{Q}(\cP^j, \cP^k) = \{Q \cap X \mid Q \in \mathcal{Q'} \}$.

We prove \ref{J3} by showing that the following statements are equivalent for $u,v \in X$: 
\begin{enumerate} [label=(P\arabic*)]
\item \label{P1} $u$ and $v$ are in the same part of $\cP$,
\item \label{P2} there is a unique $u$-$v$ path in $F$,
\item \label{P3} there is a unique $u$-$v$ path in $H(\cP^j,\cP^k)$, and
\item \label{P4} $u$ and $v$ are in the same part of $\mathcal{Q}(\cP^j,\cP^k)$.
\end{enumerate}

\ref{P1}$\iff$\ref{P2}: This is by the definition of $\cP$. The vertices $u$ and $v$ are in the same part of $\cP$ if and only if they are in the same connected component of $F$. Since $F$ is a forest, the vertices $u$ and $v$ are in the same connected component of $F$ if and only if there is a unique $u$-$v$ path in $F$.

\ref{P2}$\iff$\ref{P3}: This is by Observation~\ref{obs:bij}.

\ref{P3}$\iff$\ref{P4}: Since $H(\cP^j,\cP^k)$ is a forest, there is a unique $u$-$v$ path in $H(\cP^j,\cP^k)$ if and only if $u$ and $v$ are in the same connected component of $H(\cP^j,\cP^k)$. By the definition of $\mathcal{Q}(\cP^j,\cP^k)$, $u$ and $v$ are in the same connected component of $H(\cP^j,\cP^k)$ if and only if they are in the same part of $\mathcal{Q}(\cP^j,\cP^k)$.

From \ref{P1}$\iff$\ref{P4}, we know that $\mathcal{Q}(\cP^j,\cP^k) = \cP$.
\end{proof}

For \ref{J4}, we set $D^j(v) := \deg_{F_j}(v)$ and $D^k(v) := \deg_{F_k}(v)$ for $v \in X$. By the edge disjointness of $F_j$ and $F_k$, we have $D^j(v) + D^k(v) = \deg_{F_j}(v) + \deg_{F_k}(v) = \deg_F(v) = D(v)$.

Now we show \ref{J1}.

\begin{claim}
$F_j$ is feasible for $\mbt_j(X,\cP^j,D^j)$ and $F_k$ is feasible for $\mbt_k(X,\cP^k,D^k)$.
\end{claim}
\begin{proof}
We will show that $F_j$ is feasible for $\mbt_j(X,\cP^j,D^j)$. The proof for $F_k$ being feasible for $\mbt_k(X,\cP^k,D^k)$ would follow by the same argument.

We need to show that properties \ref{DP1}, \ref{DP2}, \ref{DP3}, \ref{DP4}, and \ref{DP5} hold. \ref{DP1} holds because $F_j:=(V(F) \cap V_j ,E(F)\cap E_j)$ is a subgraph of $G_j$. \ref{DP2} holds since $s' \in X = X^j$ where $X^j := V(F_j) \cap X_j$. This is obtained from the fact that $F$ is feasible for $\mbt_i(X,\cP,D)$, which implies $s' \in X$, and $X^j = X$ by Claim~\ref{cl:sameX}. \ref{DP3} holds because of Claim~\ref{cl:sameX}. \ref{DP4} holds by the definition of $\cP^j$: vertices of $X$ in the same connected component of $F_j$ will be in the same part of $\cP^j$. \ref{DP5} holds by the definition $D^j(v) := \deg_{F_j}(v)$ for every $v \in X$.
\end{proof}

We have shown that there exist partitions $\cP^j$, $\cP^k$ of $X$, degree requirements $D^j, D^k: X\rightarrow \{0,1,2,3\}$, and a decomposition of the forest $F$ into $F^j$ and $F^k$ such that \ref{J2}, \ref{J3}, \ref{J4}, and \ref{J1} hold. Therefore, $\mbt_i(X,\cP,D) \leq \maxx_{\cP^j,\cP^k,D^j,D^k}{\mbt_j(X,\cP^j,D^j)+\mbt_k(X,\cP^k,D^k)}$.\\

Next, we show that LHS $\geq$ RHS. Given input $(X,\cP,D)$ at tree node $i$, suppose that we have partitions $\cP^j$ and $\cP^k$ of $X$, and degree requirements $D^j, D^k: X\rightarrow \{0,1,2,3\}$ such that
\begin{enumerate}
\item the merged graph $H(\cP^j,\cP^k)$ is a forest,
\item the partition $\mathcal{Q}(\cP^j,\cP^k)$ is exactly $\cP$,
\item $D^j(v) + D^k(v) = D(v)$ for $v \in X$, and
\item there are feasible forests $F_j$ for $\mbt_j(X,\cP^j,D^j)$ and $F_k$ for $\mbt_k(X,\cP^k,D^k)$. 
\end{enumerate}

We show that $F':=(V(F_j) \cup V(F_k), E(F_j) \cup E(F_k))$ is a feasible forest for $\mbt_i(X,\cP,D)$ with $\mbt_j(X,\cP^j,D^j) + \mbt_k(X,\cP^k,D^k)$ edges. This consists of showing the following: (1) $|E(F')| = \mbt_j(X,\cP^j,D^j) + \mbt_k(X,\cP^k,D^k)$, (2) $F'$ is a forest, and (3) $F'$ is feasible for $\mbt_i(X,\cP,D)$.

For the first part, we note that $F_j$ and $F_k$ are subgraphs of $G_j$ and $G_k$, respectively. By Lemma~\ref{lem:disjoint}, we have $E(G_j) \cap E(G_k) = \emptyset$ which implies $E(F_j) \cap E(F_k) = \emptyset$ and $|E(F')| = |E(F_j)| + |E(F_k)| = \mbt_j(X,\cP^j,D^j) + \mbt_k(X,\cP^k,D^k)$.

We now prove that $F'$ is a forest. Similar to the argument while proving LHS $\leq$ RHS, we note that Observation~\ref{obs:path} still holds.

\begin{claim}
$F'$ is a forest.
\end{claim}
\begin{proof}
The proof consists of showing two parts: (1) if $u, v \in X$ and there is a $u$-$v$ path in $F'$, then this path is unique, and (2) there is no cycle in $F'$.

Let $u,v \in X$. Since $E(F_j) \cap E(F_k) = \emptyset$ and $(V(F_j) \setminus X_i) \cap (V(F_k) \setminus X_i) = \emptyset$, if there is a $u$-$v$ path in $F'$, it must be one of the following:
\begin{enumerate}
\item $u - F_j - w_1 - F_k - w_2 - ... - w_{n-1} - F_k - w_n - F_j - v$,
\item $u - F_j - w_1 - F_k - w_2 - ... - w_{n-1} - F_j - w_n - F_k - v$,
\item $u - F_k - w_1 - F_j - w_2 - ... - w_{n-1} - F_k - w_n - F_j - v$, or
\item $u - F_k - w_1 - F_j - w_2 - ... - w_{n-1} - F_j - w_n - F_k - v$
\end{enumerate}
where $w_1, ..., w_n \in X \setminus \{u,v\}$. By Observation~\ref{obs:path}, if the $u$-$v$ path in $F'$ is the first case, then there is a $u$-$v$ path $u - H_j - w_1 - H_k - w_2 - ... - w_{n-1} - H_k - w_n - H_j - v$ in $H(\cP^j,\cP^k)$. The argument is similar for the other three cases. We note that the mapping of a $u$-$v$ path in $F'$ to a $u$-$v$ path in $H(\cP^j,\cP^k)$ is a bijection. If there are two $u$-$v$ paths in $F'$, then there are two $u$-$v$ paths in $H(\cP^j,\cP^k)$ and $H(\cP^j,\cP^k)$ is not a forest, a contradiction. Therefore, we have the following observation:

\begin{observation} \label{obs:bij2}
Let $u, v \in X$. There is no $u$-$v$ path in $H(\cP^j,\cP^k)$ if and only if there is no $u$-$v$ path in $F'$. If there is a $u$-$v$ path in $H(\cP^j,\cP^k)$, then there is a bijection between the unique $u$-$v$ path in $H(\cP^j,\cP^k)$ and the unique $u$-$v$ path in $F'$.
\end{observation}

This implies that if $u, v \in X$ and there is a $u$-$v$ path in $F'$, then this path is unique. Now, if $F'$ has a cycle, then this cycle must have two vertices $u,v \in X$ because $(V(F_j) \setminus X_i) \cap (V(F_k) \setminus X_i) = \emptyset$. This implies that there are two $u$-$v$ paths in $F'$, a contradiction.
\end{proof}

Finally, we show that $F'$ is feasible for $\mbt_i(X,\cP,D)$.

\begin{claim}
The forest $F'$ is feasible for $\mbt_i(X,\cP,D)$.
\end{claim}
\begin{proof}
We need to show \ref{DP1}, \ref{DP2}, \ref{DP3}, \ref{DP4}, and \ref{DP5}. \ref{DP1} holds because $F':=(V(F_j) \cup V(F_k), E(F_j) \cup E(F_k))$ is a subgraph of $G_i = (V(G_j) \cup V(G_k), E(G_j) \cup E(G_k))$. \ref{DP2} holds because we know $s' \in X$ from the feasibility of $\mbt_j(X,\cP^j,D^j)$ (or $\mbt_k(X,\cP^k,D^k)$). \ref{DP3} holds because $X_i = X_j = X_k$ and $X = X_i \cap V(F')$. To show $X = X_i \cap V(F')$, first $X = X_j \cap V(F_j) \subseteq X_i \cap V(F')$ and $X_i \cap V(F') = X_i \cap (V(F_j) \cup V(F_k)) = X_i \cap (X \cup (V(F_j) \setminus X) \cup (V(F_k) \setminus X)) = X_i \cap X \subseteq X$. \ref{DP5} holds by edge disjointness of $F_j$ and $F_k$ and the fact that $D(v) = D^j(v) + D^k(v)$ for every $v \in X$.

Now we prove \ref{DP4}, namely for $u,v \in X$, if $u$ and $v$ are in the same connected component of $F'$, then $u$ and $v$ are in the same part of $\cP$. We show that the following statements are equivalent for $u,v \in X$,
\begin{enumerate} [label=(P\arabic*$'$)]
\item \label{P1'} $u$ and $v$ are in the same connected component of $F'$,
\item \label{P2'} there is a unique $u$-$v$ path in $F'$,
\item \label{P3'} there is a unique $u$-$v$ path in $H(\cP^j,\cP^k)$, and
\item \label{P4'} $u$ and $v$ are in the same part of $\mathcal{Q}(\cP^j,\cP^k) = \cP$.
\end{enumerate}

\ref{P1'} $\iff$ \ref{P2'}: Since $F'$ is a forest, $u$ and $v$ are in the same connected component of $F'$ if and only if there is a unique $u$-$v$ path in $F'$.

\ref{P2'} $\iff$ \ref{P3'}: This is by Observation~\ref{obs:bij2}.

\ref{P3'} $\iff$ \ref{P4'}: This is by the definition of $\mathcal{Q}(\cP^j,\cP^k)$. Vertices $u$ and $v$ are in the same connected component of $H(\cP^j,\cP^k)$ if and only if they are in the same part of $\mathcal{Q}(\cP^j,\cP^k)$. Since $H(\cP^j,\cP^k)$ is a forest, there is a unique $u$-$v$ path in $H(\cP^j,\cP^k)$ if and only if $u$ and $v$ are in the same part of $\mathcal{Q}(\cP^j,\cP^k)$.
\end{proof}

Since $F'$ is a feasible forest with $\mbt_j(X,\cP^j,D^j) + \mbt_k(X,\cP^k,D^k)$ edges for $\mbt_i(X,\cP,D)$, we have $\mbt_i(X,\cP,D) \geq \maxx_{\cP^j,\cP^k,D^j,D^k}{\mbt_j(X,\cP^j,D^j)+\mbt_k(X,\cP^k,D^k)}$.

Next, suppose that $\mbt_i(X,\cP,D)$ is infeasible. For every $\cP^j$ and $\cP^k$ such that $H(\cP^j,\cP^k)$ is a forest, $\mathcal{Q}(\cP^j,\cP^k) = \cP$ is the resulting partition on $X$, and for every $D^j,D^k:X\rightarrow \{0,1,2,3\}$ such that $D(v) = D^j(v) + D^k(v)$ for every $v \in X$, either $\mbt_j(X,\cP^j,D^j)=-\infty$ or $\mbt_k(X,\cP^k,D^k)=-\infty$, otherwise we can find a feasible solution and this contradicts the assumption that $\mbt_i(X,\cP,D)$ is infeasible. If such $\cP^j,\cP^k,D^j$, and $D^k$ do not exist, then $\mbt_i(X,\cP,D)=-\infty$.
\end{proof}

\bibliography{references.bib}
\bibliographystyle{alpha}
\end{document}